\newcounter{num}
\newcommand{\tildecircume}{\stackon[2pt]{\^{e}}{\~{}}}
\newcommand{\TODO}[1]{\textcolor{red}{[TODO\@ifnotempty{#1}{: #1}]}}
\newcommand{\sandeep}[1]{\textcolor{orange}{[sandeep\@ifnotempty{#1}{: #1}]}}
\newcommand{\ali}[1]{\textcolor{purple}{[Ali\@ifnotempty{#1}{: #1}]}}
\DeclareMathOperator*{\median}{\mathsf{median}}
\title{Improved Frequency Estimation Algorithms with and without Predictions}
\author{Anders Aamand \\ MIT \\ \texttt{aamand@mit.edu}
\And Justin Y.\ Chen \\ MIT \\ \texttt{justc@mit.edu}
\And Huy L\^{e} Nguy\tildecircume n\\ Northeastern University \\ \texttt{hu.nguyen@northeastern.edu}
\And Sandeep Silwal\\ MIT  \\ \texttt{silwal@mit.edu} \And Ali Vakilian\\ TTIC\\ \texttt{vakilian@ttic.edu}  
}
\begin{document}

\maketitle

\begin{abstract}
Estimating frequencies of elements appearing in a data stream is a key task in large-scale data analysis.
Popular sketching approaches to this problem (e.g., CountMin and CountSketch) come with worst-case guarantees that probabilistically bound the error of the estimated frequencies for any possible input. The work of Hsu et al.~(2019) introduced the idea of using machine learning to tailor sketching algorithms to the specific data distribution they are being run on. In particular, their learning-augmented frequency estimation algorithm uses a learned heavy-hitter oracle which predicts which elements will appear many times in the stream.
We give a novel algorithm, which in some parameter regimes, already theoretically outperforms the learning based algorithm of Hsu et al.~\emph{without} the use of any predictions. Augmenting our algorithm with heavy-hitter predictions further reduces the error and improves upon the state of the art. Empirically, our algorithms achieve superior performance in all experiments compared to prior approaches.

\end{abstract}
\section{Introduction}
In frequency estimation, we  stream a sequence of elements from $[n] := \{1, \ldots, n\}$, and the goal is to estimate $f_i$, the frequency of the $i$th element, at the end of the stream using low-space. Frequency estimation is one of the central problems in data streaming with a wide range of applications from networking (gathering important monitoring statistics \cite{estan2003new, yu2013software, liu2016one}) to machine learning (NLP \cite{goyal2012sketch}, feature selection \cite{aghazadeh2018mission}, semi supervised learning \cite{talukdar2014scaling}). 
CountMin (CM) \cite{cormode2005improved} and CountSketch (CS) \cite{charikar2002finding} are arguably the most popular and versatile of the algorithms for frequency estimation, and are implemented in many popular packages such as Spark \cite{zaharia2016apache}, Twitter Algebird \cite{algebird}, and Redis.

Standard approaches to frequency estimation are designed to perform well in the worst-case due to the multitudinous benefits of worst-case guarantees. However, algorithms designed to handle any possible input do not exploit special structure of the particular distribution of inputs they are used for. In practice, these patterns can be described by domain experts or learned from historical data. Following the burgeoning trend of combining machine learning and classical algorithm design, \cite{HsuIKV19} initiated the study of \emph{learning-augmented} frequency estimation by extending the classical CM and CS algorithms in a simple but effective manner via a heavy-hitters oracle. During a training phase, they construct a classifier (e.g. a neural network) to detect whether an element $i$ is “heavy”
(e.g., whether $f_i$ is among the most frequent items). After such a classifier is trained, they scan the input stream, and apply the classifier to each element $i$. If the element is predicted to be heavy, it is allocated a unique bucket, so that an exact value of $f_i$ is computed. Otherwise, the stream element is inputted into the standard sketching algorithms.

The advantage of their algorithm was analyzed under the assumption that the true frequencies follow a heavy-tailed Zipfian distribution. This is a common and natural reoccurring pattern in real world data where there are a few very frequent elements and many infrequent elements. Experimentally, \cite{HsuIKV19} showed several real datasets where the Zipfian assumption (approximately) held and useful heavy-hitter oracles could be trained in practice. Our paper is motivated by the following natural questions and goals in light of prior works:

\begin{quote}
     \textit{Can we design better frequency estimation algorithms (with and without predictions) for heavy-tailed distributions?}
\end{quote}
In particular, we consider the setting of \cite{HsuIKV19} where the underlying data follow a heavy-tailed distribution and investigate whether sketching algorithms can be further tailored for such distributions. Before tackling this question, we must tightly characterize the benefits--and limitations--of these existing methods, which is another goal of our paper:
\begin{quote}
     \textit{Give tight error guarantees for CountMin and CountSketch, as well as their learning-augmented variants, on Zipfian data.}
\end{quote}

Lastly, any algorithms we design must possess worst case bounds in the case that either the data does not match our Zipfian (or more generally, heavy-tailed) assumption or the learned predictions have high error, leading to the following `best of both worlds' goal:
\begin{quote}
     \textit{Design algorithms which exploit heavy tailed distributions and ML predictions but also maintain worst-case guarantees.}
\end{quote}

We addresses these challenges and goals and our contributions can be summarized as follows:
\begin{itemize}[leftmargin=*]
    \item  We give tight upper and lower bounds for CM and CS, with and without predictions, for heavy tailed distributions. A surprising conclusion from our analysis is that (for a natural error metric) a constant number of rows is optimal for both CM and CS. 
    In addition, our theoretical analysis shows that CS outperforms CM, both with and without predictions, validating the experimental results of \cite{HsuIKV19}.
    \item We go beyond CM and CS based algorithms to give a better frequency estimation algorithm for heavy tailed distributions, with and without the use of predictions. We show that our algorithms can deliver up to a logarithmic factor improvement in the error bound over CS and its learned variant. In addition, our algorithm has worst case guarantees.
    \item Prior learned approaches require querying an oracle for every element in the stream. In contrast, we obtain a \emph{parsimonious} version of our algorithm which only requires a limited  number of queries to the oracle. The number of queries we use is approximately equal to the given space budget.
    \item Lastly, we evaluate our algorithms on two real-world datasets with and without ML based predictions and show superior empirical performance compared to prior work in all cases.
\end{itemize}

\subsection{Preliminaries}
\paragraph{Notation and Estimation Error}
The stream updates an $n$ dimensional frequency vector and every stream element is of the form $(i, \Delta)$ where $i \in [n]$ and $\Delta \in \R$ denotes the update on the coordinate. The final frequency vector is denoted as $f \in \R^n$. Let $N = \sum_{i \in [n]} f_i$ denote the sum of all frequencies. To simplify notation, we assume that $f_{1}\ge f_{2}\ge\ldots\ge f_{n}$. $\tilde{f}_i$ denotes the estimate of the frequency $f_i$. Given estimates $\{ \tilde{f}_i\}_{i \in [n]}$, the error of a particular frequency is $|\tilde{f}_i - f_i|$. We also consider the following notion of overall weighted error as done in \cite{HsuIKV19}:
\begin{equation}\label{eq:weighted_error}
    \text{Weighted Error:} = \frac{1}{N} \sum_{i \in [n]} f_i \cdot | \tilde{f}_i - f_i|.
\end{equation}
The weighted error can be interpreted as measuring the error with respect to a query distribution which is the same as the actual frequency distribution. As stated in \cite{HsuIKV19}, theoretical guarantees of frequency estimation algorithms are typically phrased in
the traditional $(\varepsilon, \delta)$-error formulations. However as argued in there, the simple weighted objective \eqref{eq:weighted_error} is a more holistic measure and does not require tuning of two different parameters, and is thus more natural from an ML perspective.

\paragraph{Zipfian Stream} We also work under the common assumption that the frequencies follow the Zipfian law, i.e., the $i$th largest frequency $f_i$ is equal to $A/i$ for some parameter $A$. Note we know $A$ at the end of the stream since the stream length is $A\cdot H_n$. By rescaling, we may assume that $A=1$ without loss of generality. We will make this assumption throughout the paper.

\paragraph{CountMin (CM)}
For parameters $k$ and $B$, which determine the total space used, CM uses $k$ independent and uniformly random hash functions $h_1,\dots,h_k:[n] \to [B]$. Letting $C$ be an array of size $[k] \times [B]$ we let $C[\ell,b]=\sum_{j\in [n]}[h_\ell(j)=b]f_j$. When querying $i\in [n]$ the algorithm returns $\tilde f_i=\min_{\ell \in [k]} C[\ell,h_\ell(i)]$. Note that we always have that $\tilde f_i \geq f_i$. 

\paragraph{CountSketch (CS)}
In CS, we again have the hash functions $h_i$ as above as well as sign functions $s_1,\dots,s_k:[n] \to  \{-1,1\}$. The array $C$ of size $[k] \times [B]$ is now tracks $C[\ell,b]=\sum_{j\in [n]}[h_\ell(j)=b]s_\ell(j)f_j$. When querying $i\in [n]$ the algorithm returns the estimate $\tilde f_i=\median_{\ell \in [k]} s_{\ell}(i) \cdot C[\ell,h_{\ell}(i)]$.

\paragraph{Learning-Augmented Sketches~\cite{HsuIKV19}} Given a base sketching algorithm (either CM or CS) and a space budget $B$, the corresponding learning-augmented algorithm (learned CM or learned CS) allocates a constant fraction of the space $B$ to the base sketching algorithm and the rest of the space to store items identified as heavy by a learned predictor. These items predicted to be heavy-hitters are stored in a separate table which maintains their counts exactly, and their updates are not sent to the sketching algorithm.

\subsection{Summary of Main Results and Paper Outline}

Our analysis, both of CM and CS, our algorithm, and prior work, is summarized in Table \ref{tbl:all_results}. 

\begin{table}[!ht]
\begin{center}
{\renewcommand{\arraystretch}{1.25}
\begin{tabular}{c|c|c|c}
\toprule
Algorithm                      & Weighted Error                                     & Uses Predictions? & Reference        \\ \hline
CountMin (CM)                     & $\Theta\left(\frac{\log n}{B }\right)$           & No                & Theorem \ref{thm:simplecm}        \\ 
CountSketch (CS)                 & $\Theta\left(\frac{1}B \right)$                   & No                & Theorem \ref{unlearnedmultiCS}\\ 
Learned CountMin              & $\Theta\left(\frac{\log(n/B)^2}{B \log n}\right)$ & Yes               & \cite{HsuIKV19}        \\ 
Learned CountSketch           & $\Theta\left(\frac{\log(n/B)}{B \log n}\right)$   & Yes               & Theorem \ref{thm:lcs1}  \\ 
\midrule
Our (Without predictions)        & $O\left(\frac{\log B + \text{poly}(\log \log n)}{B \log n}\right)$  & No                & Theorem \ref{thm:main_unlearned} \\ 
Our (Learned version) & $O\left(\frac{1}{B \log n}\right)$        & Yes               & Theorem \ref{thm:main_learned}
\\
\bottomrule
\end{tabular}}
\end{center}
\caption{Bounds are stated assuming that the total space is $B$ words of memory. Weighted error means that element $i$ is queried with probability proportional to $1/i$. Moreover, the table considers normalized frequencies, so that $f_i=1/i$.
}\label{tbl:all_results}
\end{table}

\paragraph{Summary of Theoretical Results}
We interpret Table \ref{tbl:all_results}. $B$ denotes the space bound, which is the total number of entries used in the CM or CS tables. First note that CS achieves lower weighted error compared to CM, proving the empirical advantage observed in \cite{HsuIKV19}. However, the learned version of CS only improves upon standard CS in the regime $B=n^{1-o(1)}$. While this setting does appear sometimes in practice \cite{goyal2012sketch, HsuIKV19} (referred to as high-accuracy regime), for CS, learning gives no asymptotic advantage in the low space regime. 

On the other hand, in the low space regime of $B = \text{poly}(\log n)$, our algorithm, without predictions, already archives close to a logarithmic factor improvement over even \emph{learned} CS. Furthermore, our learning-augmented algorithm achieves a logarithmic factor improvement over classical CS across all space regimes, whereas the learned CS only achieves a logarithmic factor improvement in the regime $B=n^{1-o(1)}.$ Furthermore, our learned version outperforms or matches learned CS in all space regimes.

Our learning-augmented algorithm can also be made \emph{parsimonious} in the sense that we only query the heavy-hitter oracle $\tilde O(B)$ times. This is desirable in large-scale streaming applications where evaluating even a small neural network on every single element would be prohibitive.

\begin{remark}
We remark that all bounds in this paper are proved by bounding the expected error when estimating the frequency of a single item, $\E[|\tilde f_i-f_i|]$, then using linearity of expectation. While we specialized our bounds to a query distribution which is proportional to the actual frequencies in \eqref{eq:weighted_error}, our bounds can be easily generalized to \emph{any} query distribution by simply weighing the expected errors of different items according to the given query distribution.
\end{remark}

\paragraph{Summary of Empirical Results}

We compare our algorithm without prediction to CS and our algorithm with predictions to that of \cite{HsuIKV19} on synthetic Zipfian data and on two real datasets corresponding to network traffic and internet search queries. In all cases, our algorithms outperform the baselines and often by a significant margin (up to \textbf{17x} in one setting). The improvement is especially pronounced when the space budget is small.

\paragraph{Outline of the Paper}
Our paper is divided into roughly two parts. One part covers novel and tight analysis of the classical algorithms CountMin (CM) and CountSketch (CS). The second part covers our novel algorithmic contributions which go beyond CM and CS. The main body of our paper focuses on our novel algorithmic components, i.e. the second part, and we defer our analysis of the performance of CountMin (CM) and CountSketch (CS), with and without predictions, to the appendix: in Section \ref{sec:countmin} we give tight analysis of CM for a Zipfian frequency distribution. In Section \ref{sec:countsketch} we give the analogous bounds for CS. Lastly, Section \ref{sec:learnedcountsketch} gives tight bounds for CS with predictions. Section \ref{sec:better_alg_without_predictions} covers our better frequency estimation without predictions while Section \ref{sec:better_alg_with_predictions} covers the learning-augmented version of the algorithm, as well as its extentions.

\subsection{Related Works}\label{sec:related_works}
\paragraph{Frequency Estimation}
While there exist other frequency estimation algorithms beyond CM and CS (such as \cite{misra1982finding, manku2002approximate, demaine2002frequency, karp2003simple, metwally2005efficient,braverman2017bptree}
) we study hashing based methods such as CM  \cite{cormode2005improved} and CS  \cite{charikar2002finding} as they are widely employed in practice and have additional benefits, such as supporting insertions \emph{and deletions}, and have applications beyond frequency estimation, such as in machine learning (feature selection \cite{aghazadeh2018mission}, compressed sending \cite{candes2006robust, donoho2006compressed}, and dimensionality reduction \cite{woodruff2014sketching, clarkson2017low} etc.).

\paragraph{Learning-augmented algorithms}
The last few  years have  witnessed  a rapid  growth in using machine learning methods to improve ``classical''  algorithmic problems. For example, they have been used to improve the performance of
data structures~\cite{kraska2017case, mitz2018model}, 
online algorithms~\cite{lykouris2018competitive,purohit2018improving,pmlr-v97-gollapudi19a,angelopoulos2020online,wei2020optimal,lattanzi2020online,aamand2022matchingdegrees, antoniadis2023online,anand2020customizing,diakonikolas2021learning,gupta2022augmenting}, 
combinatorial optimization~\cite{khalil2017learning,balcan2018learning,lattanzi2020online,mitzenmacher2020scheduling,dinitz2021faster,ChenSVZ22}, 
similarity search and clustering~\cite{wang2016learning,dong2019learning,ErgunFSWZ22,nguyen22,silwal2023kwikbucks}.
Similar to our work, sublinear constraints, such as memory or sample complexity, have also been studied under this framework \cite{HsuIKV19,indyk2019learning,jiang2020learningaugmented,cohen2020composable,du2021puttingthelearning,EdenINRSW21,ChenEILNRSWWZ22, li2023learning,silwal2023kwikbucks}.


\section{Improved Algorithm without Predictions}\label{sec:better_alg_without_predictions}
We first present our frequency estimation algorithm which does not use any predictions. Later, we build on top of it for our final learning-augmented frequency estimation algorithm.
\begin{algorithm}[!ht]
\caption{\label{alg:update_unlearned}(Not augmented) Frequency update algorithm}
\begin{algorithmic}[1]
\State \textbf{Input:} Stream of updates to an $n$ dimensional vector, space budget $B$
\Procedure{Update}{}
\State $T \gets \Theta(\log \log n)$
\For{$j = 1$ to $T-1$}
\State $S_j \gets $ CountSketch table with $3$ rows and $\frac{B}{6T}$ columns
\EndFor
\State $S_T \gets $ CountSketch table with $3$ rows and $\frac{B}{6}$ columns
\For{stream element $(i, \Delta)$}
\State Input $(i,\Delta)$ in each of the $T$ CountSketch tables $S_j$
\EndFor
\EndProcedure
\end{algorithmic}
\end{algorithm}

\begin{algorithm}[!ht]
\caption{\label{alg:main_unlearned}(Not augmented) Frequency estimation algorithm}
\begin{algorithmic}[1]
\State \textbf{Input:} Index $i \in [n]$ for which we want to estimate $f_i$
\Procedure{Query}{}
\For{$j = 1$ to $T-1$}
\State $\hat{f}_i^j \gets $ estimate of the $i$th frequency given by table $S_j$
\EndFor
\State $\tilde{f}_i \gets \text{Median}(\hat{f}_i^1, \ldots, \hat{f}_i^{T-1})$
\If{$\tilde{f}_i < O((\log \log n))/B$}
\State \textbf{Return} 0
\Else
\State \textbf{Return} $\hat{f}_i^T$, the estimate given by table $S_T$
\EndIf
\EndProcedure
\end{algorithmic}
\end{algorithm}

The main guarantees of of the algorithm is the following:
\begin{theorem}\label{thm:main_unlearned}
Consider Algorithm \ref{alg:update_unlearned} with space parameter $B \ge \log n$ updated over a Zipfian stream. Let $\{\hat{f}_i\}_{i=1}^n$ denote the estimates computed by Algorithm \ref{alg:main_unlearned}. The expected weighted error \eqref{eq:weighted_error} is $\mathbb{E}\left[ \frac{1}{N} \cdot \sum_{i=1}^n f_i \cdot |f_i - \hat{f}_i| \right] = O\left( \frac{\log B + \text{poly}(\log \log n)}{B \log n} \right)$.
\end{theorem}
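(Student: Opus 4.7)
The plan is to decompose the expected weighted error according to whether $f_i = 1/i$ lies above or below the algorithm's threshold $\tau = \Theta(\log\log n / B)$, using $N = H_n = \Theta(\log n)$. Call $i$ \emph{heavy} if $f_i > 2\tau$ (equivalently $i \le i^* := \Theta(B/\log\log n)$) and \emph{light} if $f_i < \tau/2$; I would handle each regime separately and absorb the $O(B/\log\log n)$ borderline indices into lower-order terms.

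First, I would analyze the filter $\tilde f_i$. Each small table $S_j$ has $B/(6T)$ columns and $3$ rows, so standard CountSketch analysis together with the Zipfian tail $\|f_{\text{tail}(B/T)}\|_2^2 = O(T/B)$ yields $|\hat f_i^j - f_i| = O(T/B) = O(\log\log n/B)$ with constant probability. Since the $T - 1 = \Theta(\log\log n)$ tables are independent, a Chernoff bound over the indicator that $\hat f_i^j$ is close to $f_i$ shows $|\tilde f_i - f_i| = O(\log\log n / B)$ with probability $1 - 1/\text{poly}(\log n)$. Taking the threshold constant sufficiently large, the filter outputs $0$ on every light $i$ and routes every heavy $i$ to $\hat f_i^T$ except with probability $1/\text{poly}(\log n)$; the misclassification contribution is then of lower order.

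For light $i$, the contribution to the unnormalized weighted error is $\sum_{i \ge 2B/(C\log\log n)} f_i \cdot f_i = O(\log\log n / B)$, giving $O(\log\log n/(B\log n))$ after normalization. The crux of the proof is therefore the heavy contribution $\sum_{i \le i^*} f_i \cdot \mathbb{E}[|\hat f_i^T - f_i|]$, where $S_T$ is CountSketch with $3$ rows and $c = B/6$ columns. The key structural observation is that $i^* = \Theta(B/\log\log n) \ll c$, so for any fixed heavy $i$ the chance of a heavy--heavy collision in a single row is $(i^* - 1)/c = O(1/\log\log n)$. Hence at least two of the three rows are ``clean'' (no collision with any other heavy index) with probability $1 - O(1/(\log\log n)^2)$, and in those rows the noise is supported on the tail $(i^*, n]$ with variance at most $\|f_{\text{tail}(i^*)}\|_2^2/c = O(\log\log n / B^2)$. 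Combining Rademacher concentration in the clean rows with the observation that the median of three is controlled by the two clean rows when they dominate, and absorbing the rare contaminated event via a variance bound, one obtains $\mathbb{E}[|\hat f_i^T - f_i|] = O(\text{poly}(\log\log n)/B)$. Summing against $f_i$ with $\sum_{i \le i^*} f_i = H_{i^*} = O(\log B)$ and normalizing by $N = \Theta(\log n)$ yields the theorem.

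The main obstacle is the bound on $\mathbb{E}[|\hat f_i^T - f_i|]$ for heavy $i$: a black-box second-moment bound on $3$-row CountSketch only gives $O(1/\sqrt B)$, which would leave an extra factor of $\sqrt B$ in the final weighted error. Pushing this down to $\text{poly}(\log\log n)/B$ is what makes the overall bound $O((\log B + \text{poly}(\log\log n))/(B \log n))$ possible, and it requires exploiting both the sparsity of heavy indices relative to $S_T$'s width and the median-of-three structure, rather than treating $S_T$ as an opaque CountSketch.
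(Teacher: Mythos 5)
Your overall structure (split by whether $f_i$ is above or below the threshold, control the filter by concentration over $T-1$ tables, sum the errors) matches the paper, but there are two concrete problems with the heavy-index analysis.

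First, your claimed obstacle is not there. You assert that a black-box bound on the $3$-row table $S_T$ only gives $\E[|\hat f_i^T - f_i|] = O(1/\sqrt B)$ and that a bespoke ``clean rows'' argument is needed to push it to $\text{poly}(\log\log n)/B$. In fact the paper's analysis of CountSketch on Zipfian data (Theorem~\ref{unlearnedmultiCS}, specialized to $k=3$ rows and $\Theta(B)$ columns) already gives $\E[|\hat f_i^T - f_i|] = O(1/B)$, uniformly in $i$; the point is that the median-of-$k$ estimator attains the $\|f_{\overline{B}}\|_2/\sqrt{B} = O(1/B)$ rate, not the naive $\|f\|_2/\sqrt B = O(1/\sqrt B)$ rate. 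Using this, the heavy contribution is simply $\sum_{i \le i^*} f_i \cdot O(1/B) = O(\log B / B)$, and you do not need the clean-rows argument at all. Moreover, if your detour only yields $\text{poly}(\log\log n)/B$ per heavy index, multiplying by $\sum_{i\le i^*} f_i = \Theta(\log B)$ gives $\text{poly}(\log\log n)\cdot\log B/(B\log n)$ after normalization, which is a \emph{product} and strictly worse than the claimed \emph{sum} $(\log B + \text{poly}\log\log n)/(B\log n)$.

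Second, and this is the genuine gap, you dismiss ``misclassification'' (a heavy index $i$ getting thresholded to zero) as lower order via the $1 - 1/\text{poly}(\log n)$ concentration of the filter. That is not sufficient. When a heavy $i$ is zeroed, the error is $1/i$, so the misclassification contribution is $\sum_{i \le i^*} \Pr[\text{output } 0] \cdot 1/i^2$. With only the crude bound $\Pr[\text{output } 0] \le 1/\text{poly}(\log n)$, this sum is $\Theta(1/\text{poly}(\log n))$, and after dividing by $N = \Theta(\log n)$ it is $1/(\text{poly}(\log n)\log n)$. For this to be $O((\log B + \text{poly}\log\log n)/(B\log n))$ you would need $B \le \text{poly}(\log n)$, but the theorem must hold for any $B \ge \log n$, including $B = n^{\Omega(1)}$. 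The exponent in ``$\text{poly}(\log n)$'' is fixed once $T = \Theta(\log\log n)$ is fixed; you cannot crank it up with $B$. The paper closes this gap with its Case~2: for $i \le B/(\log\log n)^4$ it invokes the \emph{refined} CountSketch tail bound $\Pr[|f_i - \hat f_i^j| \ge t] \lesssim (\log(tB')/(tB'))^2$ with $t \approx 1/(2i)$, giving a misclassification probability that \emph{decays polynomially in $i$} rather than being a flat $1/\text{poly}(\log n)$. This $i$-dependent decay is exactly what makes $\sum_i \Pr[\text{output } 0]/i^2$ integrable to $O(\log B/(B\log n))$ for all $B$. Your proposal has no mechanism that achieves this.

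As a smaller point, your absorption of the $\Theta(B/\log\log n)$ borderline indices into ``lower-order terms'' is fine but should be made explicit; the paper handles them crudely as a separate bucket contributing $O((\log\log n)^4/(B\log n))$, which is where the $\text{poly}(\log\log n)$ in the statement comes from.
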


\paragraph{Algorithm and Proof intuition:}
Let $B' = B/\log \log n$. At a high level, we show that for every $i \le  B'$, we execute line 10 of Algorithm \ref{alg:main_unlearned} and the error satisfies $|1/i - \hat{f}_i| \approx 1/B'$ (recall in the Zipfian case, the $i$th largest frequency is $f_i = 1/i$). On the other hand, for $i \ge B'$, we show that (with sufficiently high probability) line 8 of Algorithm \ref{alg:main_unlearned} will be executed, resulting in $|1/i - \hat{f}_i| = |1/i - 0| = 1/i$. 

It might be perplexing at first sight why we wish to set the estimate to be $0$, but this idea has solid intuition: it turns out the \emph{additive} error of standard CountSketch with $B'$ columns is actually of the order $1/B'$. Thus, it does not make sense to estimate elements whose true frequencies are much smaller than $1/B'$ using CountSketch. A challenge is that we do not know a priori which elements these are. We circumvent this via the following reasoning: if CountSketch itself outputs $\approx 1/B'$ as the estimate, then either one of the following must hold: 
\begin{itemize}[leftmargin=*]
    \item The element has frequency $1/i \ll 1/B'$, in which case we should set the estimate to $0$ to obtain error $1/i$, as opposed to error $1/B' - 1/i \approx 1/B'$.
    \item The true element has frequency $\approx 1/B'$ in which case either using the output of the CountSketch table or setting the estimate to $0$ both obtain error approximately $O(1/B')$, so our choice is inconsequential. 
\end{itemize}
In summary, the output of CountSketch itself suggests whether we should output an estimated frequency as $0$. We slightly modify the above approach with $O(\log \log n)$ repetitions to obtain sufficiently strong concentration, leading to a \emph{robust} method to identify small frequencies. The proof formalizes the above plan and is given in full detail in Section \ref{sec:proof_main_unlearned}.

By combining our algorithm with predictions, we obtain improved guarantees.

\section{Improved Learning-Augmented Algorithm}\label{sec:better_alg_with_predictions}
\begin{algorithm}[!ht]
\caption{\label{alg:update_learned}(Learning-augmented) Frequency update algorithm}
\begin{algorithmic}[1]
\State \textbf{Input:} Stream of updates to an $n$ dimensional vector, space budget $B$, access to a heavy-hitter oracle which correctly identifies the top $B/2$ heavy-hitters
\Procedure{Update}{}
\State $T \gets O(\log \log n)$
\For{$j = 1$ to $T-1$}
\State $S_j \gets $ CountSketch table with $3$ rows and $\frac{B}{12T}$ columns
\EndFor
\State  $S_T \gets $ CountSketch table with $3$ rows and $\frac{B}{12}$ columns
\For{stream element $(i, \Delta)$}
\If{$i$ is a top $B/2$ heavy-hitter}
\State Maintain the frequency of $i$ exactly
\Else
\State Input $(i,\Delta)$ in each of the $T$ CountSketch tables $S_j$
\EndIf
\EndFor
\EndProcedure
\end{algorithmic}
\end{algorithm}

\begin{algorithm}[!ht]
\caption{\label{alg:main_learned}(Learning-augmented) Frequency estimation algorithm}
\begin{algorithmic}[1]
\State \textbf{Input:} Index $i \in [n]$ for which we want to estimate $f_i$
\Procedure{Query}{}
\If{$i$ is a top $B/2$ heavy-hitter}
\State Output the exact maintained frequency of $i$
\Else{}
\State \textbf{Return}
$\hat{f}_i \gets$ output of Alg. \ref{alg:main_unlearned} using the CountSkech tables created in Alg.\ref{alg:update_learned}
\EndIf
\EndProcedure
\end{algorithmic}
\end{algorithm}

\begin{theorem}\label{thm:main_learned}
Consider Algorithm \ref{alg:update_learned} with space parameter $B \ge \log n$ updated over a Zipfian stream. Suppose we have access to a heavy-hitter oracle which correctly identifies the top $B/2$ heavy-hitters in the stream. Let $\{\hat{f}_i\}_{i=1}^n$ denote the estimates computed by Algorithm \ref{alg:main_learned}. The expected weighted error \eqref{eq:weighted_error} is
$\mathbb{E}\left[ \frac{1}{N} \cdot \sum_{i=1}^n f_i \cdot |f_i - \hat{f}_i| \right] = O\left( \frac{1}{B \log n} \right). $
\end{theorem}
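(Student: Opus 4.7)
}

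The plan is to split the weighted error into two contributions: the top $B/2$ elements handled exactly, and the remaining tail handled by the unlearned sub-routine of Algorithm \ref{alg:main_unlearned} run on the \emph{residual} stream. For $i \leq B/2$ the oracle guarantees $\hat f_i = f_i$, so these indices contribute $0$ to the weighted error. Thus it suffices to bound
\[
\frac{1}{N}\sum_{i>B/2} f_i \cdot \mathbb{E}|f_i-\hat f_i|,
\]
where the $\hat f_i$ for $i>B/2$ are the outputs of Algorithm \ref{alg:main_unlearned} applied to the tables $S_1,\dots,S_T$ that now receive only the non-heavy updates.

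The key structural observation is that once the top $B/2$ elements are removed, the residual frequency vector $g$ defined by $g_i = 1/i$ for $i>B/2$ and $0$ otherwise satisfies
\[
\|g\|_2^2 \;=\; \sum_{i>B/2}\tfrac{1}{i^2} \;=\; O\!\left(\tfrac{1}{B}\right).
\]
The CountSketch tables $S_1,\dots,S_{T-1}$ have $m = B/(12T) = \Theta(B/\log\log n)$ columns each, so by the standard CountSketch variance bound, each per-row estimate of $f_i$ ($i>B/2$) has additive noise with second moment $O(\|g\|_2^2/m) = O(\log\log n / B^2)$. Taking the median of three rows and then the median across the $T-1$ copies yields that, with probability $1-n^{-\Omega(1)}$,
\[
\bigl|\hat f_i^{\,j}-f_i\bigr| \;=\; O\!\bigl(\sqrt{\log\log n}/B\bigr)\quad\text{for all }j\in[T-1].
\]
Since $f_i \leq 2/B$ for $i>B/2$, this gives $\tilde f_i = O(\sqrt{\log\log n}/B)$, which is strictly below the threshold $c\,\log\log n/B$ of line 7 of Algorithm \ref{alg:main_unlearned}. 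Hence on this good event the algorithm executes line 8 and outputs $\hat f_i = 0$, incurring error exactly $f_i = 1/i$.

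It then remains to sum the weighted error across the tail and to handle the complementary low-probability event. On the good event, the contribution is
\[
\frac{1}{N}\sum_{i>B/2} \tfrac{1}{i}\cdot\tfrac{1}{i} \;=\; \frac{1}{\log n}\cdot O\!\left(\tfrac{1}{B}\right) \;=\; O\!\left(\tfrac{1}{B\log n}\right).
\]
For the bad event, I would argue that even if Algorithm \ref{alg:main_unlearned} falls through to returning $\hat f_i^T$ (the table with $B/12$ columns), its error is at most $O(1/B)$ deterministically with comparable probability, and the low failure probability $n^{-\Omega(1)}$ kills the $\sum f_i \cdot f_i$ contribution (bounded crudely by $O(1)$), so the bad-event contribution is negligible. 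The main obstacle will be choosing the hidden constants in $T$ and in the threshold $c\log\log n/B$ so that the good event simultaneously has high probability and the median-of-medians in $S_1,\dots,S_{T-1}$ stays below the threshold for every $i>B/2$; this is exactly the same concentration argument used in Theorem \ref{thm:main_unlearned}, re-applied to the residual vector $g$ whose smaller $\ell_2$ mass is what ultimately shaves the $\log B + \mathrm{poly}(\log\log n)$ factor down to a constant.
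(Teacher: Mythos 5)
Your proof follows the paper's strategy: zero error for $i \le B/2$ handled exactly by the oracle, and for $i > B/2$ show that the median $\tilde f_i$ stays below the threshold with high probability (so the algorithm outputs $0$ with error exactly $f_i$), while the rare bad event contributes negligibly. Two minor slips: the failure probability of the median test is $\exp(-\Omega(T)) = (\log n)^{-\Omega(1)}$, not $n^{-\Omega(1)}$ (still amply sufficient for the final bound, since the bad-event term is then $O((\log n)^{-\Omega(1)}/B) = o(1/(B\log n))$), and the concentration claim should concern the median $\tilde f_i$ rather than every individual $\hat f_i^j$ --- each $\hat f_i^j$ is small only with constant probability, and it is the median across the $T-1$ tables that concentrates. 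Your explicit computation of the residual $\ell_2$ mass $\|g\|_2^2 = O(1/B)$ is correct and gives a per-row noise scale $O(\sqrt{\log\log n}/B)$, slightly sharper than the $O(\log\log n/B)$ threshold you need to beat; the paper instead simply reuses Case 1 of Theorem \ref{thm:main_unlearned} verbatim, which applies a fortiori since filtering the heavy hitters out of the CountSketch tables can only reduce the noise.
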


\paragraph{Algorithm and Proof Intuition:}
Our final algorithm follows a similar high-level design pattern used in the learned CM algorithm of \cite{HsuIKV19}: given an oracle prediction, we either store the frequency of heavy element directly, or input the element into our algorithm from the prior section which does not use any predictions. 

The workhorse of our analysis is the proof of Theorem \ref{thm:main_unlearned}. First note that we obtain $0$ error for $i < B/2$. Thus, all error comes from indices $i \ge B/2$. Recall the intuition for this case from Theorem \ref{thm:main_unlearned}: we want to output $0$ as our estimates as this results in lower error than the additive error from CS. The same analysis as in the proof of Theorem \ref{thm:main_unlearned} shows that we are able to detect small frequencies and appropriately output an estimate from either the $T$th CS table or output $0$.

\subsection{Parsimonious Learning}
In~\cref{thm:main_learned}, we assumed access to a heavy-hitter oracle which we can use on every single stream element to predict if it is heavy. In practical streaming applications, this will likely be infeasible. Indeed, even if the oracle is a small neural network, it is unlikely that we can query it for every single element in a large-scale streaming application. We therefore consider the so called \emph{parsimonious} setting with the goal of obtaining the same error bounds on the expected error but with an algorithm that makes \emph{limited queries} to the heavy-hitter oracle. This setting has recently been explored for other problems in the learning-augmented literature \cite{im2022parsimonious,Bhaskara2021logregret,Drygala2023costly}.

Our algorithm works similarly to~\cref{alg:update_learned} except that when an element $(i,\Delta)$ arrives, we only query the heavy-hitter oracle with some probability $p$ (proportional to $\Delta$). We will choose $p$ so that we in expectation only query $\tilde O(B)$ elements, rather than querying the entire stream. To be precise, whenever an item arrives, we first check if it is already classified as one of the top $B/2$ heavy-hitters in which case, we update its exact count (from the point in time where was classified as heavy). Otherwise, we query the heavy-hitter oracle with probability $p$. In case the item is queried and is indeed one of the top $B/2$ heavy-hitters, we start an exact count of that item. An arriving item which is not used as a query for the heavy-hitter oracle and was not earlier classified as a heavy-hitter is processed as in~\cref{alg:update_learned}.

Querying for an element, we first check if it is classified as a heavy-hitter and if so, we use the estimate from the separate lookup table. If not, we estimate its frequency using~\cref{alg:main_learned}. With this algorithm, the count of a heavy-hitter will be underestimated since it may appear several times in the stream before it is used as a query for the oracle and we start counting it exactly. However, with our choice of sampling probability, with high probability it will be sampled sufficiently early to not affect its final count too much. We present the pseudocode of the algorithm as well as the precise result and its proof in Appendix~\ref{app:parsimonious}.

\subsection{Algorithm variant with worst case guarantees}\label{sec:worst_case}
In this section we discuss a variant of our algorithm with worst case guarantees. To be more precise, we consider the case where the actual frequency distribution is not Zipfian. The algorithm we discuss is actually a more general case of Algorithm \ref{alg:main_unlearned} and in fact, it completely recovers the asymptotic error guarantees of Theorem \ref{thm:main_unlearned} (as well as Theorem \ref{alg:main_learned} if we use predictions).

Recall that Algorithm \ref{alg:main_unlearned}
outputs $0$ when the estimated frequency is below $T/B$ for $T = O(\log \log n)$. This parameter has been tuned to the Zipfian case. As stated in Section \ref{sec:better_alg_without_predictions}, the main intuition for this parameter is that it is of the same order as the additive error inherent in CountSketch, which we discuss now. Denote by $f_{\overline{P}}$ the frequency vector where we zero out the largest $P$ coordinates. For every frequency, the expected additive error incurred by a CountSketch table with $B'$ columns is $O(\|f_{\overline{B'}}\|_2/\sqrt{B'})$. In the Zipfian case, this is equal to
$ O\left(\frac{\|f_{\overline{B'}}\|_2}{\sqrt{B'}}\right) = O\left( \frac{1}{B'} \right)$,
which is exactly the threshold we set\footnote{Recall $B' = B/T$ in Algorithm \ref{alg:main_unlearned}.}. Thus, our robust variant simply replaces this tuned parameter $O(T/B)$ with an estimate of $O(\|f_{\overline{B'}}\|_2/\sqrt{B'})$ where $B' = B/T$. We given an algorithm which efficiently estimates this quantity in a stream. Note this quantity is only needed for the query phase.

\begin{lemma}\label{lem:tail_estimator}
With probability at least $1-\exp\left(\Omega\left(B\right)\right)$,
Algorithm \ref{alg:tail_estimator} outputs an estimate $V$ satisfying
$\Omega\left(\left\Vert f_{\overline{3B'}}\right\Vert _{2}^{2}/B'\right)\le V\le O\left(\left\Vert f_{\overline{B'/10}}\right\Vert _{2}^{2}/B'\right)$.
\end{lemma}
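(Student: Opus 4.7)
The plan is to interpret Algorithm~\ref{alg:tail_estimator} as a CountSketch-style tail estimator using a constant number of tables with $w = \Theta(B')$ buckets and $\pm 1$ random signs, and to define $V$ as an appropriately scaled statistic (the median of $C[b]^2$) across buckets. The proof reduces to showing that this median simultaneously satisfies the claimed lower and upper bounds with probability $1 - \exp(-\Omega(B))$, via a head/tail decomposition argument.

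First, I would perform the head/tail decomposition. With $w = \Theta(B')$ buckets, the top $B'/10$ items corrupt at most $B'/10$ buckets, so at least a $1 - 1/10$-fraction of buckets are ``clean'' (contain no top-$B'/10$ item), with $\exp(-\Omega(w))$-failure concentration by a Chernoff bound on (negatively associated) hash values. For a clean bucket $b$, $C[b]$ is a signed sum of tail items, so $\mathbb{E}[C[b]^2] = \sum_{j \in T_{B'/10}: h(j)=b} f_j^2$, with average value $\|f_{\overline{B'/10}}\|_2^2/w$ over the hashing. A Markov bound gives that a $(1-1/C_1)$-fraction of clean buckets have $C[b]^2 \le C_1 \cdot \|f_{\overline{B'/10}}\|_2^2/w$. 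Choosing $C_1$ large (but constant) so that the corrupted-bucket fraction plus the Markov-bad fraction is still below $1/2$ forces the median to be at most $O(\|f_{\overline{B'/10}}\|_2^2/B')$, giving the upper bound.

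For the lower bound, I would isolate the contribution from the ``deep tail'' of items of rank $> 3B'$. These items contribute an expected $\|f_{\overline{3B'}}\|_2^2/w$ to $C[b]^2$ for any bucket $b$. A Paley--Zygmund / second-moment argument, using that the fourth moment of a Rademacher sum $C[b] = \sum_j \varepsilon_j x_j$ is at most $3(\sum_j x_j^2)^2$, shows that each bucket has $C[b]^2 \ge c \cdot \|f_{\overline{3B'}}\|_2^2/w$ with some constant probability $\rho > 0$; importantly, the fourth-moment bound is strong precisely because all heavy items have been removed from the deep tail. Combining with the upper bound event, a constant fraction of buckets simultaneously satisfy both inequalities, so the median does as well.

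The main obstacle is the lower bound with the required exponential concentration in $B$: Paley--Zygmund only yields a constant per-bucket success probability, and the buckets are not independent (each item hashes to exactly one bucket). I would address this either by invoking the negative association of balls-into-bins hashing to justify a Chernoff bound on the number of ``good'' buckets among the $w = \Theta(B)$ buckets, or by running $O(1)$ independent CountSketch tables and taking an outer median-of-medians, converting the per-bucket constant probability into an $\exp(-\Omega(B))$ failure bound. A secondary technical point is verifying the fourth-moment control uniformly over the hashing, which should follow from standard Khintchine/Rademacher inequalities once the head has been removed.
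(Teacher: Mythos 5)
Your reading of Algorithm~\ref{alg:tail_estimator} is not what the paper actually does, and this difference is where the rest of your argument runs into trouble. The algorithm does not take a median of $C[b]^2$ over the $w=\Theta(B')$ buckets of a single CountSketch table. Instead it instantiates $B$ \emph{fully independent} copies of \texttt{Basic-Tail-Sketch}, each of which draws a fresh hash $h:[n]\to[B']$, looks only at bucket $1$, and averages $c=32$ squared signed sums for items landing in that single bucket. The final output $V$ is the $B/3$-th largest of these $B$ independent values $V_1,\dots,V_B$. Because the $V_i$ are i.i.d., the $\exp(-\Omega(B))$ concentration comes from a plain Chernoff bound on an i.i.d.\ Bernoulli count; the paper never has to confront the dependence between buckets of a shared table, which is exactly the ``main obstacle'' you flagged. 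Your proposed fixes do not close that gap: ``$O(1)$ independent tables with an outer median-of-medians'' gives only $O(1)$ independent repetitions, hence at best constant failure probability, and the negative-association route is delicate because the event ``bucket $b$ is good'' depends jointly on the hash and on the sign functions, not just on the (negatively associated) bin loads.

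There is also a substantive gap in your lower-bound argument. You drop the head entirely and apply Paley--Zygmund to $C[b]=\sum_{j>3B'}\varepsilon_j[h(j)=b]f_j$ over the joint randomness of $h$ and the signs, asserting $\E[C[b]^4]\le 3(\E[C[b]^2])^2$ ``because the heavy items have been removed.'' Conditioned on $h$ this is Khintchine, but unconditionally you get $\E[C[b]^4]\le 3\,\E[Y_b^2]$ where $Y_b=\sum_{j>3B'}[h(j)=b]f_j^2$, and $\E[Y_b^2]=(\E Y_b)^2+\Var(Y_b)$ need not be $O((\E Y_b)^2)$: e.g.\ if the tail consists of $M\ll B'$ items of equal size then $\Var(Y_b)/(\E Y_b)^2 \approx w/M \gg 1$, and Paley--Zygmund gives nothing. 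The paper sidesteps this by \emph{truncating} rather than dropping: it analyzes $X_1=\sum_{i:h(i)=1}\min(f_i^2,f_{3B'}^2)$, so that the $3B'$ head items contribute a deterministic floor $3f_{3B'}^2$ to the mean which dominates the variance, making Chebyshev work for arbitrary frequency vectors. It then separately controls the sign-randomness via the $c=32$-fold averaging and Chebyshev (not Paley--Zygmund). Your upper-bound sketch (head pollutes $\le B'/10$ buckets, Markov on the rest) is conceptually close to the paper's per-copy argument (head hits bucket $1$ with probability $\le 1/10$), but the lower bound and the concentration step both need to be reworked along the lines above.
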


The algorithm and analysis are given in Section \ref{sec:tail_estimator_algorithm}. Replacing the threshold in Line $7$ of Algorithm \ref{alg:main_unlearned} with the output of Algorithm \ref{alg:tail_estimator} (more precisely the square root of the value) readily gives us the following worst case guarantees. Lemma \ref{lem:worst_case} states that the expected error of the estimates outputted by Algorithm \ref{alg:main_unlearned} using $B$, regardless of the true frequency distribution, is no worse than that of a standard CountSketch table using slightly smaller $O(B/\log \log n)$ space.

\begin{lemma}\label{lem:worst_case}
Suppose $B \ge \log n$. Let $\{\hat{f}_i\}_{i=1}^n$ denote the estimates of Algorithm \ref{alg:main_unlearned} using $B/2$ space and with Line $7$ replaced by the square root of the estimate of Algorithm \ref{alg:tail_estimator}, also using $B/2$ space. Suppose the condition of Lemma \ref{lem:tail_estimator} holds. Let  $\{\hat{f}'_i\}_{i=1}^n$ denote the estates computed by a CountSketch table with $\frac{cB}{\log \log n}$ columns for a sufficiently small constant $c$. Then, $\mathbb{E}[ |\hat{f}_i - f_i|] \le \mathbb{E}[ |\hat{f}_i' - f_i|]$.
\end{lemma}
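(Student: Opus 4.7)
Fix an index $i$ and let $B' = \Theta(B/\log\log n)$ and $B_T = \Theta(B)$ denote the widths of the small tables $S_1,\dots,S_{T-1}$ and of the large table $S_T$ in Algorithm~\ref{alg:main_unlearned}, and write $B'' = cB/\log\log n$ for the baseline width. By Lemma~\ref{lem:tail_estimator} the threshold used in Line~7 satisfies $\tau = \Theta(\|f_{\overline{B'}}\|_2/\sqrt{B'})$. The plan is to upper bound $\mathbb{E}[|\hat f_i - f_i|]$ by a sum of two terms each of order at most $\|f_{-i}\|_2/\sqrt{B'}$, and then observe that $\sqrt{B''/B'} = \Theta(\sqrt{c})$, so $\|f_{-i}\|_2/\sqrt{B'}$ is a $\Theta(\sqrt{c})$ fraction of $\|f_{-i}\|_2/\sqrt{B''}$; the latter is the natural scale of the baseline expected error, giving an arbitrary constant slack by picking $c$ small.

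The starting point is the decomposition
\begin{equation*}
|\hat f_i - f_i| = |\hat f_i^T - f_i|\cdot\mathbf{1}[\tilde f_i \ge \tau] + f_i\cdot\mathbf{1}[\tilde f_i < \tau] \le |\hat f_i^T - f_i| + f_i\cdot\mathbf{1}[\tilde f_i < \tau].
\end{equation*}
Since $S_T$ is independent of the $S_j$'s and has $B_T\gg B'$ buckets, the expectation of the first summand is $O(\|f_{-i}\|_2/\sqrt{B_T})\le O(\|f_{-i}\|_2/\sqrt{B'})$ by the standard CountSketch variance bound. The remaining task is to show $f_i\cdot\Pr[\tilde f_i < \tau] = O(\|f_{-i}\|_2/\sqrt{B'})$.

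I would split based on $f_i$. If $f_i\le 2\tau$, the product is at most $2\tau = O(\|f_{\overline{B'}}\|_2/\sqrt{B'}) \le O(\|f_{-i}\|_2/\sqrt{B'})$. If $f_i>2\tau$, then $\{\tilde f_i<\tau\}\subseteq\{|\tilde f_i - f_i|>f_i/2\}$; apply Chebyshev to each row of each $S_j$ (variance $\|f_{-i}\|_2^2/B'$), combine with the median over $3$ rows inside each $S_j$, and amplify via a Chernoff bound across the $T-1$ independent tables to obtain
\begin{equation*}
\Pr[\tilde f_i < \tau] \le \left(\frac{C\|f_{-i}\|_2^4}{f_i^4\,B'^2}\right)^{\Omega(T)}
\end{equation*}
for a universal constant $C$. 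Multiplying by $f_i$ with $T = \Theta(\log\log n)$, the product is uniformly $O(\|f_{-i}\|_2/\sqrt{B'})$.

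The main obstacle is confirming the last bound for arbitrarily large $f_i$: the Chebyshev probability is polynomial in $1/f_i^2$, so the median-of-medians failure probability decays as $1/f_i^{4\Omega(T)}$, which is super-polynomial in $f_i$ once $f_i$ exceeds $\|f_{-i}\|_2/\sqrt{B'}$. Hence $f_i\cdot\Pr[\tilde f_i<\tau]$ is maximized near the boundary $f_i\sim\|f_{-i}\|_2/\sqrt{B'}$, where it equals $O(\|f_{-i}\|_2/\sqrt{B'})$ by the trivial bound. Summing the two summands gives $\mathbb{E}[|\hat f_i-f_i|] = O(\|f_{-i}\|_2/\sqrt{B'})$, and the $\Theta(\sqrt{c})$ slack relative to the scale $\|f_{-i}\|_2/\sqrt{B''}$ of the baseline error lets us fix $c$ small enough to conclude $\mathbb{E}[|\hat f_i - f_i|]\le \mathbb{E}[|\hat f_i' - f_i|]$.
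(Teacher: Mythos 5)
Your high-level strategy mirrors the paper's: split by whether $f_i$ is below or above the threshold scale $\tau$, bound the small-$f_i$ case trivially by $O(\tau)$, and for large $f_i$ show the probability of wrongly outputting zero decays fast via a Chernoff amplification over the $T$ tables, then conclude by comparison to the baseline. There is, however, a genuine gap: you carry out the concentration step in terms of $\|f_{-i}\|_2$ rather than the tail norm $\|f_{\overline{B'}}\|_2$. Your per-row Chebyshev step uses the variance $\|f_{-i}\|_2^2/B'$, and accordingly your final upper bound on $\mathbb{E}[|\hat f_i-f_i|]$ is $O(\|f_{-i}\|_2/\sqrt{B'})$, which you then compare against the claimed ``natural scale'' $\|f_{-i}\|_2/\sqrt{B''}$ of the baseline. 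But the baseline is a three-row CountSketch, whose expected error is $O(\|f_{\overline{B''}}\|_2/\sqrt{B''})$ --- the median over rows removes the influence of the largest $\Theta(B'')$ coordinates --- and is \emph{not} lower-bounded by $\Omega(\|f_{-i}\|_2/\sqrt{B''})$. When the frequency vector has one enormous coordinate (say $f_1=M$ with the remaining coordinates small), $\|f_{-i}\|_2\approx M$ for $i\neq 1$ can be arbitrarily larger than $\|f_{\overline{B''}}\|_2$, so your bound on our algorithm's error can vastly exceed the baseline's, and the desired inequality cannot be read off.

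The paper sidesteps this by working at the tail-norm scale throughout: it applies Markov's inequality to the three-row median's expected error, which is already $O(\|f_{\overline{B'}}\|_2/\sqrt{B'})$, so the final bound is $O(\|f_{\overline{B'}}\|_2/\sqrt{B'})$, a $\Theta(\sqrt{c})$-fraction of $\|f_{\overline{cB'}}\|_2/\sqrt{cB'}$. To repair your argument, replace the single-row Chebyshev bound by a tail-norm tail bound for the three-row median (as in the proof of Theorem~\ref{unlearnedmultiCS}, where heavy and light contributions are handled separately), or apply Markov to the table's expected error directly as the paper does, so that only $\|f_{\overline{B'}}\|_2$ --- not $\|f_{-i}\|_2$ --- appears in the final estimate.
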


\begin{remark}
    The learned version of the algorithm automatically inherits any worst case guarantees from the unlearned (without predictions) version. This is because we only set aside half the space to explicitly track the frequency of some elements, which has worst case guarantees, while the other half is used for the unlearned version, also with worst case guarantees.
\end{remark}

\section{Experiments}

We experimentally evaluate our algorithms with and without predictions on real and synthetic datasets and demonstrate that the improvements predicted by theory hold in practice.
 Comprehensive additional figures are given in \cref{appendix-experiment}.

\begin{figure}[t]
    \centering
    \includegraphics[width=0.35\textwidth]{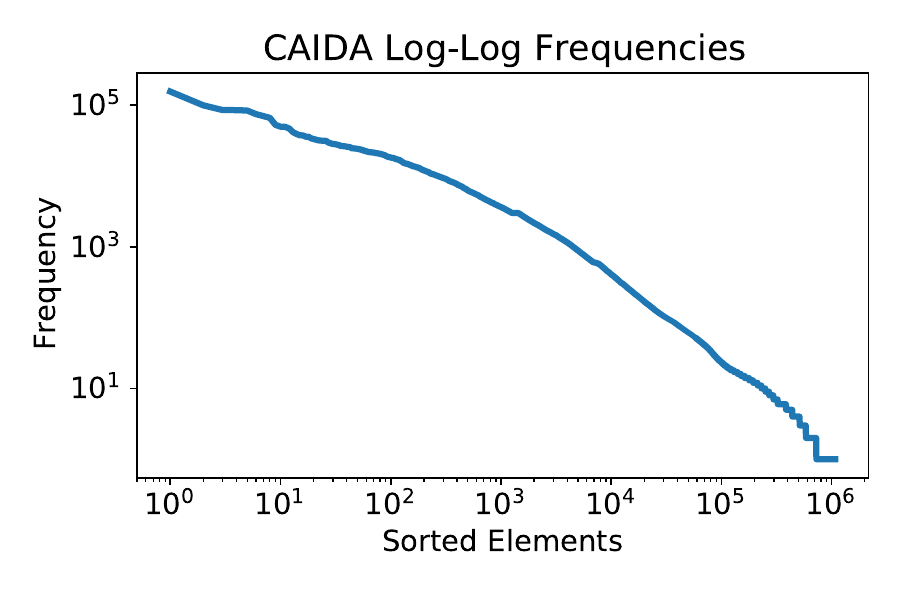}
    \includegraphics[width=0.35\textwidth]{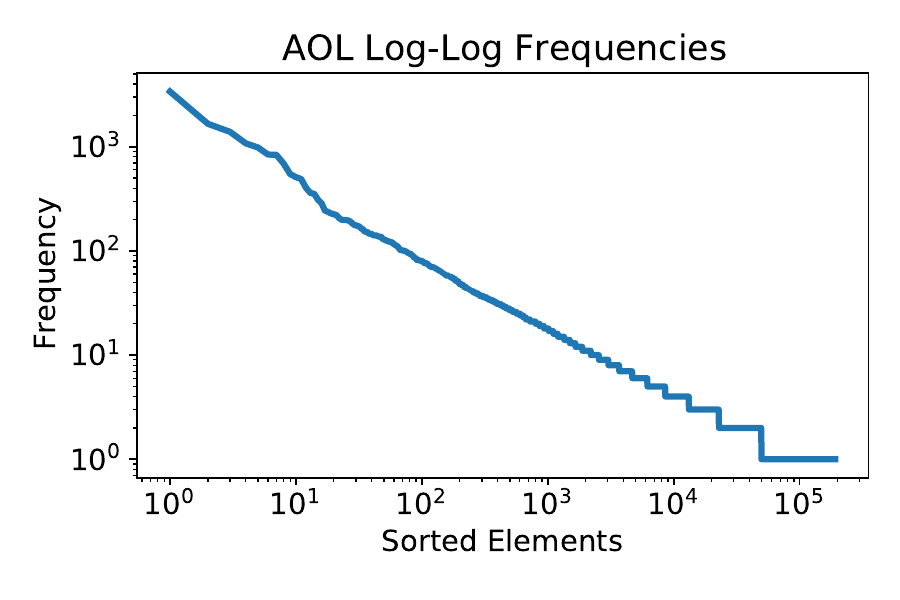}
    \caption{Log-log plots of the sorted frequencies of the first day/minute of the CAIDA/AOL datasets. Both data distributions are heavy-tailed with few items accounting for much of the total stream.}
    \label{fig:loglog}
\end{figure}

\begin{figure}[t]
    \centering
    \includegraphics[width=0.45\textwidth]{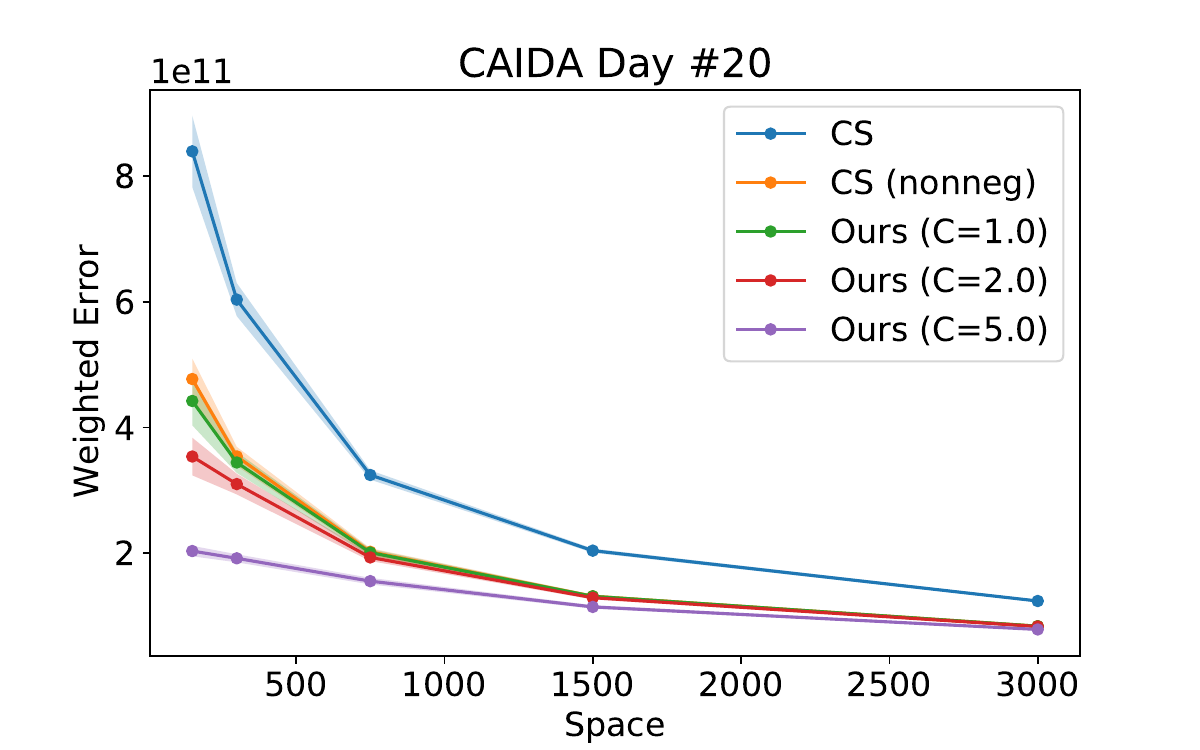}
    \includegraphics[width=0.45\textwidth]{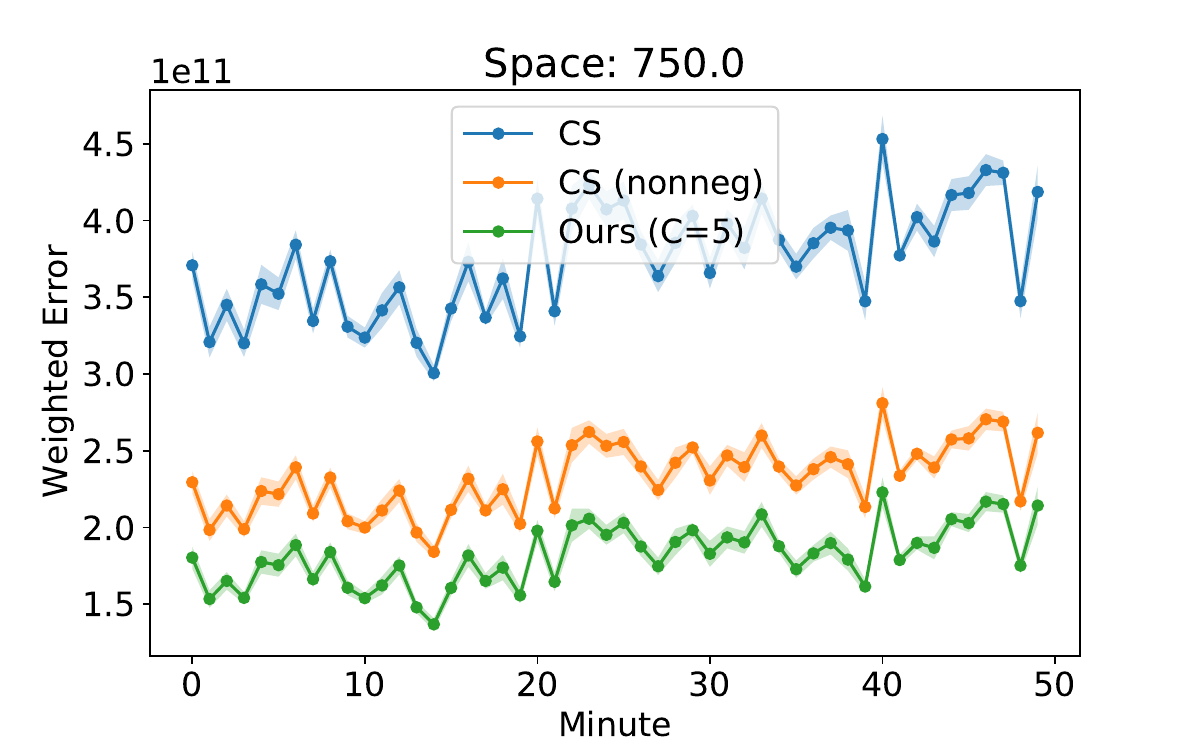}
    \caption{Comparison of weighted error without predictions on the CAIDA dataset. The left plot compares the performance of various algorithms (including our algorithm with different choices of $C$) for a fixed dataset and varying space. The right plot compares algorithms over time across separate streams for each minute of data for a specific choice of space being $750$.}
    \label{fig:ip-nopreds}
\end{figure}

\begin{figure}[t]
    \centering
    \includegraphics[width=0.45\textwidth]{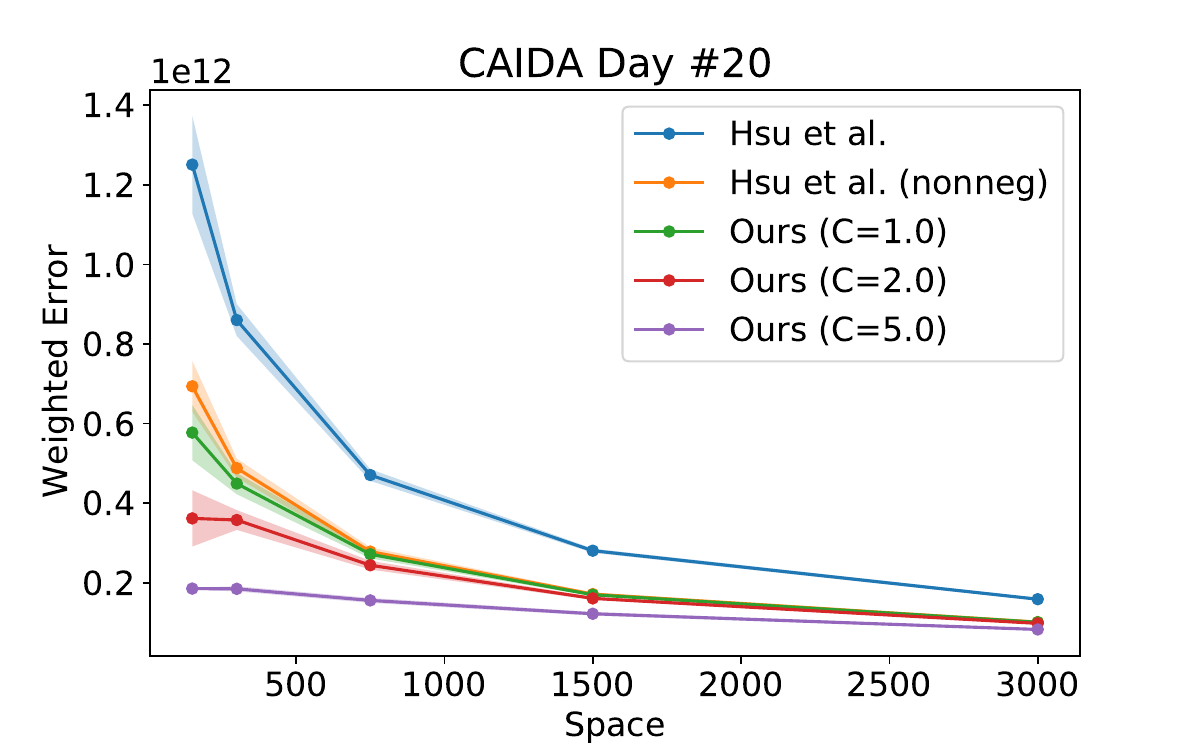}
    \includegraphics[width=0.45\textwidth]{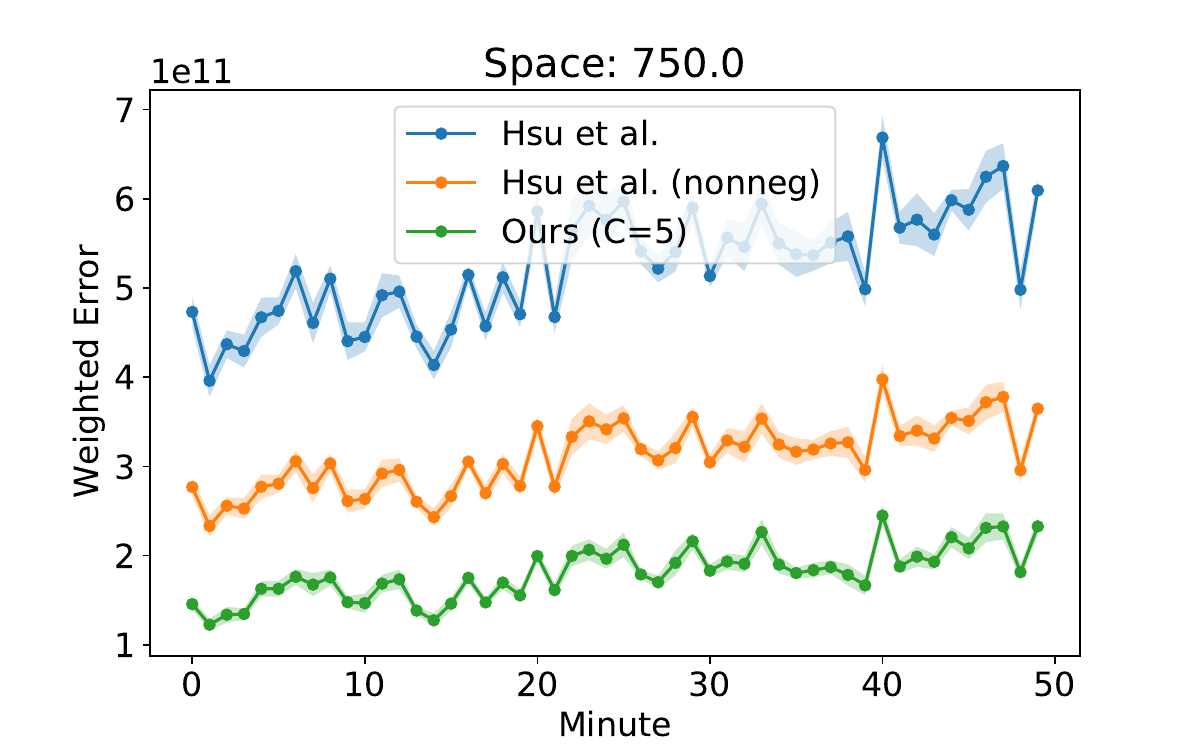}
    \caption{Comparison of weighted error with predictions on the CAIDA dataset.}
    \label{fig:ip-preds}
\end{figure}

\begin{figure}[t]
    \centering
    \includegraphics[width=0.45\textwidth]{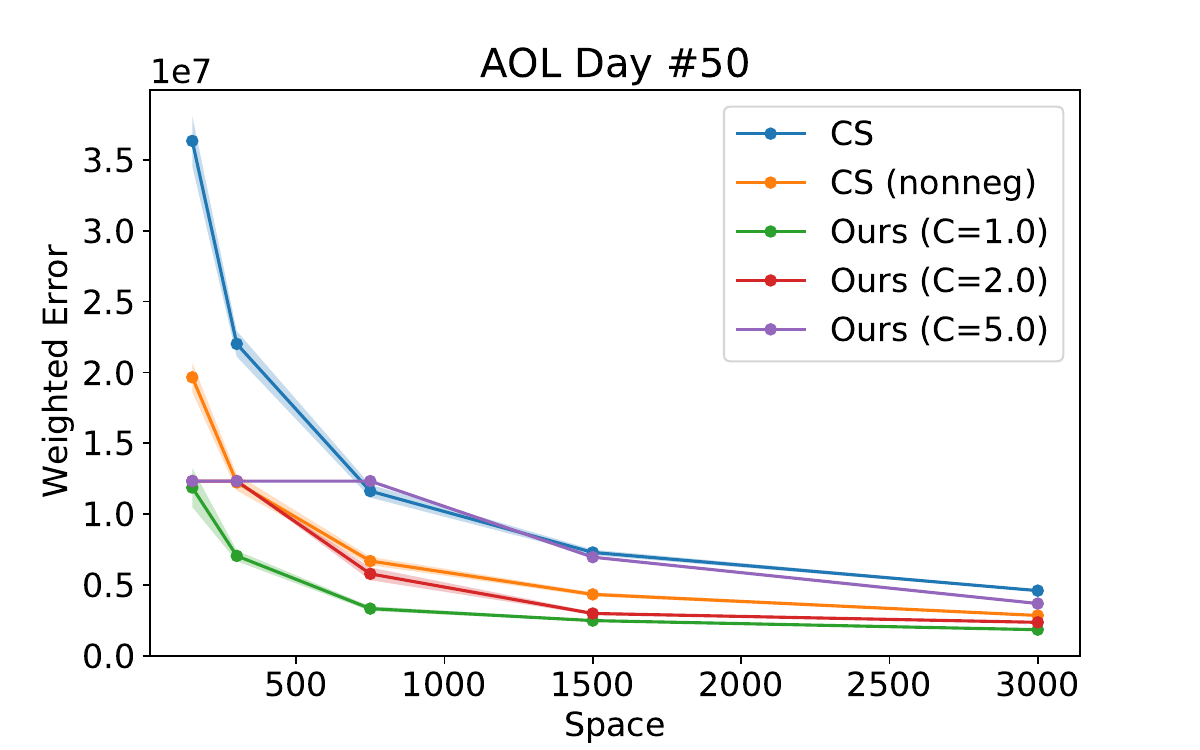}
    \includegraphics[width=0.45\textwidth]{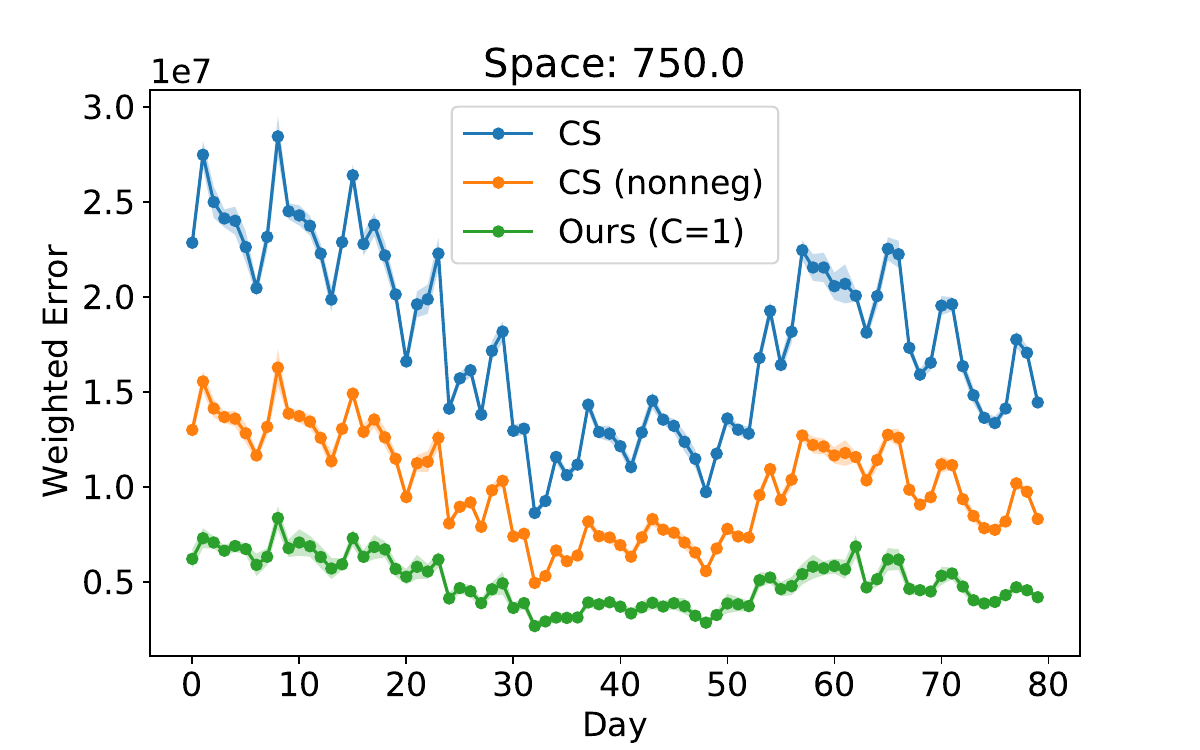}
    \caption{Comparison of weighted error without predictions on the AOL dataset.}
    \label{fig:aol-nopreds}
\end{figure}

\begin{figure}[t]
    \centering
    \includegraphics[width=0.45\textwidth]{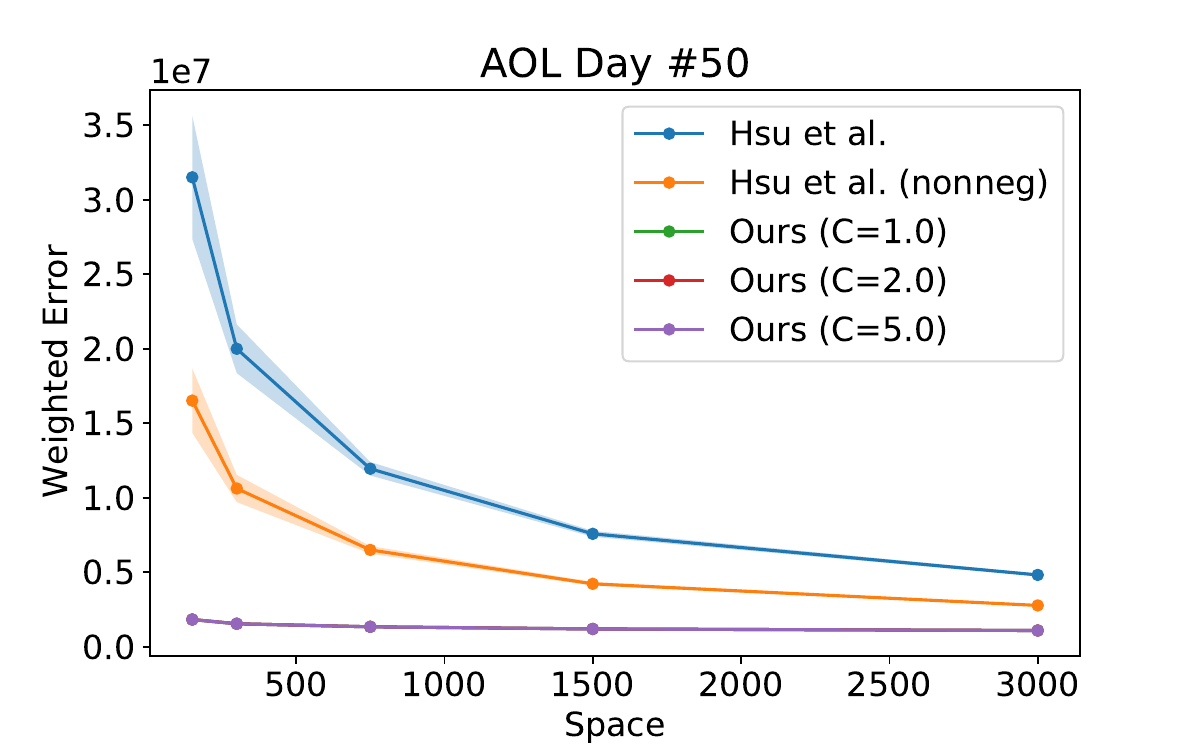}
    \includegraphics[width=0.45\textwidth]{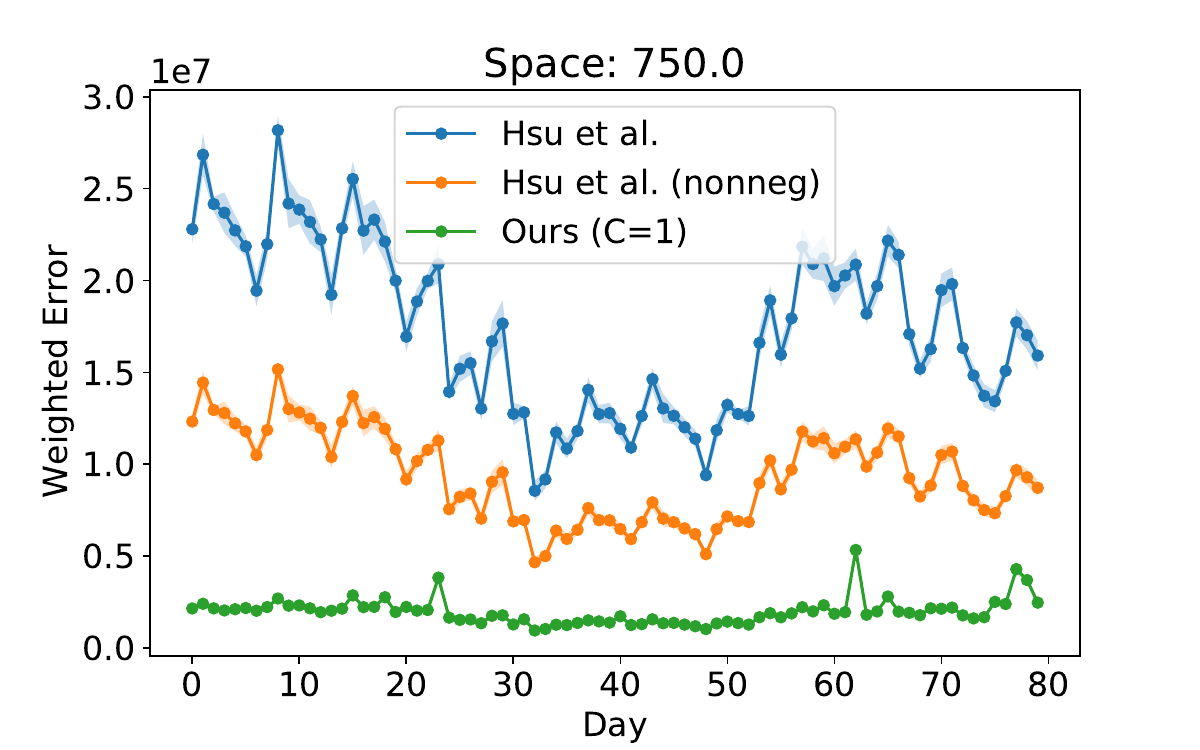}
    \caption{Comparison of weighted error with predictions on the AOL dataset.}
    \label{fig:aol-preds}
\end{figure}

\paragraph{Algorithm Implementations}
In the setting without predictions, we compare our algorithm to CountSketch (CS) (which was shown to have favorable empirical performance compared to CountMin (CM) in~\cite{HsuIKV19} and better theoretical performance due to our work). In the setting with predictions, we compare the algorithm of~\cite{HsuIKV19}, using CS as the base sketch and dedicated half of the space for items which are predicted to be heavy by the learned oracle. For all implementations, we use three rows in the CS table and vary the number of columns. We additionally augment each of these baselines with a version that truncates all negative estimated frequencies to zero as none of our datasets include stream deletions. This simple change does not change the asymptotic $(\eps, \delta)$ classic sketching guarantees but does make a big difference when measuring empirical weighted error.

We implement a simplified and practical version of our algorithm which uses a single CS table. If the median estimate of an element is below a threshold of $C n / w$ for domain size $n$, sketch width $w$ (a third of the total space), and a tunable constant $C$, the estimate is instead set to $0$. As all algorithms use a single CS table as the basic building block with different estimation functions, for each trial we randomly sample hash functions for a single CS table and only vary the estimation procedure used.

We evaluate algorithms according the weighted error as in \cref{eq:weighted_error} but also according to unweighted error which is simply the sum over all elements of the absolute estimation error, given by $\sum_i |f_i - \tilde{f}_i|$.
Space is measured by the size of the sketch table, and all errors are averaged over 10 independent trials with standard deviations shown shaded in.

\paragraph{Datasets}
We compare our algorithm with prior work on three datasets.
We use the same two real-world datasets and predictions from~\cite{HsuIKV19}: the CAIDA and AOL datasets. The CAIDA dataset~\cite{caida} contains 50 minutes of internet traffic data. For each minute of data, the stream is formed of the IP addresses associated with packets going through a Tier1 ISP. A typical minute of data contains 30 million packets accounted for by 1 million IPs. The AOL dataset~\cite{pass2006picture} contains 80 days of internet search queries with a typical day containing $\approx 3\cdot 10^5$ total queries and $\approx 10^5$ unique queries. As shown in \cref{fig:loglog}, both datasets approximately follow a power law distribution.
For both datasets, we use the predictions from prior work~\cite{HsuIKV19} formed using recurrent neural networks.
We also generate synthetic data following a Zipfian distribution with $n=10^7$ elements and where the $i$th element has frequency $n/i$.

\paragraph{Results}
Across the board, our algorithm outperforms the baselines. On the CAIDA and AOL datasets without predictions, our algorithm consistently outperforms the standard CS with up to \textbf{4x} smaller error with space $300$. This gap widens when we compare our algorithm with predictions to that of \cite{HsuIKV19} with a gap of up to \textbf{17x} with space $300$. In all cases, the performance of CS and \cite{HsuIKV19} is significantly improved by the simple trick of truncating negative estimates to zero. However, our algorithm still outperforms these ``nonneg'' baselines.
The longitudinal plots which compare algorithms over time show that our algorithm consistently outperforms the state-of-the-art with and without predictions.

In the case of the CAIDA dataset, predictions do not generally improve the performance of any of the algorithms. This is consistent with the findings of~\cite{HsuIKV19} where the prediction quality for the CAIDA dataset was relatively poor. However, for the AOL which has a more accurate learned oracle, our algorithm in particular is significantly improved when augmented with predictions. Intuitively, the benefit of our algorithm comes from removing error due to noise for low frequency elements. Conversely, good predictions help to obtain very good estimates of high frequency elements. In combination, this yields very small total weighted error.

In \cref{appendix-experiment}, we display comprehensive experiments of the performance of the algorithms across the CAIDA and AOL datasets with varying space and for both weighted and unweighted error as well as results for synthetic Zipfian data. In all cases, our algorithm outperforms the baselines. On synthetic Zipfian, the gap between our algorithm and the non-negative CS for weighted error is relatively small compared to that for the real datasets. While we mainly focus on weighted error in this work, the benefits of our algorithm are even more significant for unweighted error as setting estimates below the noise floor to zero is especially impactful for this error measure. In general, we see the trend, matching our theoretical results, that as space increases, the gap between the different algorithms shrinks as the estimates of the base CS become more accurate.

\section*{Acknowledgements}
We are grateful to Piotr Indyk for insightful discussions. Anders Aamand is supported by DFF-International Postdoc Grant 0164-00022B from the Independent Research Fund Denmark and a Simons Investigator Award. Justin Chen is supported by an NSF Graduate Research Fellowship under Grant No.\ 174530. Huy Nguyen is supported by NSF Grants 2311649 and 1750716.

\bibliographystyle{plain}
\bibliography{main}

\appendix

\newpage
\section{Organization of the Appendix}

In Section \ref{sec:countmin}, we give tight bounds for CM with Zipfians, as well as tight bounds for CS (in Section \ref{sec:countsketch}) and its learning augmented variants (in Section \ref{sec:learnedcountsketch}). Our results for CM and CS, with and without predictions, can be summarized in Table \ref{tbl:results}. We highlight the results of these sections are presented assuming that we use a \emph{total} of $B$ buckets. With $k$ hash functions, the range of each hash functions is therefore $[B/k]$. We make this assumption since we wish to compare the expected error incurred by the different sketches when the total sketch size is fixed. 

Sections \ref{sec:proof_of_thm_main_unlearned} and \ref{sec:proof_main_learned} contain the proofs of Theorems \ref{thm:main_unlearned} and \ref{thm:main_learned}, respectively. Section \ref{sec:tail_estimator_algorithm} contains omitted proofs of Section \ref{sec:worst_case}.

In Section \ref{appendix-experiment}, we include additional experimental results.


\paragraph{Notation} We use the bracket $[ \cdot ]$ notation for the indicator function. $a \lesssim b$ denotes $a \le Cb$ for some fixed positive constant $C$.

\begin{table}[h!]
\centering	
\resizebox{.9\textwidth}{!}{%
\renewcommand{\arraystretch}{1.5}
\begin{tabular}{l|l l} 
\toprule
& $k=1$ & $k > 1$\\
\midrule
Count-Min (CM) & $\Theta \left( \frac{\log n}{B} \right)$~\cite{HsuIKV19}	&	$\Theta \left(\frac{k \cdot \log({{kn}\over B})}{B} \right)$ \\
Learned Count-Min (L-CM) & $\Theta \left(\frac{\log^2({n\over B})}{B\log n} \right)$~\cite{HsuIKV19}	&	$\Omega \left( \frac{\log^2 (\frac{n}{B})}{B\log n} \right)$~\cite{HsuIKV19}\\
Count-Sketch (CS) & $\Theta \left( \frac{  \log B}{B} \right)$ & $\Omega \left(\frac{k^{1/2}  }{B\log k } \right)$ and $O \left(\frac{k^{1/2}  }{B} \right)$\\ 
Learned Count-Sketch (L-CS) & $\Theta \left( \frac{\log \frac{n}{B}}{B \log n} \right)$ & $\Omega \left( \frac{\log \frac{n}{B}}{B \log n} \right)$ \\ 
\bottomrule
\end{tabular}
}
\caption{This table summarizes our and previously known results on the expected frequency estimation error of Count-Min (CM), Count-Sketch (CS) and their learned variants (i.e., L-CM and L-CS) that use $k$ functions and overall space $k\times {B\over k}$ under Zipfian distribution.  For CS, we assume that $k$ is odd (so that the median of $k$ values is well defined). 
}\label{tbl:results}
\end{table}

\section{Tight Bounds for Count-Min with Zipfians}\label{sec:countmin}
For both Count-Min and Count-Sketch we aim at analyzing the expected value of the variable $\sum_{i\in [n]}f_i\cdot|\tilde f_i- f_i|$ where $f_i=1/i$ and $\tilde{f_i}$ is the estimate of $f_i$ output by the relevant sketching algorithm. 
Throughout this paper we use the following notation: For an event $E$ we denote by $[E]$ the random variable in $\{0,1\}$ which is $1$ if and only if $E$ occurs.
We begin by presenting our improved analysis of Count-Min with Zipfians. The main theorem is the following.
\begin{theorem}\label{thm:simplecm}
Let $n,B,k \in \N$ with $k\geq 2$ and $B\leq n/k$. Let further $h_1,\dots,h_k: [n] \to [B]$ be independent  and truly random hash functions. For $i \in [n]$ define the random variable $\tilde{f_i}=\min_{\ell \in [k]} \left( \sum_{j\in [n]} [h_{\ell}(j)=h_\ell(i)]f_j \right)$. For any $i\in [n]$ it holds that $ \E[|\tilde f_i- f_i|]=\Theta \left( \frac{\log \left( \frac{n}{B} \right)}{B} \right)$.
\end{theorem}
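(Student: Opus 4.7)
Since Count-Min never under-estimates, $|\tilde f_i - f_i| = \tilde f_i - f_i = \min_{\ell\in[k]} X_\ell$ where $X_\ell := \sum_{j\neq i}[h_\ell(j)=h_\ell(i)]/j$, so the task is to sandwich $\E[\min_\ell X_\ell]$ by constants times $\log(n/B)/B$. The structural step of the plan is to split $X_\ell = Y_\ell + Z_\ell$ into a ``head'' piece $Y_\ell := \sum_{j\le B,\,j\ne i}[h_\ell(j)=h_\ell(i)]/j$ and a ``tail'' piece $Z_\ell := \sum_{j>B}[h_\ell(j)=h_\ell(i)]/j$. Conditionally on $h_\ell(i)$, the collision indicators are independent $\mathrm{Bern}(1/B)$ variables on disjoint sets of $j$'s, and neither law depends on $h_\ell(i)$, so $Y_\ell$ and $Z_\ell$ are unconditionally independent. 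A direct calculation gives $\mu_Z := \E[Z_\ell] = (H_n - H_B)/B = \Theta(\log(n/B)/B)$ and $\mathrm{Var}(Z_\ell) \le 1/B^2$.

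For the upper bound, I would set $\ell^* := \arg\min_\ell Y_\ell$ and use $\min_\ell X_\ell \le X_{\ell^*} = Y_{\ell^*} + Z_{\ell^*}$. Because $\ell^*$ depends only on $(Y_1,\dots,Y_k)$ and is therefore independent of the $Z_\ell$'s, $\E[Z_{\ell^*}] = \mu_Z = \Theta(\log(n/B)/B)$. To bound $\E[\min_\ell Y_\ell] \le \E[\min(Y_1,Y_2)]$ (using $k\ge 2$), I would apply the AM-GM bound $\min(Y_1,Y_2)\le \sqrt{Y_1 Y_2}$ and independence to get $\E[\min(Y_1,Y_2)] \le (\E\sqrt{Y_\ell})^2$; then the sub-additivity of the square root, combined with $\xi_j\in\{0,1\}$, gives $\sqrt{Y_\ell} \le \sum_{j\le B}\xi_j/\sqrt{j}$, so $\E\sqrt{Y_\ell}\le (1/B)\sum_{j\le B} j^{-1/2} = O(B^{-1/2})$ and hence $\E[\min_\ell Y_\ell] = O(1/B)$. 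Since $k\ge 2$ forces $n/B\ge 2$ and thus $\mu_Z = \Omega(1/B)$, the total upper bound is $O(\log(n/B)/B)$.

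For the lower bound, $Y_\ell\ge 0$ gives $\E[\min_\ell X_\ell] \ge \E[\min_\ell Z_\ell]$. I would dyadically decompose $Z_\ell = \sum_{r\ge 1} Z_\ell^{(r)}$ along ranges $I_r := (2^{r-1}B,\, 2^r B]$ and exploit the pointwise inequality $\min_\ell \sum_r Z_\ell^{(r)} \ge \sum_r \min_\ell Z_\ell^{(r)}$ (valid since every $Z_\ell^{(r)}\ge 0$). For each $r$, the collision count $N_r^{(\ell)}\sim \mathrm{Bin}(|I_r|, 1/B)$ has mean $\approx 2^{r-1}$, so multiplicative Chernoff yields $\Pr[N_r^{(\ell)} < 2^{r-2}] \le \exp(-\Omega(2^r))$. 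For every $r$ with $2^r \ge C\log k$, a union bound over the $k$ rows then gives $\min_\ell N_r^{(\ell)} \ge 2^{r-2}$ with probability at least $1/2$, and since each colliding item in $I_r$ contributes $\ge 1/(2^r B)$, this forces $\min_\ell Z_\ell^{(r)} \ge 1/(4B)$ with probability at least $1/2$ and expected value $\Omega(1/B)$. Summing over the $\Omega(\log(n/B)-\log\log k)$ admissible ranges gives $\E[\min_\ell Z_\ell] = \Omega(\log(n/B)/B)$ whenever $\log(n/B)$ dominates $\log\log k$. The residual regime $\log(n/B) = O(1)$ forces $k\le n/B = O(1)$, in which case a direct argument on the single range $I_1$ (each row has $\ge 1$ colliding item with probability $\ge 1-1/e$, so all $k$ rows simultaneously do with probability $\Omega(1)$) supplies the required $\Omega(1/B)$ lower bound.

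The hardest step is the lower bound when $k$ is as large as $n/B$. A naive approach that applies Bernstein to a single $Z_\ell$ and then union-bounds over $k$ rows fails: at the deviation $\mu_Z/2$ the best single-row bound available from $\mathrm{Var}(Z_\ell)$ and $\|Z_\ell\|_\infty$ decays only polynomially in $n/B$, which cannot absorb $k = n/B$. The dyadic decomposition succeeds because the binomial counts $N_r^{(\ell)}$ enjoy $\exp(-\Omega(2^r))$ tails, which beat any polynomial in $k$ once $r$ is moderately large, and the final expectation is assembled additively from $\Omega(\log(n/B))$ dyadic ranges rather than extracted from a single concentration inequality.
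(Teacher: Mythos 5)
Your proposal is correct, and both the upper- and lower-bound arguments take genuinely different routes from the paper's.

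For the upper bound, both you and the paper rely on the same structural observation: pick the row $\ell^*$ minimizing the head contribution $Y_\ell = \sum_{j\le B, j\ne i}\xi_j^{(\ell)}/j$, and use independence of $\ell^*$ from the tail indicators to get $\E[Z_{\ell^*}] = \mu_Z = \Theta(\log(n/B)/B)$. Where you diverge is in bounding $\E[\min_\ell Y_\ell]$. The paper reduces to an auxiliary lemma (the case $n=B$) proved via a two-event union bound --- either some large item collides, or the bucket receives $\ge t/s$ items --- with a tuned threshold $s = \Theta(t/\log(tn))$, then integrates the resulting tail $O(\log(tn)/(tn))^2$. You instead use $k\ge 2$, the AM--GM inequality $\min(Y_1,Y_2)\le\sqrt{Y_1Y_2}$, independence of rows to factor the expectation, and $\sqrt{\xi_j}=\xi_j$ with subadditivity of the square root to conclude $\E[\sqrt{Y_\ell}] = O(1/\sqrt{B})$ and hence $\E[\min_\ell Y_\ell]=O(1/B)$. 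Your argument is shorter and more elementary, though the paper's auxiliary lemma is reused elsewhere (in the CountSketch analyses) and also yields an actual tail bound rather than just an expectation bound, so the paper's extra work is not wasted.

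For the lower bound, the arguments are entirely different. The paper applies Bennett's inequality to a single row $W_\ell = \sum_{j>B}f_j\xi_j^{(\ell)}$ to obtain $\Pr[W_\ell\le\E[W_\ell]/2]\le k^{-\Omega(\log\log k)}$ and then crucially uses the full independence of the $k$ rows to multiply success probabilities, $(1-k^{-\Omega(\log\log k)})^k = \Omega(1)$. Your dyadic decomposition along $I_r = (2^{r-1}B,2^rB]$ sidesteps the need for a concentration inequality as sharp as Bennett's: the binomial collision counts $N_r^{(\ell)}$ enjoy $\exp(-\Omega(2^r))$ lower tails by multiplicative Chernoff, which beat the factor $k$ in a plain union bound once $2^r \gtrsim \log k$, and since $n/B\ge k$ the admissible dyadic scales number $\Omega(\log(n/B))$. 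Two features of your route are worth noting. First, it uses only a union bound across rows, not independence of rows (independence is used only within a single row to get the binomial law), so it would survive weaker assumptions on the cross-row dependence. Second, it localizes the contribution: you get $\Omega(1/B)$ simultaneously from each of $\Omega(\log(n/B))$ disjoint frequency scales, which is a cleaner structural explanation of why the answer is $\log(n/B)/B$ than the paper's single application of Bennett at the aggregate scale. The small residual case $\log(n/B)=O(1)$, which forces $k=O(1)$, is handled correctly by your direct argument on $I_1$, matching the paper's implicit treatment of small $k$.

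The one minor imprecision is your remark that a single-row Bernstein bound ``decays only polynomially'' and therefore cannot handle $k=n/B$; the paper actually uses Bennett, whose extra $\log\log$ factor in the exponent gives $k^{-\Omega(\log\log k)}$, which is superpolynomial and suffices once combined with the row-independence product rather than a union bound. So the paper's approach is not a failed union-bound argument; it is a different (and slightly tighter) concentration result plugged into a different aggregation step. Your criticism is accurate of a naive Bernstein-plus-union-bound plan, just not of what the paper actually does.
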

Replacing $B$ by $B/k$ in~\Cref{thm:simplecm} and using linearity of expectation we obtain the desired bound for Count-Min in the upper right hand side of~\Cref{tbl:results}. The natural assumption that $B\leq n/k$ simply says that the total number of buckets is upper bounded by the number of items. 

To prove~\Cref{thm:simplecm} we start with the following lemma which is a special case of the theorem.
\begin{lemma}\label{simpleanalysis}
Suppose that we are in the setting of~\Cref{thm:simplecm} and further that\footnote{In particular we dispose with the assumption that $B \leq n/k$.} $n=B$. Then 
\begin{align*}
\E[|\tilde{f_i}-f_i|]=O \left( \frac{1}{n} \right).
\end{align*}
\end{lemma}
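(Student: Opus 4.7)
The plan is to combine the layer-cake formula with independence across rows, reducing the problem to a one-dimensional tail estimate. Since Count-Min always satisfies $\tilde f_i \geq f_i$, setting $X_\ell := \sum_{j\neq i}[h_\ell(j)=h_\ell(i)]/j \geq 0$ gives $|\tilde f_i - f_i| = \min_\ell X_\ell$, and as the $k$ rows are i.i.d.\ we may write
\begin{equation*}
\E[\min_\ell X_\ell] \;=\; \int_0^\infty \Pr[X_1 > t]^k\, dt.
\end{equation*}
The naive Markov bound $\Pr[X_1 > t] \leq \E[X_1]/t = O(\log n/(nt))$ only yields $O((\log n)^k/n)$ after integration, off by exactly the $\log n$ factor that the lemma claims to save, so a sharper single-row tail bound is needed.

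To obtain it I would split $X_1$ at the threshold $j = 1/t$: let $S_A := \sum_{j < 1/t,\, j\neq i}[h_1(j)=h_1(i)]/j$ and $S_B := \sum_{j \geq 1/t,\, j\neq i}[h_1(j)=h_1(i)]/j$, so $X_1 = S_A + S_B$. The observation is that any single collision from $A$ already forces $X_1 \geq 1/j > t$, whereas for $B$ an entire sum must exceed $t$. A union bound gives $\Pr[S_A > 0] \leq |A|/n \leq 1/(nt)$, while Markov on $S_B$ picks up only the truncated harmonic sum $\sum_{j \geq 1/t, j \leq n} 1/j = O(\log(nt))$ rather than $H_n$, yielding $\Pr[S_B > t] \lesssim \log(nt)/(nt)$. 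Therefore
\begin{equation*}
\Pr[X_1 > t] \;\leq\; \Pr[S_A > 0] + \Pr[S_B > t] \;\lesssim\; \frac{1 + \log(nt)}{nt}
\end{equation*}
for every $t \geq 1/n$; this is the crucial replacement of $\log n$ by $\log(nt)$.

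Finally, I would split the integral at $t = 1/n$. The piece over $[0,1/n]$ contributes at most $1/n$ via the trivial bound $\Pr[X_1 > t]\leq 1$, and for $[1/n,\infty)$ the substitution $u=nt$ gives
\begin{equation*}
\int_{1/n}^\infty \Pr[X_1 > t]^k\, dt \;\lesssim\; \frac{1}{n} \int_1^\infty \left(\frac{1+\log u}{u}\right)^k du,
\end{equation*}
where the last integral is finite (by a routine integration by parts) precisely when $k \geq 2$. Summing the two pieces gives $\E[\min_\ell X_\ell] = O(1/n)$, as desired. The main obstacle is the tail bound step: one has to notice that splitting at $j=1/t$ turns the stubborn $\log n$ factor into $\log(nt)$, which is exactly thin enough for the integrand to converge under the hypothesis $k \geq 2$; anything weaker keeps an extra $\log n$ factor and fails to match the lemma.
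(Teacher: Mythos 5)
Your proof is correct, and it reaches the same crucial single-row tail bound $\Pr[X_1 > t] \lesssim \log(nt)/(nt)$ as the paper, but via a genuinely different decomposition. The paper fixes $t$, introduces an auxiliary sub-threshold $s = \Theta(t/\log(tn))$, and decomposes the \emph{event} $\{Z_\ell \geq t\}$ into ``some element with $f_j > s$ collides'' (handled by a union bound over the $\leq 1/s$ heavy items) or ``at least $t/s$ elements collide'' (handled by a union bound over $\binom{n}{t/s}$ subsets, which after optimizing $s$ is super-polynomially small). You instead split the \emph{random variable} $X_1$ itself at the threshold $j = 1/t$ and apply Markov to the light tail $S_B$, keying on the observation that the truncated harmonic sum $\sum_{j\geq 1/t}^n 1/j$ is $O(\log(nt))$ rather than $O(\log n)$. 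Your argument is more elementary: Markov replaces the combinatorial union bound, and you avoid the need to choose and optimize a second threshold $s$. The paper's combinatorial bound on the light-collision event is much sharper than Markov would give, but that extra sharpness is wasted since the heavy-item term $1/(nt)$ dominates anyway, so both routes land on the same integrand. One small bonus of your route: the union bound on $S_A$ and the Markov bound on $S_B$ each use only pairwise independence of $h_\ell$, whereas the paper's size-$(t/s)$ union bound needs $O(\log B)$-wise independence (this is exactly the content of the paper's Remark~\ref{remark:lowindependence}); you also handle all $k \geq 2$ uniformly via the layer-cake formula rather than reducing to $k=2$.
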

\begin{proof}
It suffices to show the result when $k=2$ since adding more hash functions and corresponding tables only decreases the value of $|\tilde f_i- f_i|$. Define $Z_\ell=\sum_{j\in [n]\setminus\{i\}} [h_\ell(j)=h_\ell(i)]f_j$ for $\ell \in [2]$ and note that these variables are independent. For a given $t\geq 3/n$ we wish to upper bound $\Pr[Z_{\ell}\geq t]$. 
Let $s<t$ be such that $t/s$ is an integer, and note that if $Z_{\ell}\geq t$ then either of the following two events must hold:
\begin{enumerate}
\item[$E_1$:] There exists a $j\in [n]\setminus \{i\}$ with $f_j>s$ and $h_{\ell}(j)=h_{\ell}(i)$.
\item[$E_2$:] The set $\{j\in [n]\setminus \{i\}:h_{\ell}(j)=h_{\ell}(i)\}$ contains at least $t/s$ elements.
\end{enumerate}
To see this, suppose that $Z_\ell\geq t$ and that $E_1$ does not hold. Then 
$$
t\leq Z_\ell =\sum_{j\in [n]\setminus\{i\}} [h_\ell(j)=h_\ell(i)]f_j\leq s |\{j\in [n]\setminus \{i\}:h_{\ell}(j)=h_{\ell}(i)\}|,
$$
so it follows that $E_2$ holds.
By a union bound,
\begin{align*}
\Pr[Z_{\ell}\geq t]\leq \Pr[E_1]+\Pr[E_2]\leq  \frac{1}{ns}+\binom{n}{t/s}n^{-t/s}\leq  \frac{1}{ns}+\left( \frac{es}{t}\right)^{t/s}.
\end{align*}
Choosing $s=\Theta(\frac{t}{\log (tn)})$ such that $t/s$ is an integer, and using $t\geq {3\over n}$, a simple calculation yields that $\Pr[Z_{\ell}\geq t]=O\left( \frac{\log (tn)}{tn} \right)$. Note that  $|\tilde{f_i}-f_i|=\min (Z_1,Z_2)$. As $Z_1$ and $Z_2$ are independent, $\Pr[|\tilde{f_i}-f_i|\geq t]=O\left( \left(\frac{\log (tn)}{tn}\right)^2 \right)$, so 
\begin{align*}
\E[|\tilde{f_i}-f_i|]=\int_{0}^\infty \Pr[Z \geq t ] \, dt\leq \frac{3}{n}+O \left(\int_{3/n}^\infty \left(\frac{\log (tn)}{tn}\right)^2 \, dt \right)=O\left( \frac{1}{n} \right).
\end{align*}
\end{proof}
We can now prove the full statement of~\Cref{thm:simplecm}.
\begin{proof}[Proof of~\Cref{thm:simplecm}]
We start out by proving the upper bound. Let $N_1=[B]\setminus \{i\}$ and $N_2=[n]\setminus ([B] \cup \{i\})$. 
Let $b\in [k]$ be such that $\sum_{j\in N_1} f_j \cdot [h_b(j)=h_b(i)]$ is minimal. Note that $b$ is itself a random variable. We also define
\begin{align*}
Y_1&=\sum_{j\in N_1} f_j\cdot  [h_b(j)=h_b(i)], \text{ and } Y_2=\sum_{j\in N_2} f_j\cdot  [h_b(j)=h_b(i)].
\end{align*}
Then, $|\tilde f_i- f_i|\leq Y_1 +Y_2$. Using~\Cref{simpleanalysis}, we obtain that $\E[Y_1]=O(\frac{1}{B})$. For $Y_2$ we observe that
\begin{align*}
\E[Y_2\mid b]=\sum_{j\in N_2}  \frac{f_j}{B}= O\left( \frac{ \log \left( \frac{n}{B}\right)}{B}\right).
\end{align*}
We conclude that 
\begin{align*}
\E[|\tilde f_i- f_i|]\leq \E[Y_1]+\E[Y_2]=\E[Y_1]+\E[\E[Y_2 \mid b]] =O\left( \frac{ \log \left( \frac{n}{B}\right)}{B}\right).
\end{align*}
Next we prove the lower bound. We have already seen that the main contribution to the error comes from the tail of the distribution. As the tail of the distribution is relatively ``flat'' we can simply apply a concentration inequality to argue that with probability $\Omega(1)$, we have this asymptotic contribution for each of the $k$ hash functions. To be precise, for $j\in [n]$ and $\ell \in [k]$ we define $X_\ell^{(j)}=f_j \cdot \left([h_\ell(j)=h_\ell(i)]-\frac{1}{B} \right)$. Note that the variables $(X_\ell^{(j)})_{j\in [n]}$ are independent. We also define $S_\ell=\sum_{j\in N_2} X_\ell^{(j)}$ for $\ell\in [k]$. Observe that $|X_\ell^{(j)}|\leq f_j \leq \frac{1}{B}$ for $j \geq B$, $\E[X_\ell^{(j)}]=0$, and that
\begin{align*}
\Var[S_\ell]=\sum_{j\in N_2} f_j^2 \left(\frac{1}{B}-\frac{1}{B^2} \right)\leq  \frac{1}{B^2}.
\end{align*}
Applying Bennett's inequality(\Cref{thm:Bennett} of~\Cref{concentration}), with $\sigma^2= \frac{1}{B^2}$ and $M=1/B$ thus gives that 
\begin{align*}
\Pr[S_\ell\leq -t]\leq \exp\left(-h\left(tB \right)\right).
\end{align*}
Defining $W_\ell=\sum_{j\in N_2} f_j \cdot [h_\ell(j)=h_\ell(i)]$ it holds that $\E[W_\ell]=\Theta \left( \frac{ \log \left( \frac{n}{B}\right)}{B}\right)$ and $S_\ell =W_\ell-\E[W_\ell]$, so putting $t=\E[W_\ell]/2$ in the inequality above we obtain that 
\begin{align*}
\Pr[W_\ell\leq \E[W_\ell]/2]=\Pr[S_\ell\leq -\E[W_\ell]/2]\leq \exp\left(-h\left(\Omega\left( \log \frac{n}{B} \right) \right) \right).
\end{align*}
Appealing to~\Cref{asymptotics} and using that $B\leq n/k$ the above bound becomes
\begin{align}\label{bound}
\Pr[W_\ell\leq \E[W_\ell]/2]
&\leq \exp \left(- \Omega \left( \log \frac{n}{B} \cdot \log \left(\log \frac{n}{B}+1 \right) \right) \right) \nonumber \\
&= \exp(-\Omega (\log k \cdot \log ( \log k +1) ))=k^{-\Omega (\log (\log k+1))}.
\end{align}
By the independence of the events $(W_\ell> E[W_\ell]/2)_{\ell\in [k]}$, we have that
\begin{align*}
\Pr\left[|\tilde f_i- f_i|\geq \frac{\E[W_\ell]}{2} \right]\geq (1-k^{-\Omega (\log (\log k+1))})^k=\Omega(1),
\end{align*}
and so $\E[|\tilde f_i- f_i|]=\Omega(\E[W_\ell])=\Omega\left(\frac{\log\left( \frac{n}{B}\right)}{B}\right)$, as desired.
\end{proof}
\begin{remark}\label{remark:lowindependence} We have stated~\Cref{thm:simplecm} for truly random hash functions but it suffices with $O(\log B)$-independent hashing to prove the upper bound. Indeed, the only step in which we require high independence is in the union bound in~\Cref{simpleanalysis} over the $\binom{n}{t/s}$ subsets of $[n]$ of size $t/s$. To optimize the bound we had to choose $s=t/\log (tn)$, so that $t/s=\log(tn)$. As we only need to consider values of $t$ with $t\leq \sum_{i=1}^n f_i=O(\log n)$, in fact $t/s=O(\log n)$ in our estimates. Finally, we applied~\Cref{simpleanalysis} with $n=B$ so it follows that $O(\log B)$-independence is enough to obtain our upper bound.
\end{remark}

\section{(Nearly) Tight Bounds for Count-Sketch with Zipfians}\label{sec:countsketch}
In this section we proceed to analyze Count-Sketch for Zipfians either using a single or more hash functions. We start with two simple lemmas which for certain frequencies $(f_i)_{i \in [n]}$ of the items in the stream can be used to obtain respectively good upper and lower bounds on $\E[|\tilde f_i-f_i|]$ in Count-Sketch with a single hash function. We will use these two lemmas both in our analysis of standard and learned Count-Sketch for Zipfians.

\begin{lemma}\label{explemma}
Let $w=(w_1,\dots,w_n)\in \R^n$, $\eta_1,\dots,\eta_n$ Bernoulli variables taking value $1$ with probability $p$, and $\sigma_1,\dots ,\sigma_n\in \{-1,1\}$ independent Rademachers, i.e., $\Pr[\sigma_i=1]=\Pr[\sigma_i=-1]=1/2$. Let $S=\sum_{i=1}^n w_i \eta_i \sigma_i$. Then, $\E[|S|]= O \left(\sqrt{p} \|w\|_2 \right)$.
\end{lemma}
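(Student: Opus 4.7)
The plan is to bound $\E[|S|]$ by passing to the second moment via Jensen's inequality (equivalently, Cauchy--Schwarz), which reduces the statement to a standard variance calculation. Concretely, I would first observe that
\[
\E[|S|] \le \sqrt{\E[S^2]},
\]
so it suffices to show $\E[S^2] = p\,\|w\|_2^2$, from which the claimed bound $\E[|S|] = O(\sqrt{p}\,\|w\|_2)$ follows immediately (in fact with constant $1$).

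Next I would expand
\[
\E[S^2] = \sum_{i,j} w_i w_j\, \E[\eta_i \eta_j\, \sigma_i \sigma_j],
\]
and argue that the off-diagonal terms vanish. Since the $\sigma_i$ are independent of the $\eta_j$ and of each other, for $i \neq j$ we have $\E[\sigma_i \sigma_j] = \E[\sigma_i]\,\E[\sigma_j] = 0$, so by independence
\[
\E[\eta_i \eta_j \sigma_i \sigma_j] = \E[\eta_i \eta_j]\,\E[\sigma_i]\,\E[\sigma_j] = 0.
\]
For the diagonal terms, $\eta_i^2 = \eta_i$ and $\sigma_i^2 = 1$, so $\E[\eta_i^2 \sigma_i^2] = \E[\eta_i] = p$. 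Collecting, $\E[S^2] = p \sum_i w_i^2 = p\,\|w\|_2^2$, and the lemma follows.

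There is no real obstacle here: the proof is a one-line application of Jensen's inequality plus a routine second-moment computation that exploits the mean-zero and pairwise-independence structure of the Rademachers. The only thing worth being careful about is that the $\sigma_i$ are independent of the $\eta_j$ (not just of each other), which is what kills all cross terms; this is implicit in the statement that the $\sigma_i$ are independent Rademachers, and the $\eta_i$ appear only as an independent Bernoulli masking layer.
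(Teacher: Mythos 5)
Your proof is correct and follows exactly the same route as the paper's: apply Jensen's inequality to reduce to the second moment, then compute $\E[S^2] = p\,\|w\|_2^2$ by killing the cross terms using $\E[\sigma_i\sigma_j]=0$ for $i\neq j$. You simply spell out the second-moment expansion in more detail than the paper does.
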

\begin{proof}
Using that $\E[\sigma_i\sigma_j]=0$ for $i\neq j$ and Jensen's inequality $\E[|S|]^{2}\leq \E[S^2]=\E \left[\sum_{i=1}^n w_i^2\eta_i \right]=p \|w \|_2^2$, from which the result follows.
\end{proof}

\begin{lemma}\label{lowerexplemma}
Suppose that we are in the setting of~\Cref{explemma}. Let $I\subset [n]$ and let $w_I\in \R^n$ be defined by $(w_I)_i=[i\in I]\cdot w_i$. Then
\begin{align*}
\E[|S|]\geq \frac{1}{2}p\left(1-p \right)^{|I|-1} \|w_I\|_1.
\end{align*}
\end{lemma}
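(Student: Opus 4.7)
The target inequality is a lower bound on $\E[|S|]$ in terms of the $\ell_1$ norm of the restriction $w_I$, so the natural route is to isolate, one coordinate at a time, the contribution of a single $i\in I$ to $S$ under an event where all other $\eta_j$ with $j\in I$ vanish. Concretely, for each $i\in I$ I would define the event
\[
E_i \;=\; \{\eta_i = 1\}\cap\bigcap_{j\in I\setminus\{i\}}\{\eta_j = 0\},
\]
which has probability $p(1-p)^{|I|-1}$ by independence of the $\eta_j$'s. These events are pairwise disjoint because $E_i$ and $E_{i'}$ for $i\neq i'$ disagree on whether $\eta_i$ or $\eta_{i'}$ equals $1$, so $\sum_{i\in I}\mathbf{1}_{E_i}\leq 1$ pointwise.

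On $E_i$, the only term from $I$ contributing to $S$ is $w_i\sigma_i$, so I can write $S = w_i\sigma_i + S'_i$ where $S'_i = \sum_{j\notin I} w_j\eta_j\sigma_j$ is a random variable independent of $(\sigma_i, \mathbf{1}_{E_i})$ (since it depends only on $\eta_j,\sigma_j$ for $j\notin I$, none of which are involved in defining $E_i$ nor equal to $\sigma_i$).

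The next step is the following elementary ``symmetrization'' fact: if $X$ is any real random variable independent of a Rademacher $\sigma$, then for any $c\in\R$,
\[
\E[\,|c\sigma+X|\,]\;=\;\tfrac{1}{2}\E[\,|c+X|\,]+\tfrac{1}{2}\E[\,|{-}c+X|\,]\;=\;\tfrac{1}{2}\E[\,|c+X|+|c-X|\,]\;\ge\;|c|,
\]
using the triangle inequality $|c+X|+|c-X|\ge |(c+X)+(c-X)|=2|c|$. Applying this with $c=w_i$ and $X=S'_i$, conditional on $E_i$, yields $\E[|S|\mid E_i]\ge |w_i|$, and hence $\E[|S|\,\mathbf{1}_{E_i}]\ge |w_i|\Pr[E_i]= |w_i|p(1-p)^{|I|-1}$.

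Finally, since the $E_i$'s are disjoint, summing over $i\in I$ gives
\[
\E[|S|]\;\ge\;\sum_{i\in I}\E[|S|\,\mathbf{1}_{E_i}]\;\ge\;p(1-p)^{|I|-1}\sum_{i\in I}|w_i|\;=\;p(1-p)^{|I|-1}\|w_I\|_1,
\]
which is even slightly stronger than the claimed bound (the factor $1/2$ in the statement seems to be a conservative slack, possibly kept to accommodate a softer symmetrization step). There is no serious obstacle: the only subtlety is ensuring the independence between $\sigma_i$ and $S'_i$ after conditioning on $E_i$, which is immediate because $E_i$ is measurable with respect to $(\eta_j)_{j\in I}$ only. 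I would include a short sentence verifying this independence to keep the argument airtight.
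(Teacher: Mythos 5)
Your proof is correct and uses the same disjoint-event decomposition as the paper (your $E_i$ is exactly the paper's $A_i$, the event that $i$ is the unique index in $I$ with $\eta_i = 1$), but the step extracting $|w_i|$ from $\E[|S|\,\mathbf{1}_{E_i}]$ is genuinely different. The paper introduces an auxiliary ``sign agreement'' event $E=\{S_1,S_2 \text{ have the same sign or } S_2=0\}$, argues $\Pr[E\mid A_i]\geq 1/2$, and observes that on $A_i\cap E$ the two pieces reinforce so $|S|\geq |w_i|$; this is what produces the factor $\tfrac12$ in the stated bound. You instead invoke the elementary symmetrization identity $\E[|c\sigma + X|]=\tfrac12\E[|c+X|+|c-X|]\geq |c|$ for a Rademacher $\sigma$ independent of $X$, applied conditionally on $E_i$ with $c=w_i$ and $X=S'_i$. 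This is cleaner: it avoids the paper's slightly imprecise claim that ``$A_i$ and $E$ are independent'' (what is really needed, and what holds, is only $\Pr[E\mid A_i]\geq 1/2$), and it removes the factor $\tfrac12$ entirely, giving the strictly stronger conclusion $\E[|S|]\geq p(1-p)^{|I|-1}\|w_I\|_1$. The remark you flag about independence of $\sigma_i$ and $S'_i$ after conditioning on $E_i$ is indeed the right thing to verify, and it holds for exactly the reason you give: $E_i$ is measurable with respect to $(\eta_j)_{j\in I}$, which is independent of $\sigma_i$ and of $(\eta_j,\sigma_j)_{j\notin I}$.
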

\begin{proof}
Let $J=[n]\setminus I$, $S_1=\sum_{i\in I} w_i \eta_i \sigma_i$, and $S_2=\sum_{i\in J} w_i \eta_i \sigma_i$. Let $E$ denote the event that $S_1$ and $S_2$ have the same sign or $S_2=0$. Then $\Pr[E]\geq 1/2$ by symmetry. For $i\in I$ we denote by $A_i$ the event that $\{j\in I: \eta_j\neq 0\}=\{i\}$. Then $\Pr[A_i]=p(1-p)^{|I|-1}$ and furthermore $A_i$ and $E$ are independent.  If $A_i\cap E$ occurs, then $|S|\geq |w_i|$ and as the events $(A_i \cap E)_{i\in I}$ are disjoint it thus follows that $\E[|S|]\geq \sum_{i\in I} \Pr[A_i\cap E] \cdot|w_i|\geq  \frac{1}{2}p\left(1-p \right)^{|I|-1} \|w_I\|_1$.
\end{proof}
With these tools in hand, we proceed to analyse Count-Sketch for Zipfians with one and more hash functions in the next two sections.
\subsection{One hash function}
By the same argument as in the discussion succeeding~\Cref{thm:simplecm}, the following theorem yields the desired result for a single hash function as presented in~\Cref{tbl:results}.
\begin{theorem}\label{unlearnedcs1}
Suppose that $B\leq n$ and let $h:[n] \to [B]$ and $s:[n] \to \{-1,1\}$ be truly random hash functions. Define the random variable $\tilde{f_i}=\sum_{j\in [n]} [h(j)=h(i)]s(j)f_j$ for $i \in [n]$. 
Then 
\begin{align*}
\E[|\tilde f_i-s(i)f_i|]=\Theta \left( \frac{ \log B}{B} \right).
\end{align*}
\end{theorem}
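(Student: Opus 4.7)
The plan is to write the error as a signed random sum and split contributions into a ``heavy'' part (indices $j\leq B$) and a ``light'' part (indices $j>B$), handled by different tools: the triangle inequality for the heavy part and Lemma~\ref{explemma}/Lemma~\ref{lowerexplemma} for the light/heavy parts respectively. Concretely, observe that
\[
\tilde f_i - s(i) f_i \;=\; S \;:=\; \sum_{j\in[n]\setminus\{i\}} \eta_j\, s(j)\, f_j,
\qquad \eta_j := [h(j)=h(i)],
\]
where $(\eta_j)$ are i.i.d.\ Bernoulli with parameter $p=1/B$ and independent of the Rademacher signs $(s(j))$. Set $I_H = \{j\in[n]\setminus\{i\}:j\leq B\}$ and $I_L = \{j\in[n]:j>B\}$, and write $S=S_H+S_L$ correspondingly.

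For the upper bound, the triangle inequality immediately gives
\[
\E[|S_H|] \;\leq\; \sum_{j\in I_H} \Pr[\eta_j=1]\, f_j \;=\; \frac{1}{B}\sum_{j\in I_H} \frac{1}{j} \;=\; O\!\left(\frac{\log B}{B}\right).
\]
For the light part, I would apply Lemma~\ref{explemma} with the weight vector $(f_j)_{j\in I_L}$, using that for Zipfian frequencies $\|f_{I_L}\|_2^2=\sum_{j>B} 1/j^2 = \Theta(1/B)$, which yields $\E[|S_L|] = O(\sqrt{p}\cdot 1/\sqrt B) = O(1/B)$. Combining and using $\E[|S|]\leq \E[|S_H|]+\E[|S_L|]$ gives the desired $O(\log B/B)$ upper bound. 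The key conceptual point is that the naive variance bound $\sqrt{\Var(S)}=\Theta(1/\sqrt B)$ is much too loose, since it is driven by the rare ``heavy collision'' events; treating each heavy index individually via the triangle inequality is what recovers the correct $\log B/B$ order.

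For the lower bound, I would apply Lemma~\ref{lowerexplemma} with the same index set $I=I_H$ and $p=1/B$. Then $|I|\leq B$, so $(1-p)^{|I|-1}= (1-1/B)^{|I|-1} =\Theta(1)$, and $\|f_I\|_1 = \sum_{j\in I_H} 1/j = \Theta(\log B)$ (independent of whether $i\leq B$, since removing a single term only changes $H_B$ by a bounded amount). Hence
\[
\E[|S|] \;\geq\; \tfrac{1}{2}\, p\, (1-p)^{|I|-1}\, \|f_I\|_1 \;=\; \Omega\!\left(\frac{\log B}{B}\right),
\]
matching the upper bound and completing the proof.

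The main (mild) obstacle is simply choosing the right split point. The threshold $j=B$ is natural because it balances the two regimes: below $B$ the $\ell_1$-contribution dominates (and is sharp via Lemma~\ref{lowerexplemma}, which exploits the constant-probability event that exactly one heavy index collides with $i$), while above $B$ the $\ell_2$-contribution controlled by Lemma~\ref{explemma} is already a negligible $O(1/B)$. Once this split is made, everything reduces to the two off-the-shelf lemmas and standard harmonic-sum estimates, with no further concentration arguments required.
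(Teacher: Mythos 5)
Your proof is correct and follows essentially the same route as the paper: the same split at threshold $B$ into a heavy part (bounded by the triangle inequality) and a light part (bounded by Lemma~\ref{explemma}), with Lemma~\ref{lowerexplemma} applied to the heavy indices for the matching lower bound. The only cosmetic slip is that $I_L$ should be $\{j\in[n]\setminus\{i\}:j>B\}$, but this does not affect any estimate.
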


\begin{proof}
Let $i\in [n]$ be fixed. 
We start by defining $N_1=[B] \setminus\{i\}$ and $N_2=[n] \setminus ([B]\cup \{i\})$ and note that
\begin{align*}
|\tilde {f_i}-s(i)f_i|
 \leq \left|\sum_{j\in N_1}[h(j)=h(i)] s(j)f_j\right|+\left|\sum_{j\in N_2}[h(j)=h(i)] s(j)f_j\right|:=X_1+X_2.
\end{align*}
Using the triangle inequality $\E[X_1]\leq \frac{1}{B}\sum_{j \in N_1} f_j = O(\frac{\log B}{ B})$.
Also, by~\Cref{explemma}, $\E[X_2]=O\left( \frac{1}{B}\right)$ and combining the two bounds we obtain the desired upper bound.
For the lower bound we apply~\Cref{lowerexplemma} with $I=N_1$ concluding that
\begin{align*}
\E[|\tilde {f_i}-s(i)f_i|]\geq \frac{1}{2B} \left(1-\frac{1}{B} \right)^{|N_1|-1}\sum_{i\in N_1} f_i=\Omega \left(\frac{\log B}{B} \right).
\end{align*}

\end{proof}
\subsection{Multiple hash functions}
Let $k\in \N$ be odd. For a tuple $x=(x_1,\dots,x_k)\in \R^k$ we denote by $\median x$ the median of the entries of $x$. The following theorem immediately leads to the result on CS with $k\geq 3$ hash functions claimed in~\Cref{tbl:results}.

\begin{theorem}\label{unlearnedmultiCS}
Let $k\geq 3$ be odd, $n\geq kB$, and $h_1,\dots,h_k:[n] \to [B]$ and $s_1,\dots,s_k:[n] \to \{-1,1\}$ be truly random hash functions. Define~$\tilde{f_i}=\median_{\ell\in[k]}\left(\sum_{j\in [n]} [h_\ell(j)=h_\ell(i)]s_\ell(j)f_j\right)$ for $i \in [n]$. 
Assume that\footnote{This very mild assumption can probably be removed at the cost of a more technical proof. In our proof it can even be replaced by $k\leq B^{2-\eps}$ for any $\eps=\Omega(1)$.} $k\leq B$. 
Then 
\begin{align*}
\E[|\tilde {f_i}-s(i)f_i|]=\Omega\left(\frac{1}{B \sqrt k \log k} \right), \quad \text{and} \quad \E[|\tilde {f_i}-s(i)f_i|]=O\left(\frac{1}{B\sqrt{k}}\right)
\end{align*}
\end{theorem}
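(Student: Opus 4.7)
Let $Z_\ell := \sum_{j\ne i}[h_\ell(j)=h_\ell(i)]\,s_\ell(i)s_\ell(j)\,f_j$; the products $s_\ell(i)s_\ell(j)$ for $j\ne i$ are iid Rademachers, so each $Z_\ell$ is symmetric about $0$, the $Z_\ell$ are iid across $\ell$, and (interpreting the estimator as $\tilde f_i=\median_\ell s_\ell(i)\,C[\ell,h_\ell(i)]$) we have $\tilde f_i - f_i = \median_\ell Z_\ell$. The plan for both directions reduces to controlling the two-sided anti-concentration function $q(t):=\Pr[|Z_\ell|\le t]$ and invoking a Chernoff bound on the binomial $|\{\ell:Z_\ell>t\}|$: by symmetry of $Z_\ell$,
\[
\Pr[|\median_\ell Z_\ell|>t]\le 2\exp(-\Omega(k\,q(t)^2)),
\]
so integrating $\int_0^\infty\Pr[|\median|>t]\,dt$ delivers the upper bound, while inverting the inequality at a well-chosen $t^\star$ delivers the lower bound via $\E[|\median|]\ge t^\star\cdot\Pr[|\median|>t^\star]$.

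For the upper bound $O(1/(B\sqrt k))$, I split $Z_\ell=Z_\ell^{(1)}+Z_\ell^{(2)}$ with $Z_\ell^{(1)}$ aggregating items $j\in[B]\setminus\{i\}$ and $Z_\ell^{(2)}$ aggregating $j>B$. With probability $(1-1/B)^{B-1}=\Omega(1)$ no heavy item collides with $h_\ell(i)$, so $Z_\ell=Z_\ell^{(2)}$ in that event. The tail $Z_\ell^{(2)}$ is a Bernoulli-thinned Rademacher sum with variance $\Theta(1/B^2)$ whose summands are each bounded by $1/B$; a local-CLT / Berry-Esseen argument applied dyadic-slab-by-slab yields $\Pr[|Z_\ell^{(2)}|\le t]=\Omega(tB)$ for $t\in[1/B^{3/2},c/B]$, hence $q(t)=\Omega(tB)$. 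Plugging into the Chernoff bound gives $\Pr[|\median|>t]\le 2\exp(-\Omega(k(tB)^2))$, which integrates to $O(1/(B\sqrt k))$.

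For the lower bound $\Omega(1/(B\sqrt k\log k))$, I dually upper bound $q(t)$ via a Kolmogorov--Rogozin / Esseen-type concentration inequality. Because the top tail weight matches the standard deviation, naive Esseen is trivial; after removing an $O(\log B)$-size set of ``problematic'' large items (which only weakens the concentration function by monotonicity of $Q$ under summation), Esseen yields $q(t)\le O(tB\log B)$ for $t\le c/B$. Substituting this into the Chernoff estimate shows $\Pr[|\median|<t]\le 1/2$ at $t^\star=\Omega(1/(B\sqrt k\log k))$ (using $k\le B$ to align the scales), and the lower bound on the expectation follows.

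The main technical obstacle is the two-sided control of $q(t)$, and especially the anti-concentration (lower bound on $q(t)$) for $Z_\ell^{(2)}$. The heterogeneity of $f_j=1/j$ together with the top tail weight $f_{B+1}\approx 1/B$ matching the standard deviation prevents a black-box Berry-Esseen bound from being useful: its error overwhelms the target probability $\Omega(tB)$. I would circumvent this by restricting to dyadic slabs $(2^aB,2^{a+1}B]$ of nearly equal weights $\approx 2^{-a}/B$, where after Bernoulli thinning the sub-sum has $\Theta(2^a)$ active terms in expectation and admits a clean local-CLT / Erd\H{o}s--Littlewood--Offord estimate; aggregating contributions across slabs using independence produces the density $\Omega(B)$ at zero needed for the upper-bound estimate of $\E[|\median|]$. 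The standing assumption $k\le B$ keeps the target threshold $t=1/(B\sqrt k)$ above $1/B^{3/2}$, safely inside the regime where these CLT-based approximations remain valid.
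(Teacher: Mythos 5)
Your overall strategy---control the two-sided concentration function $q(t)=\Pr[|Z_\ell|\le t]$ of a single CountSketch estimate, then feed this into a binomial/Chernoff bound for the median---is the same framework the paper uses. But both your directions have gaps.

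\textbf{Upper bound.} The inequality $\Pr[|\median_\ell Z_\ell|>t]\le 2\exp(-\Omega(k\,q(t)^2))$ does \emph{not} integrate to $O(1/(B\sqrt k))$, because your anti-concentration estimate $q(t)=\Omega(tB)$ is only valid for $t\le c/B$. Beyond that scale, $q(t)$ saturates (it is bounded by $1$), so the Chernoff exponent plateaus at $\exp(-\Omega(k))$, and since $|Z_\ell|$ can be as large as $H_n=\Theta(\log n)$, the contribution from $t>c/B$ is $\Theta(\log n)\cdot\exp(-\Omega(k))$. For small $k$ (even $k=3$, which the theorem covers) this dominates and destroys the $O(1/(B\sqrt k))$ bound. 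The paper handles this regime with a genuinely different mechanism: a pointwise tail bound $\Pr[|X^{(\ell)}|\ge t]=O(\log(tB)/(tB))$ obtained from the CountMin analysis plus Bennett's inequality, and then a union bound over which half of the $k$ copies exceed $t$, giving a tail integral of $O(1/(B2^k))$. You need something like this; your local-CLT argument for $q(t)$ near the origin (the paper uses Minton--Price, Lemma~\ref{lemma:minton}, for this part) does not subsume it.

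\textbf{Lower bound.} Your Esseen / Kolmogorov--Rogozin route, after dropping $O(\log B)$ large items, is claimed to yield $q(t)\le O(tB\log B)$. But the target lower bound carries a $\log k$, not a $\log B$: you need $q(t^\star)\le 1/2$ at $t^\star=\Theta(1/(B\sqrt k\log k))$, and your bound gives $q(t^\star)\lesssim \log B/(\sqrt k\log k)$, which is not $\le 1/2$ when $k$ is small relative to $B$ (the theorem allows any odd $3\le k\le B$). The paper instead uses a purely combinatorial anti-concentration argument tailored to the Zipfian frequencies: it restricts attention to the top $B\log k$ items, conditions on the number of collisions $|J|$ in that range, and uses that the frequencies $\{1/j:\ j\le B\log k\}$ are pairwise at distance $\ge 1/(B\log k)^2$, so at most $B\log k/\sqrt k+1$ choices of the last sampled item can land the sum in an interval of width $1/(B\sqrt k\log k)$. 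This is what produces $\log k$ rather than $\log B$; a generic concentration-function inequality applied to the whole sum does not see this structure.

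So the framework is the same, but both the tail integral in the upper bound and the exact logarithmic factor in the lower bound require the more structured arguments the paper supplies.
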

The assumption $n\geq kB$ simply says that the total number of buckets is upper bounded by the number of items. Again using linearity of expectation for the summation over $i\in [n]$ and replacing $B$ by $B/k$ we obtain the claimed upper and lower bounds of $\frac{\sqrt k}{B \log k}$ and $\frac{\sqrt k}{B}$ respectively. 
We note that even if the bounds above are only tight up to a factor of $\log k$ they still imply that it is asymptotically optimal to choose $k=O(1)$, e.g. $k=3$. To settle the correct asymptotic growth is thus of merely theoretical interest.

In proving the upper bound in~\Cref{unlearnedmultiCS}, we will use the following result by Minton and Price (Corollary 3.2 of~\cite{minton2014improved}) proved via an elegant application of the Fourier transform.
\begin{lemma}[Minton and Price~\cite{minton2014improved}]\label{lemma:minton}
Let $\{X_i: i \in [n]\}$ be independent symmetric random variables such that $\Pr[X_i=0]\geq 1/2$ for each $i$. Let $X=\sum_{i=1}^n X_i$ and $\sigma^2=\E[X^2]=\Var[X]$. For $\eps<1$ it holds that $\Pr[|X|<\eps \sigma]=\Omega(\eps)$
\end{lemma}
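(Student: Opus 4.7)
The plan is to argue via characteristic functions, as the excerpt's parenthetical hints. Let $\phi_X(t) = \E[e^{itX}]$. The crucial first step is to show that $\phi_X(t) \geq 0$ for every $t \in \mathbb{R}$. For each $i$, symmetry of $X_i$ makes $\phi_{X_i}(t) = \E[\cos(tX_i)]$ real; writing $p_i = \Pr[X_i = 0] \geq 1/2$, one has
\[
\phi_{X_i}(t) = p_i + (1 - p_i)\cdot\E\!\left[\cos(tX_i) \mid X_i \neq 0\right] \geq 2p_i - 1 \geq 0.
\]
By independence, $\phi_X = \prod_i \phi_{X_i} \geq 0$. I will also record the standard bound $\phi_X(t) \geq 1 - \sigma^2 t^2/2$ (from $\cos(y) \geq 1 - y^2/2$ pointwise inside the expectation), which gives $\phi_X(t) \geq 1/2$ on the interval $|t| \leq 1/\sigma$.

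Next, I would fix a nonnegative bump $g : \mathbb{R} \to [0,1]$ with $g \leq \mathbf{1}_{(-1,1)}$, $g(0) > 0$, and with nonnegative Fourier transform $\hat g$ satisfying $\hat g(0) = \int g > 0$; the tent function $g(x) = \max(0, 1-|x|)$, whose Fourier transform is proportional to $\mathrm{sinc}^2$, is a concrete choice. Fourier inversion applied to $g$ combined with a change of variables yields
\[
\Pr[|X| < \eps \sigma] \;\geq\; \E\!\left[g\!\left(\tfrac{X}{\eps\sigma}\right)\right] \;=\; \frac{\eps\sigma}{2\pi}\int_{\mathbb{R}} \hat g(\eps\sigma t)\,\phi_X(t)\,dt.
\]
Because both $\hat g$ and $\phi_X$ are nonnegative everywhere, I can simply restrict the integral to $|t| \leq 1/\sigma$ and lower-bound there. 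On this interval, $\phi_X(t) \geq 1/2$; choosing $c_0 > 0$ so that $\hat g(\xi) \geq \hat g(0)/2$ for $|\xi| \leq c_0$, and assuming $\eps \leq c_0$, we also have $\hat g(\eps\sigma t) \geq \hat g(0)/2$ on $|t| \leq 1/\sigma$. Thus the restricted integral is $\Omega(1/\sigma)$, and multiplying by the prefactor $\eps\sigma/(2\pi)$ gives $\Pr[|X| < \eps\sigma] = \Omega(\eps)$. The remaining range $c_0 < \eps < 1$ follows immediately from monotonicity of the probability in $\eps$ together with the bound just proved at $\eps = c_0$.

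The main obstacle is the nonnegativity of $\phi_X$. Without it, the Parseval integral is oscillatory and the useful local bound $\phi_X(t) \geq 1/2$ near the origin would not translate into a lower bound on the small-ball probability: cancellations from the range $|t| > 1/\sigma$ could in principle wipe out the contribution near zero. The two structural hypotheses---symmetry of each $X_i$ and an atom of mass $\geq 1/2$ at $0$---are used \emph{only} in the first step, and they are precisely what forces every $\phi_{X_i}$, and hence $\phi_X$, to be pointwise nonnegative, which is what unlocks the Fourier argument.
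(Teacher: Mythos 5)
Your proof is correct, and it is indeed the "elegant application of the Fourier transform" the paper alludes to when citing Minton and Price (the paper itself only cites Corollary~3.2 of that work and gives no proof). The key steps all check out: the hypotheses of symmetry and $\Pr[X_i=0]\ge 1/2$ give $\phi_{X_i}(t)=p_i+(1-p_i)\E[\cos(tX_i)\mid X_i\neq 0]\ge 2p_i-1\ge 0$, hence $\phi_X\ge 0$ by independence; the pointwise bound $\cos y\ge 1-y^2/2$ gives $\phi_X(t)\ge 1-\sigma^2t^2/2\ge 1/2$ on $|t|\le 1/\sigma$; the tent function has nonnegative, integrable Fourier transform $\hat g(\xi)=4\sin^2(\xi/2)/\xi^2$ so both inversion and Fubini are justified; and the two nonnegativities are exactly what license discarding the contribution from $|t|>1/\sigma$, yielding the $\Omega(\eps)$ bound for $\eps\le c_0$, with the range $c_0<\eps<1$ absorbed by monotonicity. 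This matches the structure of the Minton--Price argument and is a complete and correct proof of the cited lemma.
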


\begin{proof}[Proof of~\Cref{unlearnedmultiCS}]
If $B$ (and hence $k$) is a constant, then the results follow easily from~\Cref{explemma}, so in what follows we may assume that $B$ is larger than a sufficiently large constant. We subdivide the exposition into the proofs of the upper and lower bounds.

\paragraph{Upper bound} 
Define $N_1=[B] \setminus \{i\}$ and $N_2=[n]\setminus ([B]\cup \{i\})$. Let for $\ell \in [k]$, $X_1^{(\ell)}=\sum_{j\in N_1}[h_\ell(j)=h_\ell(i)]s_\ell(j)f_j$ and $X_2^{(\ell)}=\sum_{j\in N_2} [h_\ell(j)=h_\ell(i)]s_\ell(j)f_j$ and let $X^{(\ell)}=X_1^{(\ell)}+X_2^{(\ell)}$.

As the absolute error in Count-Sketch with one pair of hash functions $(h,s)$ is always upper bounded by the corresponding error in Count-Min with the single hash function $h$, we can use the bound in the proof of~\Cref{simpleanalysis} to conclude that $\Pr[|X_1^{(\ell)}|\geq t] = O(\frac{\log (tB)}{tB})$, 
when $t\geq 3/B$. Also $\Var[X_2^{(\ell)}] = (\frac{1}{B}-\frac{1}{B^2}) \sum_{j\in N_2}f_j^2\leq \frac{1}{B^2}$,
so by Bennett's inequality (\Cref{thm:Bennett}) with $M=1/B$ and $\sigma^2=1/B^2$ and~\Cref{asymptotics},
\begin{align*}
\Pr[|X_2^{(\ell)}|\geq t]\leq 2\exp \left(-h(tB)\right)\leq 2\exp \left(-\frac{1}{2} tB \log \left(tB+1\right)\right)=O \left( \frac{\log (tB)}{tB} \right),
\end{align*}
for $t\geq \frac{3}{B}$. It follows that for $t\geq 3/B$, 
\begin{align*}
\Pr[|X^{(\ell)}|\geq 2t]\leq \Pr[(|X_1^{(\ell)}|\geq t)] +\Pr(|X_2^{(\ell)}|\geq t)]=O \left( \frac{\log (tB)}{tB} \right).
\end{align*}
Let $C$ be the implicit constant in the $O$-notation above. If $|\tilde{f_i}-s(i)f_i|\geq 2t$, at least half of the values $(|X^{(\ell)}|)_{\ell \in [k]}$ are at least $2t$. For $t\geq 3/B$ it thus follows by a union bound that 
\begin{align}\label{eq:tbig}
\Pr[|\tilde{f_i}-s(i)f_i|\geq 2t ]\leq 2 \binom{k}{\lceil k/2 \rceil}\left( C\frac{\log (tB)}{tB} \right)^{\lceil k/2\rceil}\leq 2 \left( 4C\frac{\log (tB)}{tB} \right)^{\lceil k/2\rceil}.
\end{align}
If $\alpha =O(1)$ is chosen sufficiently large it thus holds that
\begin{align*}
\int_{\alpha/B}^\infty \Pr[|\tilde{f_i}-s(i)f_i|\geq t ] \, dt
&= 2\int_{\alpha/(2B)}^\infty \Pr[|\tilde{f_i}-s(i)f_i|\geq 2t ] \, dt \\
&\leq \frac{4}{B}\int_{\alpha/2}^\infty \left( 4C\frac{\log (t)}{t} \right)^{\lceil k/2\rceil} \, dt \\
&\leq \frac{1}{B2^{k}}\leq  \frac{1}{B\sqrt{k}}.
\end{align*}
Here the first inequality uses~\cref{eq:tbig} and a change of variable. The second inequality uses that $\left(4C \frac{\log t}{t}\right)^{\lceil k/2 \rceil}\leq (C'/t)^{2k/5}$ for some constant $C'$ followed by a calculation of the integral.
Now,
$$
\E[|\tilde{f_i}-s(i)f_i|]=\int_{0}^\infty \Pr[|\tilde{f_i}-s(i)f_i|\geq t ] \, dt,
$$
 so for our upper bound it therefore suffices to show that $\int_{0}^{\alpha/B} \Pr[|\tilde{f_i}-s(i)f_i|\geq t ] \, dt=O\left( \frac{1}{B \sqrt{k}}\right)$. For this we need the following claim:
\begin{claim}\label{intervalclaim}
Let $I\subset \R$ be the closed interval centered at the origin of length $2t$, i.e., $I=[-t,t]$. Suppose that $0< t\leq \frac{1}{2B}$. For $\ell \in [k]$, $\Pr[X^{(\ell)}\in I]= \Omega (tB)$.
\end{claim}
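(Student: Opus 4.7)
The plan is to isolate the contribution to $X^{(\ell)}$ coming from the light items in $N_2$---which is concentrated near zero with strong anti-concentration---and to condition on a constant-probability event that kills the heavy items' contribution. Decompose $X^{(\ell)}=W+Z$, where $W=\sum_{j\in N_1}[h_\ell(j)=h_\ell(i)]\,s_\ell(j)\,f_j$ collects the heavy items and $Z=\sum_{j\in N_2}[h_\ell(j)=h_\ell(i)]\,s_\ell(j)\,f_j$ the light ones, and let $E$ be the event that $h_\ell(j)\neq h_\ell(i)$ for every $j\in N_1$. On $E$ we have $W=0$ and thus $X^{(\ell)}=Z$.

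I would first estimate $\Pr[E]$ and the variance of $Z$. Since the hash values of the heavy items are independent of $h_\ell(i)$, $\Pr[E]=(1-1/B)^{|N_1|}\geq (1-1/B)^{B}=\Omega(1)$. For $Z$, each summand $Y_j:=[h_\ell(j)=h_\ell(i)]\,s_\ell(j)\,f_j$ is independent, symmetric about $0$, and equals $0$ with probability $1-1/B\geq 1/2$; moreover $\sigma_Z^{2}:=\Var[Z]=\frac{1}{B}\sum_{j\in N_2}f_j^{2}=\frac{1}{B}\sum_{j=B+1}^{n}\frac{1}{j^{2}}=\Theta(1/B^{2})$ using the Zipfian frequencies and the assumption $n\geq kB\geq 3B$, so $\sigma_Z=\Theta(1/B)$.

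I would then apply Minton--Price (\Cref{lemma:minton}) to $Z$: for every $\epsilon\in(0,1)$, $\Pr[|Z|<\epsilon\sigma_Z]=\Omega(\epsilon)$. Given $t\leq 1/(2B)$ and $B$ above a sufficiently large constant, set $\epsilon=t/\sigma_Z=\Theta(tB)\in(0,1)$ to conclude $\Pr[|Z|\leq t]=\Omega(tB)$. Finally, $E$ depends only on $\{h_\ell(j)\}_{j\in N_1}$ together with $h_\ell(i)$, while $Z$ depends only on $\{h_\ell(j),s_\ell(j)\}_{j\in N_2}$ together with $h_\ell(i)$, so $E$ and $\{|Z|\leq t\}$ are conditionally independent given $h_\ell(i)$, and each of their conditional probabilities is invariant in the value of $h_\ell(i)$. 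Unconditioning yields $\Pr[X^{(\ell)}\in I]\geq \Pr[E\cap\{|Z|\leq t\}]=\Pr[E]\cdot\Pr[|Z|\leq t]=\Omega(tB)$, as claimed.

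The main obstacle, and the key conceptual step, is recognizing that applying Minton--Price directly to $X^{(\ell)}$ (whose variance is $\Theta(1/B)$) only yields the weaker anti-concentration $\Omega(t\sqrt{B})$, which would translate into an $O(1/\sqrt{kB})$ bound in~\Cref{unlearnedmultiCS} rather than the desired $O(1/(B\sqrt{k}))$. Separating heavy and light items and paying a constant-probability factor to zero out the heavy contribution reduces the effective variance by a factor of $B$, which is exactly the improvement needed to upgrade the anti-concentration to $\Omega(tB)$ and hence to obtain the correct final error estimate.
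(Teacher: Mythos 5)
Your proposal is correct and follows essentially the same route as the paper's proof: both split $X^{(\ell)}$ into the heavy contribution (which vanishes with constant probability since no $j\in N_1$ collides with $i$) and the light contribution (which has standard deviation $\Theta(1/B)$), then apply the Minton--Price anti-concentration lemma to the light part and combine via (conditional) independence. The student's write-up is somewhat more careful in spelling out the conditional-independence-given-$h_\ell(i)$ step, but this is the same argument the paper has in mind.
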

\begin{proof}
Note that $\Pr[X_1^{(\ell)}=0]\geq \Pr[\bigwedge_{j\in N_1} (h_{\ell}(j)\neq h_{\ell}(i))] = (1-{1\over B})^{N_1}=\Omega(1)$. 
Secondly $\Var[X_2^{(\ell)}]= (\frac{1}{B}-\frac{1}{B^2})\sum_{j \in N_2}f_j^2\leq \frac{1}{B^2}$.
Using that $X_1^{(\ell)}$ and $X_2^{(\ell)}$ are independent and~\Cref{lemma:minton} with $\sigma^2=\Var[X_2^{(\ell)}]$, it follows that $\Pr[X^{(\ell)}\in I]=\Omega \left(\Pr[X_2^{(\ell)}\in I] \right)=\Omega(tB)$.
\end{proof}
Let us now show how to use the claim to establish the desired upper bound. For this let $0<t \leq \frac{1}{2B}$ be fixed. If $|\tilde{f_i}-s(i)f_i|\geq t$, at least half of the values $(X^{(\ell)})_{\ell \in [k]}$ are at least $t$ or at most $-t$. Let us focus on bounding the probability that at least half are at least $t$, the other bound being symmetric giving an extra factor of $2$ in the probability bound. By symmetry and~\Cref{intervalclaim}, $\Pr[X^{(\ell)}\geq t]=\frac{1}{2}-\Omega(tB)$. For $\ell\in [k]$ we define $Y_{\ell}=[X^{(\ell)}\geq t]$, and we put $S=\sum_{\ell \in [k]}Y_\ell$. Then $\E[S]=k \left(\frac{1}{2}-\Omega(tB) \right)$. If at least half of the values $(X^{(\ell)})_{\ell \in [k]}$ are at least $t$ then $S\geq k/2$. By Hoeffding's inequality (\Cref{thm:Hoeffding}) we can bound the probability of this event by
\begin{align*}
\Pr[S\geq k/2]=\Pr[S-\E[S]=\Omega(ktB)]=\exp(-\Omega(kt^2B^2)).
\end{align*}
It follows that $\Pr[|\tilde{f_i}-s(i)f_i|\geq t]\leq 2\exp(-\Omega(kt^2B^2))$. Thus 
\begin{align*}
\int_{0}^{\alpha/B} \Pr[|\tilde{f_i}-s(i)f_i|\geq t ] \, dt 
&\leq \int_{0}^{\frac{1}{2B}} 2\exp(-\Omega(kt^2B^2)) \, dt+ \int_{\frac{1}{2B}}^{\alpha/B}2\exp(-\Omega(k)) \,dt \\
&\leq\frac{1}{B \sqrt{k}}\int_0^{\sqrt{k}/2} \exp(-t^2)\, dt+\frac{2\alpha \exp(-\Omega(k))}{B}=O\left(\frac{1}{B\sqrt{k}}\right).
\end{align*}
Here the second inequality used a change of variable. The proof of the upper bound is complete.

\paragraph{Lower Bound} 
Fix $\ell \in [k]$ and let $M_1=[B \log k]\setminus \{i\}$ and $M_2=[n]\setminus ([B \log k] \cup \{i\})$. Write
\begin{align*}
S:=\sum_{j\in M_1} [h_\ell(j)=h_\ell(i)]s_\ell(j)f_j
+\sum_{j\in M_2} [h_\ell(j)=h_\ell(i)]s_\ell(j)f_j 
:= S_1+S_2.
\end{align*}
We also define $J:=\{j\in M_1: h_\ell(j)=h_{\ell}(i) \}$. Let $I\subseteq \R$ be the closed interval around $s_\ell(i)f_i$ of length $\frac{1}{ B \sqrt{k}\log k}$. We now upper bound the probability that $S\in I$ conditioned on the value of $S_2$. To ease the notation, the conditioning on $S_2$ has been left out in the notation to follow. Note first that
\begin{align}\label{eq:somethingtobound}
\Pr[S\in I ]=\sum_{r=0}^{|M_1|} \Pr[S\in I \mid |J|=r] \cdot \Pr[|J|=r].
\end{align}
For a given $r\geq 1$ we now proceed to bound $\Pr[S\in I \mid |J|=r]$. This probability is the same as the probability that $S_2+\sum_{j\in R} \sigma_jf_j\in I$, where $R\subseteq M_1$ is a uniformly random $r$-subset and the $\sigma_j$'s are independent Rademachers. Suppose that we sample the elements from $R$ as well as the corresponding signs $(\sigma_i)_{i \in R}$ sequentially, and let us condition on the values and signs of the first $r-1$ sampled elements. At this point at most $\frac{B \log k}{\sqrt{k}}+1 $ possible samples for the last element in $R$ can cause that $S \in I$. Indeed, the minimum distance between distinct elements of $\{f_j: j \in M_1\}$ is at least $1/(B\log k)^2$ and furthermore $I$ has length $\frac{1}{B\sqrt{k} \log k}$. Thus, at most
\begin{align*}
 \frac{1}{ B \sqrt{k}\log k} \cdot (B\log k)^2+1 = \frac{B \log k}{\sqrt{k}}+1
\end{align*}
 choices for the last element of $R$ ensure that $S\in I$.  For $1\leq r \leq (B \log k) /2$ we can thus upper bound
\begin{align*}
\Pr[S\in I \mid |J|=r]\leq \frac{\frac{B \log k}{\sqrt{k}}+1}{|M_1| -r+1}\leq \frac{2}{\sqrt{k}}+\frac{2}{B \log k}\leq \frac{3}{\sqrt{k}}.
\end{align*}
Note that $\mu:=\E[|J|]\leq \log k$ so for $B\geq 6$, it holds that 
\begin{align*}
\Pr[|J| \geq  (B \log k) /2]\leq \Pr\left[|J|\geq \mu\frac{B}{2}\right]\leq \Pr\left[|J|\geq \mu \left(1+\frac{B}{3}\right)\right]\leq  \exp \left( -\mu h(B/3) \right)=k^{-\Omega (h(B/3))},
\end{align*}
where the last inequality follows from the Chernoff bound of~\Cref{thm:Chernoff}. Thus, if we assume that $B$ is larger than a sufficiently large constant, then $\Pr[|J|\geq B \log k /2]\leq k^{-1}$. Finally, $\Pr[|J|=0]=(1-1/B)^{B\log k}\leq k^{-1}$. Combining the above, we can continue the bound in~\eqref{eq:somethingtobound} as follows. 
\begin{align}
\Pr[S\in I ] 
\leq &  \Pr[|J|=0]+\sum_{r=1}^{(B \log k)/2}   \Pr[S\in I \mid |J|=r] \cdot \Pr[|J|=r] \nonumber\\
+&\sum_{r=(B \log k) /2+1}^{|M_1|} \Pr[|J|=r]= O \left(\frac{1}{\sqrt{k}} \right), \label{eq:SinI}
\end{align}
which holds even after removing the conditioning on $S_2$. We now show that with probability $\Omega (1)$ at least half the values $(X^{(\ell)})_{\ell \in [k]}$ are at least $\frac{1}{2B \sqrt k \log k}$. Let $p_0$ be the probability that $X^{(\ell)}\geq \frac{1}{2B \sqrt k \log k}$. This probability does not depend on $\ell \in [k]$ and by symmetry and~\eqref{eq:SinI}, $p_0=1/2-O(1/\sqrt{k})$. Define the function $f:\{0,\dots,k\} \to \R$ by
\begin{align*}
f(t)=\binom{k}{t} p_0^t (1-p_0)^{k-t}.
\end{align*}
Then $f(t)$ is the probability that exactly $t$ of the values $(X^{(\ell)})_{\ell \in [k]}$ are at least $\frac{1}{B \sqrt k \log k}$. Using that $p_0=1/2-O(1/\sqrt{k})$, a simple application of Stirling's formula gives that $f(t)=\Theta \left( \frac{1}{\sqrt{k}} \right)$ for $t=\lceil k/2\rceil,\dots,\lceil k/2+\sqrt{k}\rceil$ when $k$ is larger than some constant $C$. 
It follows that with probability $\Omega(1)$ at least half of the $(X^{(\ell)})_{\ell \in [k]}$ are at least $\frac{1}{B \sqrt k \log k}$ and in particular 
\begin{align*}
\E[|\tilde {f_i}-f_i|]=\Omega\left(\frac{1}{B \sqrt k \log k} \right).
\end{align*}
Finally we handle the case where $k\leq C$. It follows from simple calculations (e.g., using~\Cref{lowerexplemma}) that $X^{(\ell)}=\Omega(1/B)$ with probability $\Omega(1)$. Thus this happens for all $\ell \in [k]$ with probability $\Omega(1)$ and in particular $\E[|\tilde {f_i}-f_i|]= \Omega(1/B)$, which is the desired for constant $k$.
\end{proof}

\section{Learned Count-Sketch for Zipfians}\label{sec:learnedcountsketch}
We now proceed to analyze the learned Count-Sketch algorithm. In~\Cref{sec:learned1} we estimate the expected error when using a single hash function and in~\Cref{sec:learnedk} we show that the expected error only increases when using more hash functions. Recall that we assume that the number of buckets $B_h$ used to store the heavy hitters that $B_h=\Theta(B-B_h)=\Theta(B)$. 
\subsection{One hash function}\label{sec:learned1}
By taking $B_1=B_h=\Theta(B)$ and $B_2=B-B_h=\Theta(B)$ in the theorem below, the result on L-CS for $k=1$ claimed in~\Cref{tbl:results} follows immediately. 

\begin{theorem}\label{thm:lcs1}
Let $h:[n]\setminus [B_1] \to [B_2]$ and $s:[n] \to \{-1,1\}$ be truly random hash functions where $n,B_1,B_2 \in \N$ and\footnote{The first inequality is the standard assumption that we have at least as many items as buckets. The second inequality says that we use at least as many buckets for non-heavy items as for heavy items (which doesn't change the asymptotic space usage).} $n-B_1\geq B_2\geq B_1$. Define the random variable $\tilde{f_i}=\sum_{j=B_1+1}^n [h(j)=h(i)]s(j)f_j$ for $i \in [n]\setminus [B_1]$. Then 
\begin{align*}
\E[|\tilde f_i-s(i)f_i|]=\Theta  \left(\frac{\log \frac{B_2+B_1}{B_1}}{B_2} \right)
\end{align*}
\end{theorem}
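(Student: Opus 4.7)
The plan is to decompose the error $\tilde f_i - s(i) f_i = \sum_{j > B_1,\, j \neq i}[h(j)=h(i)]\,s(j)\,f_j$ into a ``heavy'' piece over $H := \{B_1+1,\ldots,B_1+B_2\}\setminus\{i\}$ and a ``light'' piece over $L := \{B_1+B_2+1,\ldots,n\}$. Intuitively, $H$ has only $\Theta(B_2)$ items, each landing in bucket $h(i)$ with probability $1/B_2$, so sign cancellation is limited and the triangle inequality is essentially tight; while $L$ has many items of very small frequency, so the Rademacher signs produce cancellation best captured by an $\ell_2$ bound.

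For the upper bound, I would bound the heavy contribution by the triangle inequality,
\begin{align*}
\E\Bigl[\Bigl|\sum_{j \in H}[h(j)=h(i)]s(j)f_j\Bigr|\Bigr] \le \frac{1}{B_2}\sum_{j=B_1+1}^{B_1+B_2}\frac{1}{j} = \Theta\Bigl(\tfrac{\log((B_1+B_2)/B_1)}{B_2}\Bigr),
\end{align*}
using a harmonic-sum estimate. For the light contribution I would apply \Cref{explemma} with $p=1/B_2$ and $w_j=f_j$ to get
\begin{align*}
\E\Bigl[\Bigl|\sum_{j \in L}[h(j)=h(i)]s(j)f_j\Bigr|\Bigr] = O\Bigl(\tfrac{1}{\sqrt{B_2}}\sqrt{\textstyle\sum_{j>B_1+B_2} 1/j^2}\Bigr) = O\Bigl(\tfrac{1}{\sqrt{B_2(B_1+B_2)}}\Bigr) = O(1/B_2),
\end{align*}
where the last step uses $B_1+B_2 \ge B_2$. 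Since the assumption $B_2 \ge B_1$ gives $\log((B_1+B_2)/B_1) \ge \log 2$, the light piece is absorbed into the heavy piece, yielding the claimed upper bound.

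For the lower bound, I would invoke \Cref{lowerexplemma} with $I = H$, noting $|I|\le B_2$ so $(1 - 1/B_2)^{|I|-1} = \Omega(1)$, and
\begin{align*}
\|w_I\|_1 \;=\; \sum_{j \in H} f_j \;=\; \Theta\bigl(\log((B_1+B_2)/B_1)\bigr).
\end{align*}
This immediately gives $\E[|\tilde f_i - s(i)f_i|] = \Omega(\log((B_1+B_2)/B_1)/B_2)$. Note that \Cref{lowerexplemma} only lower-bounds the magnitude of the $H$-sum conditionally on $L$, but since the two sums are independent and have equal sign with probability at least $1/2$, the argument in \Cref{lowerexplemma} extends to the full error term without modification.

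The only mild technicalities are the harmonic-sum asymptotic $\sum_{j=B_1+1}^{B_1+B_2} 1/j = \Theta(\log((B_1+B_2)/B_1))$ (which holds uniformly in the regime $B_2 \ge B_1$) and verifying that removing the single index $i$ from $H$ changes $\|w_H\|_1$ by only an additive $O(1/B_1) = O(\log((B_1+B_2)/B_1)/B_2)$ factor whenever $i \le B_1 + B_2$, which is absorbed; for $i > B_1+B_2$ no term is removed at all. No step appears to require a genuinely hard concentration argument.
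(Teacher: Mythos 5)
Your proof is correct and follows essentially the same route as the paper: the same decomposition into $N_1=[B_1+B_2]\setminus([B_1]\cup\{i\})$ and $N_2=[n]\setminus([B_1+B_2]\cup\{i\})$ (your $H$ and $L$), the triangle inequality for the heavy part plus \Cref{explemma} for the light part on the upper bound, and \Cref{lowerexplemma} with $I=N_1$ for the lower bound. Your closing remark about extending \Cref{lowerexplemma} to the full error term is already built into that lemma's statement and proof, so it needs no additional justification.
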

\begin{proof}
Let $N_1=[B_1+B_2]\setminus ([B_1]\cup \{i\})$ and $N_2=[n]\setminus ([B_1+B_2]\cup \{i\})$.
Let $X_1=\sum_{j\in N_1} [h(j)=h(i)]s(j)f_j$ and $X_2=\sum_{j\in N_2} [h(j)=h(i)]s(j)f_j$.
By the triangle inequality and linearity of expectation, 
\begin{align*}
\E[|X_1|]= O \left(\frac{\log \frac{B_2+B_1}{B_1}}{B_2} \right).
\end{align*}
Moreover, it follows directly from~\Cref{explemma} that  $\E\left[|X_2| \right]=O\left( \frac{1}{B_2} \right)$. Thus 
\begin{align*}
\E[|\tilde f_i-s(i)f_i|]\leq \E[|X_1|]+\E[|X_2|]=O \left(\frac{\log \frac{B_2+B_1}{B_1}}{B_2} \right),
\end{align*}
as desired. For the lower bound on $\E\left[\left|\tilde {f_i}-s(i)f_i \right| \right]$ we apply~\Cref{lowerexplemma} with $I=N_1$ to obtain that,
\begin{align*}
\E\left[\left|\tilde {f_i}-s(i)f_i \right| \right]\geq \frac{1}{2B_2} \left(1-\frac{1}{B_2} \right)^{|N_1|-1}\sum_{i\in N_1}f_i=\Omega \left(\frac{\log \frac{B_2+B_1}{B_1}}{B_2} \right).
\end{align*}
\end{proof}

\begin{corollary}\label{cro:lcs1}
Let $h:[n]\setminus [B_h] \to [B-B_h]$ and $s:[n] \to \{-1,1\}$ be truly random hash functions where $n,B,B_h \in \N$ and $B_h = \Theta(B) \leq B/2$. Define the random variable $\tilde{f_i}=\sum_{j=B_h+1}^n [h(j)=h(i)]s(j)f_j$ for $i \in [n]\setminus [B_h]$. Then $\E[|\tilde f_i-s(i)f_i|]=\Theta(1 / B)$.
\end{corollary}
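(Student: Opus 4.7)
The plan is to obtain Corollary \ref{cro:lcs1} as an immediate specialization of Theorem \ref{thm:lcs1}. I would set $B_1 = B_h$ and $B_2 = B - B_h$ in Theorem \ref{thm:lcs1}, so that the hash functions $h$ and $s$ match exactly the ones in the corollary, and the estimator $\tilde f_i$ is literally the same random variable.

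Before invoking the theorem, I would briefly check that the hypotheses $n - B_1 \geq B_2 \geq B_1$ are met. The assumption $B_h \leq B/2$ gives $B - B_h \geq B_h$, i.e., $B_2 \geq B_1$. The second condition $n - B_1 \geq B_2$ is equivalent to $n \geq B$, which is the mild (and standard) assumption that the universe is at least as large as the sketch; this is implicit in the streaming setting.

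Applying Theorem \ref{thm:lcs1} then yields
\begin{equation*}
\E[|\tilde f_i - s(i) f_i|] \;=\; \Theta\!\left(\frac{\log \frac{B_1 + B_2}{B_1}}{B_2}\right) \;=\; \Theta\!\left(\frac{\log(B/B_h)}{B - B_h}\right).
\end{equation*}
From here the final bound follows by unwinding the $\Theta(\cdot)$ on $B_h$. Since $B_h = \Theta(B)$ with $B_h \leq B/2$, we have $2 \leq B/B_h = \Theta(1)$, so $\log(B/B_h) = \Theta(1)$. Likewise $B - B_h \geq B/2$ and $B - B_h \leq B$, so $B - B_h = \Theta(B)$. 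Combining these two estimates gives $\E[|\tilde f_i - s(i) f_i|] = \Theta(1/B)$, which is exactly the statement of the corollary.

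There is essentially no obstacle here; the main content of the argument is already contained in Theorem \ref{thm:lcs1}. The only thing to be slightly careful about is that the upper bound $B_h \leq B/2$ is genuinely used to keep $\log(B/B_h)$ bounded away from zero (so the $\Theta$ is not just an $O$), and that $B_h = \Theta(B)$ is used in both directions when absorbing the denominator $B - B_h$ into $B$.
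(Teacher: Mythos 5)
Your proof is correct and is exactly the intended argument: the paper states Corollary \ref{cro:lcs1} as an immediate specialization of Theorem \ref{thm:lcs1} with $B_1 = B_h$ and $B_2 = B - B_h$, and your verification of the hypotheses and the simplification $\log(B/B_h) = \Theta(1)$, $B - B_h = \Theta(B)$ is the same reasoning the paper relies on (and indeed the sentence preceding Theorem \ref{thm:lcs1} in the paper spells out this very substitution).
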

\begin{remark}
The upper bounds of~\Cref{thm:lcs1} and~\Cref{cro:lcs1} hold even without the assumption of fully random hashing. In fact, we only require that $h$ and $s$ are $2$-independent. Indeed~\Cref{explemma} holds even when the Rademachers are $2$-independent (the proof is the same). Moreover, we need $h$ to be $2$-independent as we condition on $h(i)$ in our application of~\Cref{explemma}. With $2$-independence the variables $[h(j)=h(i)]$ for $j\neq i$ are then Bernoulli variables taking value $1$ with probability $1/B_2$.
\end{remark}


\subsection{More hash functions}\label{sec:learnedk}
We now show that, like for Count-Sketch, using more hash functions does not decrease the expected error. We first state the Littlewood-Offord lemma as strengthened by Erd\H{o}s.
\begin{theorem}[Littlewood-Offord~\cite{littlewood1939number}, Erd\H{o}s~\cite{erdos1945lemma}]\label{LittlewoodOfford} 
Let $a_1,\dots,a_n\in \R$ with $|a_i|\geq1$ for $i\in [n]$. Let further  $\sigma_1,\dots,\sigma_n\in \{-1,1\}$ be random variables with $\Pr[\sigma_i=1]=\Pr[\sigma_i=-1]=1/2$ and define $S=\sum_{i=1}^n \sigma_i a_i$. For any $v\in \R$ it holds that $\Pr[|S-v|\leq 1] = O(1/\sqrt{n})$.
\end{theorem}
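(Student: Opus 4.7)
The plan is to prove the classical Erdős strengthening via a reduction to Sperner's theorem on antichains. First, I would reduce to the case $a_i\ge 1$ for all $i$: if some $a_i<0$, replacing $a_i$ by $|a_i|$ and $\sigma_i$ by $-\sigma_i$ preserves both the distribution of $S$ and the hypothesis $|a_i|\ge 1$, so we may assume $a_i\ge 1$ throughout. Next, parametrize the sample space by subsets: for each $A\subseteq[n]$, let $S_A=\sum_{i\in A}a_i-\sum_{i\notin A}a_i$, so that $S=S_A$ with probability $2^{-n}$ when $A=\{i:\sigma_i=1\}$. The target probability becomes $|\mathcal F|/2^n$ where $\mathcal F=\{A\subseteq[n]:S_A\in[v-1,v+1]\}$.

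The key step is to bound $|\mathcal F|$ by partitioning the length-$2$ interval $[v-1,v+1]$ into two pieces of length at most $1$, say $\mathcal F_1=\{A:S_A\in[v-1,v]\}$ and $\mathcal F_2=\{A:S_A\in(v,v+1]\}$, and then arguing that each $\mathcal F_j$ is an antichain under set inclusion. Indeed, if $A\subsetneq B$ then
\begin{equation*}
S_B-S_A \;=\; 2\sum_{i\in B\setminus A}a_i \;\ge\; 2|B\setminus A| \;\ge\; 2,
\end{equation*}
which is incompatible with $S_A,S_B$ lying in a common interval of length $\le 1$. Hence both $\mathcal F_1$ and $\mathcal F_2$ are antichains in the Boolean lattice $2^{[n]}$.

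Now I would apply Sperner's theorem to each antichain to obtain $|\mathcal F_j|\le\binom{n}{\lfloor n/2\rfloor}$, hence $|\mathcal F|\le 2\binom{n}{\lfloor n/2\rfloor}$. Stirling's approximation gives $\binom{n}{\lfloor n/2\rfloor}=O(2^n/\sqrt n)$, so
\begin{equation*}
\Pr[\,|S-v|\le 1\,] \;=\; \frac{|\mathcal F|}{2^n} \;\le\; \frac{2\binom{n}{\lfloor n/2\rfloor}}{2^n} \;=\; O\!\left(\frac{1}{\sqrt n}\right),
\end{equation*}
as required.

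The proof is essentially classical, so there is no serious obstacle; the only subtlety worth flagging is that the interval in the hypothesis has length exactly $2$, which borders the case where the antichain argument degenerates (a single pair $A\subsetneq B$ with $|B\setminus A|=1$ and $a_i=1$ for that element could lie at both endpoints). Splitting $[v-1,v+1]$ into two subintervals of length $\le 1$ before invoking Sperner circumvents this boundary issue cleanly, at the cost of the harmless factor of $2$ absorbed into the $O(\cdot)$ constant.
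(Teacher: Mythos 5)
The paper does not prove this statement; it cites it as a classical result due to Littlewood--Offord and Erd\H{o}s and uses it as a black box. Your reconstruction is the standard Erd\H{o}s proof via Sperner's theorem, and it is correct: the sign-flip reduction to $a_i\ge 1$, the identity $S_B-S_A=2\sum_{i\in B\setminus A}a_i\ge 2$ for $A\subsetneq B$, the split of $[v-1,v+1]$ into two sub-intervals of length at most $1$ to make each preimage an antichain, and the Sperner/Stirling bound $\binom{n}{\lfloor n/2\rfloor}=O(2^n/\sqrt n)$ all check out.
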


Setting $B_1=B_h=\Theta(B)$ and $B_2=B-B_2=\Theta(B)$ in the theorem below gives the final bound from~\Cref{tbl:results} on L-CS with $k\geq 3$.
\begin{theorem}
Let $n\geq B_1+B_2\geq 2B_1$, $k\geq 3$ odd, and $h_1,\dots,h_k:[n] \setminus [B_1] \to [B_2/k]$ and $s_1,\dots,s_k:[n] \setminus [B_1]\to \{-1,1\}$ be independent and truly random. Define the random variable $\tilde{f_i}=\median_{\ell\in[k]}\left(\sum_{j\in [n]\setminus [B_1]} [h_\ell(j)=h_\ell(i)]s_\ell(j)f_j\right)$ for $i \in [n]\setminus [B_1]$. 
Then 
\begin{align*}
\E[|\tilde f_i-s(i)f_i|]=\Omega\left(\frac{1}{B_2}\right).
\end{align*}
\end{theorem}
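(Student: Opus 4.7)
The plan mirrors the lower bound of Theorem~\ref{unlearnedmultiCS}: establish a constant-order anticoncentration for each per-hash error on scale $1/B_2$, then lift it via a Stirling/binomial argument to an $\Omega(1)$ lower bound on $\Pr[|\median_\ell Y_\ell| \gtrsim 1/B_2]$. First I would rewrite the estimator in the sign-corrected form of the preliminaries, so that the error decomposes as $\tilde f_i - f_i = \median_\ell Y_\ell$ with $Y_\ell := s_\ell(i)Q_\ell$ and $Q_\ell := \sum_{j > B_1,\, j \neq i} [h_\ell(j) = h_\ell(i)] s_\ell(j) f_j$. Since $s_\ell(i)$ is independent of $Q_\ell$ and each $Q_\ell$ is itself symmetric, the $Y_\ell$'s are i.i.d.\ and symmetric about zero.

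Next, set $N_1 := \{B_1+1,\dots,B_1+B_2\}\setminus\{i\}$ and split $Q_\ell = Q_\ell^H + Q_\ell^T$, where $Q_\ell^H$ collects the terms with $j \in N_1$ and $Q_\ell^T$ the tail $j > B_1 + B_2$. Conditioning on all hash functions and on $Q_\ell^T$, the sum $Q_\ell^H$ is a Rademacher sum in $\{s_\ell(j)\}_{j \in J}$ with coefficients $\{f_j\}_{j\in J}$, where $J := \{j \in N_1 : h_\ell(j) = h_\ell(i)\}$. The assumption $B_1 + B_2 \ge 2B_1$ forces $f_j \geq 1/(B_1+B_2) \geq 1/(2B_2)$ for every $j \in N_1$, so after rescaling by $2B_2$ and applying Theorem~\ref{LittlewoodOfford} with centering $v = -2B_2 Q_\ell^T$,
\begin{align*}
\Pr\bigl[|Q_\ell| \leq 1/(2B_2) \mid \text{conditioning}\bigr] = O(1/\sqrt{|J|}).
\end{align*}
Because $\E[|J|] = |N_1| \cdot k/B_2 = \Theta(k)$, a Chernoff bound gives $|J| \geq k/2$ except with probability $e^{-\Omega(k)}$. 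Removing the conditioning and invoking symmetry,
\begin{align*}
p := \Pr\bigl[Y_\ell \geq 1/(2B_2)\bigr] = \tfrac12 - O(1/\sqrt{k}).
\end{align*}

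Finally, $\{\median_\ell Y_\ell \geq 1/(2B_2)\}$ coincides with $\{\mathrm{Bin}(k,p) \geq (k+1)/2\}$, whose probability a Stirling estimate---identical to the one already carried out in the lower bound of Theorem~\ref{unlearnedmultiCS}---bounds below by $\Omega(1)$ whenever $k$ exceeds some absolute constant. The symmetric event at $-1/(2B_2)$ then gives $\E[|\tilde f_i - f_i|] = \Omega(1/B_2)$. For $k$ below that constant, where the $1/\sqrt{|J|}$ bound degrades, I would fall back on Lemma~\ref{lowerexplemma} with $I = N_1$: it yields $|Y_\ell| \geq c/B_2$ with an absolute constant probability $\rho > 0$, and independence plus symmetry then force all $k$ copies to share the same sign and exceed $c/B_2$ in magnitude with probability $\Omega(\rho^k) = \Omega(1)$, again implying $|\median_\ell Y_\ell| \geq c/B_2$.

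The main obstacle is the quantitative tightness required in the middle step: the gap $1/2 - p = O(1/\sqrt{k})$ must match the standard deviation of $\mathrm{Bin}(k,1/2)$ on the nose for the Stirling calculation to yield an $\Omega(1)$ probability on the median event rather than something vanishing in $k$. This is the same delicate quantitative point already handled in Theorem~\ref{unlearnedmultiCS}, and the same estimate transfers here. By contrast, the random tail $Q_\ell^T$ is harmless, since Theorem~\ref{LittlewoodOfford} is uniform in the centering $v$, so no separate treatment of the light items is needed.
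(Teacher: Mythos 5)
Your proposal takes essentially the same route as the paper: split off a heavy set of size $\Theta(B_2)$ with frequencies $\ge 1/(2B_2)$, condition on the tail contribution and apply the Littlewood--Offord lemma to conclude that each per-hash error lands in $[-1/(2B_2),\,1/(2B_2)]$ with probability $O(1/\sqrt{k})$, and then run the Stirling/binomial argument on $p = 1/2 - O(1/\sqrt{k})$ to get the median event with probability $\Omega(1)$. Your small departures---choosing $N_1 = \{B_1+1,\dots,B_1+B_2\}$ rather than the paper's $\{B_2+1,\dots,2B_2\}$, invoking Chernoff rather than Chebyshev for concentrating $|J|$, and adding an explicit constant-$k$ fallback via Lemma~\ref{lowerexplemma}---do not change the substance of the argument.
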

\begin{proof}
Like in the proof of the lower bound of~\Cref{unlearnedmultiCS} it suffices to show that for each $i$ the probability that the sum $S_\ell:=\sum_{j\in [n]\setminus ([B_1]\cup \{i\})} [h_\ell(j)=h_\ell(i)]s_\ell(j)f_j$ lies in the interval $I=\left[-1/(2B_2),1/(2B_2) \right]$ is $O(1/\sqrt{k})$. Then at least half the $(S_\ell)_{\ell \in [k]}$ are at least $1/(2B_2)$ with probability $\Omega(1)$ by an application of Stirling's formula, and it follows that $\E[|\tilde f_i-s(i)f_i|]=\Omega (1/B_2)$. 

Let $\ell \in [k]$ be fixed, $N_1=[2B_2]\setminus ([B_2]\cup \{i\})$, and $N_2=[n] \setminus (N_1\cup \{i\})$, and write 
\begin{align*}
S_{\ell}=\sum_{j\in N_1}[h_\ell(j)=h_\ell(i)]s_\ell(j)f_j+\sum_{j\in N_2}[h_\ell(j)=h_\ell(i)]s_\ell(j)f_j:=X_1+X_2.
\end{align*}
Now condition on the value of $X_2$. Letting $J=\{j\in N_1:h_\ell(j)=h_\ell(i)\}$ it follows by~\Cref{LittlewoodOfford} that 
\begin{align*}
\Pr[S_\ell \in I \mid X_2]=O\left( \sum_{J'\subseteq N_1}\frac{\Pr[J=J']}{\sqrt{|J'|+1}} \right)=O\left( \Pr[|J|< k/2]+1/\sqrt{k}\right).
\end{align*}
An application of Chebyshev's inequality gives that $\Pr[|J|< k/2]=O(1/k)$, so $\Pr[S_\ell\in I]=O(1/\sqrt{k})$. Since this bound holds for any possible value of $X_2$ we may remove the conditioning and the desired result follows.
\end{proof}
\begin{remark} 
The bound above is probably only tight for $B_1=\Theta(B_2)$. Indeed, we know that it cannot be tight for all $B_1\leq B_2$ since when $B_1$ becomes very small, the bound from the standard Count-Sketch with $k\geq 3$ takes over --- and this is certainly worse than the bound in the theorem. It is an interesting open problem (that requires a better anti-concentration inequality than the Littlewood-Offord lemma) to settle the correct bound when $B_1 \ll B_2$. 
\end{remark}

\section{Proof of Theorem \ref{thm:main_unlearned}}\label{sec:proof_of_thm_main_unlearned}

In this section we give the complete proof of Theorem \ref{thm:main_unlearned}. We need the following special case of a result about the behaviour of CountSketch, proved in the prior sections. 

\begin{theorem}[Theorem \ref{unlearnedmultiCS}]\label{thm:cs_tail}
Let $\hat{f}_i$ be the estimate of the $i$th frequency given by a $3 \times B/3$ CountSketch table. There exists a universal constant $C$ such that the following two inequalities hold:
\begin{align}
    \Pr\left( | f_i - \hat{f}_i^j | \ge \frac{C}B \right) &\le \frac{1}2, \\
     \forall t \ge 3/B, \quad \Pr\left( | f_i - \hat{f}_i^j | \ge t \right) & \le C\left(\frac{\log(tB)}{tB} \right)^2.
\end{align}
\end{theorem}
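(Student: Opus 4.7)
The plan is to derive both inequalities as immediate consequences of the analysis behind Theorem \ref{unlearnedmultiCS}, specialized to $k=3$ rows and $w := B/3$ columns per row. The analysis in that proof essentially already contains both statements; the task is extracting the right intermediate estimate and tracking constants.

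First I would prove the tail (second) inequality. In the upper-bound portion of the proof of Theorem \ref{unlearnedmultiCS}, for each row $\ell$ the single-row error is decomposed as $X^{(\ell)} = X_1^{(\ell)} + X_2^{(\ell)}$, where $X_1^{(\ell)}$ sums contributions from collisions with the top $w$ heaviest items other than $i$ and $X_2^{(\ell)}$ accounts for collisions with the tail $[n]\setminus([w]\cup\{i\})$. That proof establishes, for $t \ge 3/w$,
\[
\Pr[|X_1^{(\ell)}| \ge t] = O\!\left(\tfrac{\log(tw)}{tw}\right) \quad \text{and} \quad \Pr[|X_2^{(\ell)}| \ge t] = O\!\left(\tfrac{\log(tw)}{tw}\right),
\]
the first via a Count-Min style argument (Lemma \ref{simpleanalysis}) and the second via Bennett's inequality together with the asymptotics of the Bennett function. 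Hence $\Pr[|X^{(\ell)}| \ge 2t] = O(\log(tw)/(tw))$ for $t \ge 3/w$. Since the median of three numbers has absolute value at least $2t$ only if at least two of them do, a union bound over the $\binom{3}{2} = 3$ pairs of rows yields
\[
\Pr\!\left[|\tilde f_i - f_i| \ge 2t\right] \le 6 \left(C_0 \tfrac{\log(tw)}{tw}\right)^{2}
\]
for an absolute constant $C_0$. Substituting $w = B/3$ (so that $\log(tw)/(tw) = \Theta(\log(tB)/(tB))$) and rescaling $t \mapsto t/2$ gives the claimed tail bound with a suitable universal $C$. The regime $t \in [3/B,\,O(1/B)]$, where the displayed expression could be $\Omega(1)$, is absorbed by taking $C$ large enough that the right-hand side trivially exceeds $1$.

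The first inequality then follows from the second one by specialization: set $t = C'/B$ for a sufficiently large constant $C'$, so that the tail bound gives $\Pr[|\tilde f_i - f_i| \ge C'/B] \le C (\log C')^2/(C')^2$, and choose $C'$ large enough that this quantity is at most $1/2$. Taking the final constant $C$ in the statement to be the maximum of the two constants obtained yields a single $C$ satisfying both inequalities. The main ``obstacle,'' such as it is, is purely the bookkeeping of absolute constants in producing a clean ``for all $t \ge 3/B$'' statement out of the intermediate estimate inside the proof of Theorem \ref{unlearnedmultiCS}; no substantially new argument is required.
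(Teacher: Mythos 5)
Your proposal is correct and follows essentially the same route as the paper, which presents this statement as a direct specialization of the proof of Theorem~\ref{unlearnedmultiCS}: the decomposition into heavy and tail collision contributions, the Count-Min-style bound via Lemma~\ref{simpleanalysis}, Bennett's inequality for the tail, and the union bound over pairs of rows yielding the exponent $\lceil k/2\rceil=2$ are exactly the steps in~\eqref{eq:tbig}. Your handling of the constant bookkeeping (absorbing the $w=B/3$ rescaling and the small-$t$ regime, and deriving the first inequality by plugging a large $t=C'/B$ into the tail bound) is sound.
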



\begin{proof}[Proof of Theorem \ref{thm:main_unlearned}]\label{sec:proof_main_unlearned}
\noindent \textbf{Case 1:} $i > B/\log \log n$. 
Recall that $\hat{f}_i^j$ denotes the estimate of the $i$th frequency given by table $S_j$ in Algorithm \ref{alg:main_unlearned}. Furthermore, $\tilde{f}_i \gets \text{Median}(\hat{f}_i^1, \ldots, \hat{f}_i^{T-1})$ denotes the median of the estimates of the first $T-1$ tables in Algorithm \ref{alg:main_unlearned}. From Theorem \ref{thm:cs_tail}, we have that for every fixed $j$,
\[ \Pr\left( |f_i - \hat{f}_i^j | \ge \frac{2C \log \log n}{B} \right) \le \frac{1}4  \]
and so it follows that 
\begin{equation}\label{eq:main_thm_tail_bound}
    \Pr\left( |f_i - \tilde{f}_i  | \ge \frac{2C \log \log n}{B} \right) \le \exp(- \Omega(T)) \le \frac{1}{(\log n)^{100}}  
\end{equation}

by adjusting the constant in front of $T$. We let $2C$ be the constant for the $O$ notation in line $7$ of Algorithm \ref{alg:main_unlearned}. Now consider the expected value of $|\hat{f}_i - 1/i|$, where the expectation is taken over the randomness used by the CountSketch tables of Algorithm \ref{alg:main_unlearned}. By conditioning on the event that we either output $0$ or output the estimate of the $T$th table, we have
\[
    \mathbb{E}\left[ \frac{1}i \cdot \left | \hat{f}_i - \frac{1}i \right| \right] \lesssim \Pr(\text{We output $0$}) \cdot \frac{1}{i^2} + \Pr(\text{We output estimate of table $T$}) \cdot \frac{1}{iB}
\]
where we have used the first inequality in Theorem \ref{thm:cs_tail} in the above inequality and $\lesssim$ denotes inequality up to a constant factor. We have bounded the second probability in Equation \eqref{eq:main_thm_tail_bound} which gives
\begin{equation}\label{eq:estimate_1}
     \mathbb{E}\left[ \frac{1}i \cdot \left | \hat{f}_i - \frac{1}i \right| \right] \lesssim \frac{1}{i^2} + \frac{1}{i B \cdot (\log n)^{99}}.
\end{equation}
\\

\noindent \textbf{Case 2:} $i \le B/(\log \log n)^4$. 
We employ the more refined tail bound for Count Sketch stated in the second inequality of Theorem \ref{thm:cs_tail}.

For any $i$ smaller than $B/(\log \log n)^4$, we have that for any fixed $j$,
\[ \Pr\left(  \hat{f}_i^j  \le \frac{2C \log \log n}{B}\right) \le  \Pr\left( |f_i - \hat{f}_i^j  |  \ge \frac{1}{2i}\right) \lesssim \left( \frac{\log(B'/i) \cdot i}{B'} \right)^2\]
where $B' = B/(4T) = \Theta(B/\log \log n)$.
It follows that
\[ \Pr\left( \tilde{f}_i  \le \frac{2C \log \log n}{B} \right) \le T \cdot \Pr\left( |f_i - \hat{f}_i^1  |  \le \frac{2C \log \log n}{B}\right) \lesssim T \cdot  \left( \frac{\log(B'/i) \cdot i}{B'} \right)^2. \]
Therefore, for $i \le B/(\log \log n)^4$, we again have
\begin{align*}
    \mathbb{E}\left[ \frac{1}i \cdot \left | \hat{f}_i - \frac{1}i \right| \right] &\lesssim \Pr(\text{We output $0$}) \cdot \frac{1}{i^2} + \Pr(\text{We output estimate of table $T$}) \cdot \frac{1}{iB} \\
    &\lesssim (\log \log n)^3 \cdot \left( \frac{\log(B/i)}{B} \right)^2 + \frac{1}{iB}.
\end{align*}
We can summarize this case as:

\begin{equation}\label{eq:estimate_2}
     \mathbb{E}\left[ \frac{1}i \cdot \left | \hat{f}_i - \frac{1}i \right| \right] \lesssim (\log \log n)^3 \cdot \left( \frac{\log(B/i)}{B} \right)^2 + \frac{1}{iB}.
\end{equation}
\\

\noindent \textbf{Putting everything together.}
Equation \eqref{eq:estimate_1} gives us 
\begin{align*}
    \frac{1}{\log n} \cdot \sum_{i > B/\log \log n}\mathbb{E}\left[ \frac{1}i \cdot \left | \hat{f}_i - \frac{1}i \right| \right] &\lesssim \frac{1}{\log n} \sum_{i > B/\log \log n} \frac{1}{i^2} + \frac{1}{B \cdot (\log n)^{100}} \, \sum_{i = 1}^n \frac{1}i \\
    &\lesssim \frac{\log \log n}{B \log n}.
\end{align*}
Equation \eqref{eq:estimate_2} gives us 
\begin{align*}
    \frac{1}{\log n} \cdot \sum_{i \le B/(\log \log n)^4}\mathbb{E}\left[ \frac{1}i \cdot \left | \hat{f}_i - \frac{1}i \right| \right] &\lesssim \frac{1}{\log n}\sum_{i \le B/(\log \log n)^4 } \left( (\log \log n)^3 \cdot \left( \frac{\log (B/i)}{B} \right)^2 + \frac{1}{iB} \right) \\
    &\lesssim \frac{(\log \log n)^3}{B^2 \log n}\int_1^{B/(\log \log n)^4} \log(B/x)^2 \ dx + \frac{ \log B}{B \log n} \\
    &\lesssim \frac{(\log \log n)^3}{B^2 \log n} \cdot \frac{B  (\log^2(\log \log n))}{(\log \log n)^4}  + \frac{ \log B}{B \log n} \\
    &\lesssim \frac{\log B}{B \log n}
\end{align*}
where the second to last inequality follows from the indefinte integral $\int \log^2(c/x) \ dx = x \log^2(c/x) + 2x \log(c/x) + 2x + \text{Constant}$. 

Finally, we deal with the remaining case: $i$ between $B/\log \log n$ and $B/(\log \log n)^4$. For these $i$'s, the worst case error happens when we set their estimates to $0$, incurring error $1/i$, as opposed to incurring error $O(1/B)$ if we used the estimate of table $T$:
\begin{align*}
   &\frac{1}{\log n} \cdot \sum_{B/(\log \log n)^4 \le i \le B/\log \log n} \mathbb{E}\left[ \frac{1}i \cdot \left | \hat{f}_i - \frac{1}i \right| \right]\\
   &\lesssim \frac{1}{\log n} \sum_{B/(\log \log n)^4 \le i\le  B/\log \log n} \frac{1}{i^2} \\
   &\lesssim \frac{(\log \log n)^4}{B \log n} .
\end{align*}

Combining our three estimates completes the proof.
\end{proof}

\section{Proof of Theorem \ref{thm:main_learned}}\label{sec:proof_main_learned}

\begin{proof}[Proof of Theorem \ref{thm:main_learned}]
We summarize the intuition and give the full proof. Recall the workhorse of our analysis is the proof of Theorem \ref{thm:main_unlearned}. First note that we obtain $0$ error for $i < B/2$. Thus, all our error comes from indices $i \ge B/2$. Recall the intuition for this case from the proof of Theorem \ref{thm:main_unlearned}: we want to output $0$ as our estimates. Now the same analysis as in Case $1$ of Theorem \ref{thm:main_unlearned} gives us that the probability we use the estimate of table $T$ can be bounded by say $\frac{1}{(\log n)^{100}}$. Thus, similar to Equation \eqref{eq:estimate_1}, we have
\begin{align*}
     \mathbb{E}\left[ \frac{1}i \cdot \left | \hat{f}_i - \frac{1}i \right| \right] &\lesssim 
     \Pr(\text{We output $0$}) \cdot \frac{1}{i^2} + \Pr(\text{We output estimate of table $T$}) \cdot \frac{1}{iB}\\
     &\lesssim \frac{1}{i^2} + \frac{1}{i B \cdot (\log n)^{99}}.
\end{align*}
Thus, our total error consists of only one part of the total error calculation of Theorem \ref{thm:main_unlearned}:
\begin{align*}
    \frac{1}{\log n} \cdot \sum_{i > B}\mathbb{E}\left[ \frac{1}i \cdot \left | \hat{f}_i - \frac{1}i \right| \right] &\lesssim \frac{1}{\log n} \sum_{i > B} \frac{1}{i^2} + \frac{1}{B \cdot (\log n)^{100}} \, \sum_{i = 1}^n \frac{1}i \\
    &\lesssim \frac{1}{B \log n},
\end{align*}
as desired.
\end{proof}
\begin{algorithm}[!ht]
\caption{\label{alg:update_parsimnious} Parsimonious frequency update algorithm}
\begin{algorithmic}[1]
\State \textbf{Input:} Stream of updates to an $n$ dimensional vector, space budget $B$, access to a heavy hitter oracle which correctly identifies the top $B/2$ heavy hitters.
\Procedure{Update}{}
\State $T \gets O(\log \log n)$
\For{$j = 1$ to $T-1$}
\State $S_j \gets $ CountSketch table with $4$ rows and $\frac{B}{16T}$ columns
\EndFor
\State  $S_T \gets $ CountSketch table with $4$ rows and $\frac{B}{16}$ columns
\For{stream element $(i, \Delta)$}
\If{$i$ is already classified as a top $B/2$ heavy hitter}
\State Maintain the count of $i$ exactly (from the point of time it was detected as heavy).
\Else
\State Query the heavy hitter oracle with probability $p=\min\left(1, CB(\log n)^2\Delta\right)$
\If{$i$ gets queried and is classified as a top $B/2$ heavy hitter}
\State Maintain the count of $i$ exactly (from this point of time).
\Else
\State Input $(i,\Delta)$ in each of the $T$ CountSketch tables $S_j$
\EndIf
\EndIf
\EndFor
\EndProcedure
\end{algorithmic}
\end{algorithm}

\section{Parsimonious learning}\label{app:parsimonious}
In this appendix, we state our result on parsimonious learning precisely. We consider the modification to~\cref{alg:update_learned} where whenever an element $(i,\Delta)$ arrives, we only query the heavy hitter oracle with probability $p=\min\left(1,\gamma B(\log n)^2\Delta\right)$ for $\gamma$ a sufficiently large constant\footnote{This sampling probability depends on the length of the stream which is likely unknown to us. We will discuss how this assumption can be removed shortly.}. To be precise, when an item $i$ arrives, we first check if it is already classified as a top $B/2$ heavy hitter. If so, we update its exact count (from the first point of time where it was classified as heavy). If not, we query the heavy hitter oracle with probability $p$. In case $i$ gets queried and classified as one of the top $B/2$ heavy hitters, we store its count exactly (from this point of time). Otherwise, we input it to the CountSketch tables $S_j$ similarly to~\Cref{alg:update_unlearned} and~\Cref{alg:update_learned}. \Cref{alg:update_parsimnious} shows the pseudocode for the update procedure of our parsimonious learning algorithm. The query procedure is similar to~\Cref{alg:main_learned}. We now state our main result on our parsimonious learning algorithm, namely that it achieves the same expected weighted error bound as in~\Cref{thm:main_learned}.

\begin{theorem}\label{thm:parsimonious}
Consider Algorithm~\ref{alg:update_parsimnious} with space parameter $B \ge \log n$ updated over a Zipfian stream. Suppose the heavy-hitter oracle correctly identifies the top $B/2$ heavy hitters in the stream. Let $\{\hat{f}_i\}_{i=1}^n$ denote the estimates computed by Algorithm~\ref{alg:main_learned}. The expected weighted error \eqref{eq:weighted_error} is
$\mathbb{E}\left[ \frac{1}{N} \cdot \sum_{i=1}^n f_i \cdot |f_i - \hat{f}_i| \right] = O\left( \frac{1}{B \log n} \right). $ The algorithm makes $O(B(\log n)^3)$ queries to the heavy hitter oracle in expectation.
\end{theorem}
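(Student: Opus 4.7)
The plan is to extend the proof of Theorem~\ref{thm:main_learned} by carefully accounting for the two new sources of error introduced by parsimonious sampling: (i) a heavy hitter's count is maintained exactly only from the moment it is first queried, losing all mass that arrived earlier; and (ii) this \emph{undersampled} mass is instead fed into the CountSketch tables, adding extra noise that non-heavy items see. The query-count bound is immediate from linearity: each arrival $(i,\Delta)$ triggers an oracle call with probability at most $\gamma B(\log n)^2\Delta$, so the expected number of queries is at most $\gamma B(\log n)^2\sum_t\Delta_t=\gamma B(\log n)^2\cdot N=O(B(\log n)^3)$, using $N=H_n=\Theta(\log n)$ under the Zipfian assumption.

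For a fixed item $i$ with successive updates $\Delta_1,\Delta_2,\dots$ and prefix sums $M_s=\sum_{s'\le s}\Delta_{s'}$, let $U_i$ denote the cumulative mass of $i$ arriving strictly before its first oracle query. Writing $p_s=\min(1,\gamma B(\log n)^2\Delta_s)$ and using that prefixes containing some $p_{s'}=1$ contribute zero to the sum below (so in the remaining prefixes $p_{s'}=\gamma B(\log n)^2\Delta_{s'}$),
\begin{align*}
\E[U_i]=\sum_s\Delta_s\prod_{s'\le s}(1-p_{s'})\le \sum_s\Delta_s\,e^{-\gamma B(\log n)^2 M_s}\le \int_0^\infty e^{-\gamma B(\log n)^2 x}\,dx=\frac{1}{\gamma B(\log n)^2}.
\end{align*}
The analogous computation for $\E[U_i^2]=2\sum_{s\le s'}\Delta_s\Delta_{s'}\prod_{s''\le s'}(1-p_{s''})$ gives $\E[U_i^2]=O(1/(B(\log n)^2)^2)$.

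The weighted error then decomposes into heavy and non-heavy contributions. For a correctly-classified heavy hitter $i\le B/2$, the algorithm reports the exact count kept from its first query onward, so $|f_i-\hat f_i|=U_i$, contributing
\begin{align*}
\frac1N\sum_{i\le B/2}f_i\,\E[U_i]\;\lesssim\;\frac{1}{\log n}\cdot\frac{1}{B(\log n)^2}\sum_{i\le B/2}\frac1i\;=\;O\!\left(\frac{\log B}{B(\log n)^3}\right)\;=\;O\!\left(\frac{1}{B\log n}\right).
\end{align*}
For non-heavy items $i>B/2$, the CountSketch tables are fed the full frequencies of non-heavy items (exactly as in Algorithm~\ref{alg:update_learned}) together with the perturbation vector $U$ whose $i$-th coordinate equals $U_i$ for $i\le B/2$ and $0$ otherwise. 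The second-moment bound gives $\E[\|U\|_2^2]=O(1/(B(\log n)^4))$, so by Markov $\|U\|_2^2\le 1/(B(\log n)^3)$ with probability $1-O(1/\log n)$. On this event the $\ell_2$ mass of the tail fed to CountSketch is $\Theta(1/\sqrt{B})$ up to a $(1+o(1))$ factor, so Theorem~\ref{thm:cs_tail} continues to hold with unchanged constants, and the proof of Theorem~\ref{thm:main_learned} carries over to yield $O(1/(B\log n))$ non-heavy error. The complementary low-probability event contributes negligibly.

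The main technical hurdle is formalizing the last step: that Theorem~\ref{thm:cs_tail}, which is stated for Zipfian inputs, is stable under a small additive $\ell_2$ perturbation of the input vector. This amounts to re-examining the proof of Theorem~\ref{unlearnedmultiCS} (from which Theorem~\ref{thm:cs_tail} is derived) and observing that its upper-bound argument depends on the frequency vector only through (a) the $\ell_1$ mass of its head and (b) the $\ell_2$ mass of its tail, both of which Zipfian $+\,U$ matches up to constants on the high-probability event above. Once this robustness is in hand, the three pieces combine to give exactly the claimed $O(1/(B\log n))$ expected weighted error.
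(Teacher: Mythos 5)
Your query-count computation and the expectation bound $\E[U_i]\le 1/(\gamma B(\log n)^2)$ on the undersampled mass of a heavy hitter both match the paper's calculations, and together they correctly handle the heavy-hitter portion of the weighted error. The main gap is in the light-element analysis. You bound $\E[\|U\|_2^2]=O(1/(B(\log n)^4))$ and apply Markov to get $\|U\|_2^2\le 1/(B(\log n)^3)$ with probability $1-O(1/\log n)$, then assert that ``the complementary low-probability event contributes negligibly.'' This assertion is not justified, and it is not obviously true: with probability $\Omega(1/\log n)$ the perturbation $\|U\|_2$ can be much larger than $1/\sqrt{B}$, in which case the CountSketch error for every light item is inflated well beyond $O(1/B)$, and a probability weight of $1/\log n$ is far too large to absorb such an inflated error into the $O(1/(B\log n))$ budget. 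The paper avoids this by proving a much stronger \emph{per-item} high-probability bound: conditioned on the cumulative mass of heavy hitter $i$ reaching $1/(B\log n)$, the probability of not yet having been queried is at most $\exp(-\gamma\log n)=n^{-\gamma}$, so after union bounding over the $B/2$ heavy hitters, with probability $1-n^{-\Omega(1)}$ every $U_i\le 1/(B\log n)$. This deterministic bound on the exceptional set (failure probability polynomially small in $n$, not $1/\log n$) is what makes the argument close, and it is exactly the step you are missing.

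A secondary point: you frame the final step as needing to show that Theorem~\ref{thm:cs_tail}, and hence the refined two-case machinery of Theorem~\ref{unlearnedmultiCS}, is stable under $\ell_2$ perturbations. The paper takes a simpler route. Because the learned setting already kills the $i\le B/2$ terms, only the coarse ``Case 1'' argument is needed, and it relies solely on the expected per-table error via Lemma~\ref{explemma} followed by Markov and a median concentration bound. That expected error depends only on the $\ell_2$ norm of the vector fed to the sketch, which on the $1-n^{-\Omega(1)}$ event above is $\|f_{\overline{B/2}}\|_2+\|U\|_2=O(1/\sqrt{B})$. So no stability analysis of the Zipf-specific tail bound is required; re-deriving Theorem~\ref{unlearnedmultiCS} for a perturbed input is an unnecessary detour.
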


\begin{proof}[Proof of~\cref{thm:parsimonious}]
Introducing some notation, we denote the stream $((x_1,\Delta_1),\dots, (x_m,\Delta_m))$. Letting $S_i=\{j\in [m]\mid x_j=i\}$, we then have that $\sum_{j\in S_i}\Delta_j=f_j=1/j$. Then, whenever an element $(x_j,\Delta_j)$ arrives, the algorithm queries the heavy hitter oracle with probability $p_j=\min\left(1,C\gamma B(\log n)^2 \Delta_j\right)$.

Let us first consider the expected error when estimating the frequency of a heavy hitter $i\leq B/2$. Let $j_0\in S_i$ be minimal such that $\sum_{j\in S_i,j\leq j_0}\Delta_j\geq \frac{1}{B\log n}$. Since $i$ is a heavy hitter with total frequency $f_i\geq 2/B$, such a $j_0$ exists. If there exists $j\in S_i$ with $j\leq j_0$ such that $p_j=1$, then $i$ will be classified as a heavy hitter by time $j_0$ with probability $1$. Otherwise, the probability that $i$ is not classified as a heavy hitter by time $j_0$ is upper bounded by
\begin{align*}
\prod_{j\in S_i,j\leq j_0}(1-p_j)\leq & \exp\left(-\sum_{j\in S_i,j\leq j_0} p_j \right)=\exp\left(-\gamma B(\log n)^2\sum_{j\in S_i,j\leq j_0} \Delta_j \right) \\ 
\leq & \exp(-\gamma \log n)=n^{-\gamma}.
\end{align*}
Union bounding over the $B/2$ top heavy hitters we find that with high probability in $n$ they are indeed classified as heavy at the first point of time where they have appeared with weight at least $\frac{1}{B\log n}$. In particular, with the same high probability the error when estimating each of the top $B/2$ heavy hitters is at most $\frac{1}{B\log n}$ and so,
\[
\mathbb{E}\left[ \frac{1}{N} \cdot \sum_{i=1}^{B/2} f_i \cdot |f_i - \hat{f}_i| \right]=O\left(\frac{1}{B\log n}\right).
\]
Let us now consider the light elements $i>B/2$. Such an element is never classified as heavy and consequently is estimated using the CountSketch tables $S_j$ as in~\cref{alg:main_unlearned}. Denoting by $E$ the event that we output $0$ (that is, the median of the first $T-1$ CountSketch tables is small enough) and by $E^c$ the event that we output the estimate from table $T$, as in~\cref{sec:proof_main_learned}, we again have
\[
\mathbb{E}\left[ \frac{1}i \cdot \left | \hat{f}_i - \frac{1}i \right| \right] \lesssim \Pr(E) \cdot \frac{1}{i^2} + \Pr(E^c) \cdot \frac{1}{iB}\leq \frac{1}{i^2}+\Pr(E^c) \cdot \frac{1}{iB}.
\]
Here, the bound of $O(1/B)$ on the expected error of table $T$ holds even though the $B/2$ heavy hitters might appear in table $T$. The reason is with high probability, these heavy hitters appear with weight at most $\frac{1}{B\log n}$ and conditioned on this event, we can plug into~\cref{explemma} to get that the expected error is still $O(1/B)$. It remains to bound $\Pr(E^c)$. Again, from~\cref{explemma}, it follows that the expected error of each of the first $T-1$ tables is at most $C\frac{2\log \log n}{B}$ for a sufficiently large constant $C$ (even including the contribution from the heavy hitters), and so by Markov's inequality,
\[
\Pr\left( |f_i - \hat{f}_i^j | \ge \frac{2C \log \log n}{B} \right) \le \frac{1}{4}
\]
and again,
\[
\Pr\left( |f_i - \tilde{f}_i  | \ge \frac{2C \log \log n}{B} \right) \le \exp(- \Omega(T)) \le \frac{1}{(\log n)^{100}}. 
\]
Thus, we can bound,
\[
\mathbb{E}\left[ \frac{1}i \cdot \left | \hat{f}_i - \frac{1}i \right| \right] \lesssim \frac{1}{i^2}+ \cdot \frac{1}{(\log n)^{100}iB}.
\]
Recalling that $N=H_n$ and summing over $i\geq B/2$ we get that
\[
\mathbb{E}\left[ \frac{1}{N} \cdot \sum_{i=B/2+1}^{n} f_i \cdot |f_i - \hat{f}_i| \right]=O\left(\frac{1}{B\log n}+\frac{\log (n/B)}{B(\log n)^{100}}\right)=O\left(\frac{1}{B\log n}\right),
\]
as desired. The expected number of queries to the heavy hitter oracle is 
\[
\sum_{j=1}^m p_j\leq\sum_{i=1}^n\sum_{j\in S_i} \gamma B(\log n)^2 \Delta_j=\sum_{i=1}^n \gamma B(\log n)^2 f_i=O(B(\log n)^3).
 \]
\begin{remark}
We note that~\cref{alg:update_parsimnious} makes use of the length of the stream to set $p$. Usually we would not know the length of the stream but at the cost of an extra log-factor in the number of queries made to the oracle, we can remedy this. Indeed, the query probability is $p=\min\left(1,\frac{\gamma B(\log n)^3}{m}\right)$ where $m$ is the length of the stream. If we instead increase the query probability after we have seen $j$ stream elements to $p_{j}=\min\left(1,\frac{\gamma B(\log n)^3}{j}\right)$, we obtain the same bound on the expected weighted error. Indeed, we will only detect the heavy hitters earlier. Moreover, the expected number of queries to the oracle is at most
\[
\sum_{j=1}^m \frac{\gamma B(\log n)^3}{j}=O\left( B(\log n)^3 \log m\right).
\]
\end{remark}
\end{proof}

\section{Omitted Proofs of Section \ref{sec:worst_case}}\label{sec:tail_estimator_algorithm}

In this section, we discuss a version of our algorithm using a worst case estimate of the tail of the distribution, generalizing the value $O(AT/B)$ designed for Zipfian distributions. The algorithm \texttt{Basic-Tail-Sketch} is essentially the classic AMS sketch \cite{alon1996space} with $c = O(1)$ counters for the elements whose hash value is $1$. It is easy to see that the final algorithm, Algorithm \ref{alg:tail_estimator} uses $O(B)$ words of space.

\begin{algorithm}[H]
\caption{\label{alg:tail_estimator}Estimating the tail of the frequency vector $f$}
\begin{algorithmic}[1]
\State \textbf{Input:} Stream of updates to an $n$ dimensional vector $f$, space budget $O(B)$
\Procedure{Tail-Estimator}{}
\State Initialize $B$ independent copies of $\texttt{Basic-Tail-Sketch}$
\State Update each copy of $\texttt{Basic-Tail-Sketch}$ with updates from the stream
\For{$1 \le i \le B$}
\State $V_i \gets$ value outputted by $i$th copy of $\texttt{Basic-Tail-Sketch}$ after stream ends
\EndFor
\State \textbf{Return} $V \gets$ the $B/3$-th largest value among $\{V_i\}_{i= 1}^B$
\EndProcedure
\end{algorithmic}
\end{algorithm}

\begin{algorithm}[H]
\caption{\label{alg:basic_tail_sketch} Auxilliary algorithm for Algorithm \ref{alg:tail_estimator}}
\begin{algorithmic}[1]
\State \textbf{Input:} Stream of updates to an $n$ dimensional vector $f$
\Procedure{Basic-Tail-Sketch}{}
\State $T \gets \Theta(\log \log n)$
\State $B' \gets \Theta(B/T)$
\State $h: [n] \rightarrow [B']$ (4-wise independent hash function)
\State $c \gets 32$
\For{$1 \le j \le c$}
\State $s_j: [n] \rightarrow \{\pm 1\}$ (4-wise independent hash function)
\EndFor
\State Keep track of the sum $\frac{1}{c}\sum_{j=1}^{c}\left(\sum_{i:h(i)=1}f_{i}s_{j}(i)\right)^{2}$
\EndProcedure
\end{algorithmic}
\end{algorithm}





We now show that $V$, the output of Algorithm \ref{alg:tail_estimator}, satisfies $V \approx \| f_{\overline{\Theta(B')}}\|_2^{2}/B'$, which is of the same order as the threshold value used in line $7$ of Algorithm \ref{alg:main_learned}, generalizing the Zipfian case.

\begin{proof}[Proof of Lemma \ref{lem:tail_estimator}]
We analyze one copy of the sketch $V_{1}$, starting with the upper
bound.

Let $a$ be the number of elements $i\in[B'/10]$ such that $h(i)=1$.
Because $\E[a]=1/10$, by Markov's inequality, we have $a\le9/10$
with probability at least $8/9$. Next, let $W_{j}=\sum_{i>B'/10:h(i)=1}f_{i}s_{j}(i)$.
We have

\[
\E[W_{j}^{2}]=\sum_{i\ge B'/10}f_{i}^{2}\cdot[h(i)=1]=\left\Vert f_{\overline{B'/10}}\right\Vert _{2}^{2}/B'
\]

By Markov's inequality, we have $\frac{1}{4}\sum_{j}W_{j}^{2}\le9\left\Vert f_{\overline{B'/10}}\right\Vert _{2}^{2}/B'$
with probability $8/9$. By the union bound, $V_{1}^{2}\le9\left\Vert f_{\overline{B'/10}}\right\Vert _{2}^{2}/B$
with probability at least $7/9$.

Next, we show the lower bound. Let $X_{1}=\sum_{i:h(i)=1}\min\left(f_{i}^{2},f_{3B'}^{2}\right)$
and $Y_{1}=\sum_{i:h(i)=1}f_{i}^{2}$. Observe that $X_{1}\le Y_{1}$.
We have

\begin{align*}
\E_{h}\left[X_{1}\right] & =\left\Vert f_{\overline{3B'}}\right\Vert _{2}^{2}/B'+3f_{3B'}^{2}\\
Var\left(X_{1}\right) & =\frac{B'-1}{B'^{2}}\left(\sum_{i>3B'}f_{i}^{4}+3Bf_{3B'}^{4}\right)
\end{align*}

By Chebyshev's inequality,

\begin{align*}
\Pr\left[X_{1}\le\left\Vert f_{\overline{3B'}}\right\Vert _{2}^{2}/(3B')\right] & \le\frac{\frac{B'-1}{B'^{2}}\left(\sum_{i>3B'}f_{i}^{4}+3B'f_{3B'}^{4}\right)}{\left(2\left\Vert f_{\overline{3B'}}\right\Vert _{2}^{2}/(3B')+3f_{3B'}^{2}\right)^{2}}\\
 & \le\frac{\left(B'-1\right)\left(\sum_{i>3B}f_{i}^{4}+3B'f_{3B'}^{4}\right)}{4\left\Vert f_{\overline{3B'}}\right\Vert _{2}^{4}/9+9B'^{2}f_{3B'}^{2}+4B'\left\Vert f_{\overline{3B'}}\right\Vert _{2}^{2}f_{3B'}^{2}}\\
 & \le\frac{1}{3}.
\end{align*}

Thus, with probability at least $2/3$, we have $Y_{1}\ge\left\Vert f_{\overline{3B'}}\right\Vert _{2}^{2}/(3B')$.
Next we can bound $V_{1}$ in terms of $Y_{1}$ using the standard
analysis of the AMS sketch. Let $Z_{j}=\sum_{i:h(i)=1}f_{i}s_{j}(i)$.

\begin{align*}
\E_{s}\left[Z_{j}^{2}|h\right] & =Y_{1}\\
\E_{s}\left[Z_{j}^{4}|h\right] & =\sum_{i:h(i)=1}f_{i}^{4}+6\sum_{i<j:h(i)=h(j)=1}f_{i}^{2}f_{j}^{2}=3Y_{1}^{2}-2\sum_{i:h(i)=1}f_{i}^{4}.
\end{align*}

By the Chebyshev's inequality, $\Pr\left[V_{1}\le Y_{1}/2\right]\le\frac{2Y_{1}^{2}/c}{Y_{1}^{2}/4}\le\frac{8}{c}\le\frac{1}{4}$.
By the union bound, we have $V_{1}\ge\left\Vert f_{\overline{3B}}\right\Vert _{2}^{2}/(6B')$
with probability at least $5/12$.

The lemma follows from applying the Chernoff bound to the independent
copies $V_{1},\ldots,V_{B}$.
\end{proof}
Given the estimator $V$, we can output $0$ for elements whose squared
estimated frequency is below $V$.

\begin{lemma}
Let $E$ be the event that $V$ is accurate, which holds with probability $1-\exp\left(\Omega\left(B\right)\right)$. 
If $f_i^2 \le \left\Vert f_{\overline{3B'}}\right\Vert _{2}^{2}/(12B')$ then with
probability $1-\exp\left(\Omega\left(B\right)\right)-1/polylog(n)$, the algorithm
outputs $0$.
If $f_i^2 \ge \left\Vert f_{\overline{3B'}}\right\Vert _{2}^{2}/(12B')$ then 
with constant probability, 
\[  \left\Vert f_{i}^{2}-\hat{f}_{i}^{2} \right\Vert \le O\left(\left\Vert f_{\overline{\Omega(B')}}\right\Vert _{2}^{2}/B'\right)
\]
\end{lemma}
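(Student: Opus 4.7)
The plan is to condition on the event $E$ from Lemma~\ref{lem:tail_estimator} (which fails with probability $\exp(-\Omega(B))$) and to take the modified Line~7 threshold as $\tau = C\sqrt{V}$ for a suitable constant $C$. Under $E$ we have $c_1 \|f_{\overline{3B'}}\|_2^2/B' \le V \le c_2 \|f_{\overline{B'/10}}\|_2^2/B'$ for constants $c_1,c_2$. The first and most important step is to establish that the median $\tilde f_i$ of the $T-1$ CountSketch estimates $\hat f_i^j$ satisfies $|\tilde f_i - f_i| \le C'\sqrt{V}$ with probability $1 - 1/\text{polylog}(n)$ for an appropriate constant $C'$. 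For each $3\times B'$ table $S_j$, a single-row CS error has second moment $\|f_{-i}\|_2^2/B'$; however, conditioning on the constant-probability event that no top-$B'$ element collides with $i$ in a given row, the effective variance shrinks to $\|f_{\overline{B'}}\|_2^2/B'$, which together with the upper bound on $V$ yields a bound of order $V$ (after a careful comparison). A Markov bound together with the inner median-of-$3$ structure in $S_j$ makes each table succeed with probability bounded away from $1/2$, and a Chernoff argument across the $T-1 = \Theta(\log\log n)$ independent tables boosts this to $1 - 1/\text{polylog}(n)$.

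For Case~1 (small $f_i$), the hypothesis $f_i^2 \le \|f_{\overline{3B'}}\|_2^2/(12B')$ combined with the lower bound $V \ge c_1\|f_{\overline{3B'}}\|_2^2/B'$ gives $|f_i| \le \sqrt{V/(12c_1)} = O(\sqrt{V})$. On the concentration event from the previous paragraph, $|\tilde f_i| \le |f_i| + |\tilde f_i - f_i| = O(\sqrt{V})$, and by choosing $C$ sufficiently large we get $|\tilde f_i| < \tau$, so the algorithm outputs $0$. The overall failure probability is $\exp(-\Omega(B))$ from $E$ failing plus $1/\text{polylog}(n)$ from the concentration event failing, matching the claim.

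For Case~2 (large $f_i$) I would split on the algorithm's output and use that $S_T$ is independent of $S_1,\dots,S_{T-1}$ and of the tail estimator. If the output is $0$, then $|\tilde f_i| < C\sqrt{V}$, and on the concentration event $|f_i| \le (C+C')\sqrt{V}$, giving $f_i^2 = O(V) = O(\|f_{\overline{B'/10}}\|_2^2/B')$. If the output is $\hat f_i^T$, then since $S_T$ is $3\times B/6$ with $B/6 \ge 3B'$ (for $T$ large enough as a function of $\log\log n$), a standard CountSketch analysis, combining the constant-probability event that no top-$(B/6)$ item collides with $i$ in a majority of rows with Markov's inequality on the tail's contribution, gives $|\hat f_i^T - f_i|^2 \le O(\|f_{\overline{B/6}}\|_2^2/(B/6)) \le O(\|f_{\overline{3B'}}\|_2^2/B')$ with constant probability; by the independence of $S_T$, the union of the two favorable events has constant probability, as required.

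The main obstacle is the key concentration step: tying the per-table CS error to $\sqrt{V}$ rather than the generically weaker quantity $\|f_{-i}\|_2/\sqrt{B'}$. This requires carefully exploiting both sides of the two-sided bound on $V$ from Lemma~\ref{lem:tail_estimator} together with the inner median-of-$3$ structure within each $S_j$ to push the per-table success probability strictly above $1/2$; otherwise the Chernoff-type boosting over the $T-1$ independent tables would not yield the $1 - 1/\text{polylog}(n)$ bound required in Case~1, and a looser bound would force the Case~2 subcase where we spuriously output $0$ to incur error larger than $O(\|f_{\overline{\Omega(B')}}\|_2^2/B')$.
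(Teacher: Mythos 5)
Your overall structure matches the paper's: condition on the event $E$ of Lemma~\ref{lem:tail_estimator}, establish median-of-$(T-1)$ concentration for $\tilde f_i$, and in Case~2 split on whether the output is $0$ or $\hat f_i^T$, using the independence of $S_T$. The paper's version of the concentration step is to invoke the ``standard analysis of CountSketch'' to get $|f_i^2 - \tilde f_i^2| \le O\bigl(\|f_{\overline{\Omega(B')}}\|_2^2/B'\bigr)$ with probability $1-\exp(-\Omega(T))$, and then combine this with the guarantees on $V$ in both cases.

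However, your central concentration step --- bounding the per-table CS error by $O(\sqrt V)$ --- contains a directional error that you yourself flag as the ``main obstacle.'' You write that the effective variance $\|f_{\overline{B'}}\|_2^2/B'$ ``together with the upper bound on $V$ yields a bound of order $V$.'' To conclude $\|f_{\overline{B'}}\|_2^2/B' = O(V)$, you would need a \emph{lower} bound on $V$ of the form $V = \Omega(\|f_{\overline{B'}}\|_2^2/B')$. What Lemma~\ref{lem:tail_estimator} supplies is $V \ge \Omega(\|f_{\overline{3B'}}\|_2^2/B')$, and since $\|f_{\overline{3B'}}\|_2 \le \|f_{\overline{B'}}\|_2$, this is strictly weaker; indeed the ratio $\|f_{\overline{B'}}\|_2 / \|f_{\overline{3B'}}\|_2$ is unbounded over frequency vectors whose $\ell_2$ mass is concentrated on indices between $B'$ and $3B'$. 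The upper bound $V = O(\|f_{\overline{B'/10}}\|_2^2/B')$ only tells you $V$ is small, which goes the wrong way for your purpose. This unresolved comparison infects both Case~1 (where you use it to show $|\tilde f_i| < \tau$) and the ``output $0$'' subcase of Case~2 (where you use it to deduce $f_i^2 = O(V)$). The paper sidesteps the issue by leaving the CS error bound stated as $O(\|f_{\overline{\Omega(B')}}\|_2^2/B')$ and asserting the comparison $\tilde f_i^2 \le V$ directly, relying on the implicit constant inside $\Omega(B')$ being compatible with the regime of Lemma~\ref{lem:tail_estimator} rather than on either side of the sandwich alone. To close your gap you would need to either strengthen the lower bound of Lemma~\ref{lem:tail_estimator} so that the tail index on both sides of the sandwich matches the one appearing in the CS error (say $\overline{B'}$ or $\overline{B'/10}$), or show that the $3\times B'$ tables concentrate at the coarser scale $\|f_{\overline{3B'}}\|_2/\sqrt{B'}$ with per-row probability strictly above $1/2$ --- neither of which your proposal establishes.
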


\begin{proof}
Observe that the error in the comparison between the threshold $V$ and $\tilde{f}_i$ is bounded by $V$ plus the estimation
error of $\tilde{f}_{i}$. By the standard analysis of the CountSketch, 
with probability $1-\exp\left(\Omega\left(T\right)\right)$,
\[
\left|f_{i}^{2}-\tilde{f}_{i}^{2}\right|\le O\left(\left\Vert f_{\overline{\Omega(B')}}\right\Vert _{2}^{2}/B'\right)
\]

Thus, if $f_i^2 \le \left\Vert f_{\overline{3B'}}\right\Vert _{2}^{2}/(12B')$ then with
probability $1-\exp\left(\Omega\left(B\right)\right)-1/polylog(n)$, we have $V\ge \tilde{f}_i$ and the algorithm outputs $0$.

On the other hand, consider $f_i^2 \ge \left\Vert f_{\overline{3B'}}\right\Vert _{2}^{2}/(12B')$. First, consider the case when the algorithm outputs $0$. Except for a failure probability of $\exp\left(\Omega\left(B\right)\right)+1/polylog(n)$, it must be the case that $f_i^2 = O\left(\left\Vert f_{\overline{3B'}}\right\Vert _{2}^{2}/(12B')\right)$ so we have $|f_i^2-0| =O\left(\left\Vert f_{\overline{3B'}}\right\Vert _{2}^{2}/(12B')\right)$. Next, consider the case when the algorithm outputs the answer from $S_T$. The correctness guarantee of this case follows from the standard analysis of CountSketch, which guarantees that for a single row of CountSketch with $B$ columns, with constant probability,$\left|f_{i}^{2}-\tilde{f}_{i}^{2}\right|\le O\left(\left\Vert f_{\overline{\Omega(B)}}\right\Vert _{2}^{2}/B\right)$.
\end{proof}

\begin{proof}[Proof of Lemma \ref{lem:worst_case}]
Note that we are assuming Lemma \ref{lem:tail_estimator} is satisfied, which happens with probability $1-1/\text{poly}(n)$. For elements with true frequencies less than  $O(\|f_{\overline{B'}}\|_2/\sqrt{B'})$ for $B' = O(B/\log \log n)$, we either we either use the last CS table in Algorithm \ref{alg:main_unlearned} or we set the estimate to be $0$. In either case, the inequality holds as $O(\|f_{\overline{B'}}\|_2/\sqrt{B'})$ is the expected error of a standard $1 \times B'$ CS table.

For elements with frequency larger than $O(\|f_{\overline{B'}}\|_2/\sqrt{B'})$, we ideally want to use the last CS table in Algorithm \ref{alg:main_unlearned}. In such a case, we easily satisfy the desired inequality since we are using a CS table with even more columns. But there is a small probability we output $0$. We can easily handle this as follows. Let $f_i = \ell \|f_{\overline{B'}}\|_2/\sqrt{B'}$ be the frequency of element $i$ for $\ell \ge C$ for a large constant $C$. Any fixed CS table with $B'$ columns gives us expected error $\|f_{\overline{B'}}\|_2/\sqrt{B'}$, so the probability that it estimates the frequency of $f_i$ to be smaller than $\|f_{\overline{B'}}\|_2/\sqrt{B'}$ is at most $1/\Omega(\ell)$ by a straightforward application of Markov's inequality. Since we take the median across $\Theta(\log \log n)$ different CS tables in Algorithm \ref{alg:main_unlearned}, a standard Chernoff bound implies that the probability the median estimate is smaller than $O(|f_{\overline{B'}}\|_2/\sqrt{B'})$ is at most $(1/\ell)^{\Omega(\log \log n)}$. In particular, the expected error of our estimate is at most $\ll \left(\ell \|f_{\overline{B'}}\|_2/\sqrt{B'}\right) \cdot 1/\ell = O(\|f_{\overline{B'}}\|_2/\sqrt{B'})$, which can be upper bounded by the expected error of CS table with $cB/\log\log n$ columns for a sufficiently small $c$, completing the proof.
\end{proof}

\section{Concentration bounds}\label{concentration}
In this appendix we collect some concentration inequalities for reference in the main body of the paper. The inequality we will use the most is Bennett's inequality. However, we remark that for our applications, several other variance based concentration result would suffice, e.g., Bernstein's inequality. 
 \begin{theorem}[Bennett's inequality~\cite{bennett1962}]\label{thm:Bennett}
Let $X_1,\dots,X_n$ be independent, mean zero random variables. Let $S=\sum_{i=1}^n X_i$, and $\sigma^2,M>0$ be such that $\Var[S]\leq \sigma^2$ and $|X_i|\leq M$ for all $i\in [n]$. For any $t\geq 0$,
\begin{align*}
\Pr[S\geq t]\leq \exp \left(-\frac{\sigma^2}{M^2} h \left( \frac{tM}{\sigma^2} \right)\right),
\end{align*}
where $h:\R_{\geq 0} \to \R_{\geq 0 }$ is defined by $h(x)=(x+1) \log (x+1)-x$. The same tail bound holds on the probability $\Pr[S\leq -t]$.
\end{theorem}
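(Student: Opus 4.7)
The plan is to use the Cram\'er--Chernoff (exponential Markov) method: for every $\lambda>0$,
$\Pr[S\geq t]\leq e^{-\lambda t}\,\mathbb{E}[e^{\lambda S}]=e^{-\lambda t}\prod_{i=1}^{n}\mathbb{E}[e^{\lambda X_i}]$, where the product step uses independence of the $X_i$. The task then reduces to bounding each moment generating function $\mathbb{E}[e^{\lambda X_i}]$ in a form whose exponent is a clean function of $\lambda M$ and $\mathrm{Var}[X_i]$, and finally optimizing the free parameter $\lambda$.

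The key analytic input is the elementary inequality that for any real $x\leq M$,
\[
e^{\lambda x}-1-\lambda x\ \leq\ \frac{x^{2}}{M^{2}}\bigl(e^{\lambda M}-1-\lambda M\bigr),
\]
which follows because the map $x\mapsto (e^{\lambda x}-1-\lambda x)/x^{2}$ (extended continuously at $0$) is nondecreasing in $x$. Combined with $\mathbb{E}[X_i]=0$ and $1+y\leq e^{y}$, this gives $\mathbb{E}[e^{\lambda X_i}]\leq \exp\!\bigl(\mathbb{E}[X_i^{2}]\,(e^{\lambda M}-1-\lambda M)/M^{2}\bigr)$. Taking the product over $i$ and using $\sum_i \mathbb{E}[X_i^2]=\mathrm{Var}[S]\leq \sigma^{2}$ yields
\[
\Pr[S\geq t]\ \leq\ \exp\!\left(-\lambda t+\frac{\sigma^{2}}{M^{2}}\bigl(e^{\lambda M}-1-\lambda M\bigr)\right).
\]

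The final step is to minimize the right-hand side over $\lambda>0$. Setting the derivative to zero gives $e^{\lambda M}=1+tM/\sigma^{2}$, i.e.\ $\lambda M=\log(1+tM/\sigma^{2})$. Substituting back and simplifying with $u:=tM/\sigma^{2}$, the exponent becomes
\[
-\lambda t+\frac{\sigma^{2}}{M^{2}}\bigl(e^{\lambda M}-1-\lambda M\bigr)=\frac{\sigma^{2}}{M^{2}}\bigl[u-(1+u)\log(1+u)\bigr]=-\frac{\sigma^{2}}{M^{2}}\,h\!\left(\frac{tM}{\sigma^{2}}\right),
\]
which is the claimed bound. The one-sided lower-tail bound $\Pr[S\leq -t]\leq \exp(-\tfrac{\sigma^{2}}{M^{2}}h(tM/\sigma^{2}))$ is obtained by applying the same argument to the variables $-X_i$, which are again independent, mean zero, bounded by $M$ in absolute value, and have the same variance sum.

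The main obstacle is the convexity-type inequality bounding $e^{\lambda x}-1-\lambda x$ by a multiple of $x^{2}$: one needs the correct normalization ($(e^{\lambda M}-1-\lambda M)/M^{2}$) so that after taking expectations only $\mathrm{Var}[X_i]$ (not higher moments) appears, and one needs monotonicity of $(e^{\lambda x}-1-\lambda x)/x^{2}$ which is verified by checking that its derivative has the sign of $(\lambda x-2)e^{\lambda x}+\lambda x+2\geq 0$ via a Taylor expansion. The subsequent optimization in $\lambda$ is routine calculus once one recognizes the substitution $u=tM/\sigma^{2}$ that produces the function $h$.
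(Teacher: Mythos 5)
The paper does not prove this statement; it is imported verbatim as a cited classical result (Bennett, 1962), so there is no internal proof to compare against. Your argument is the standard and correct Cram\'er--Chernoff derivation of Bennett's inequality (the only cosmetic imprecision is that the numerator $(\lambda x-2)e^{\lambda x}+\lambda x+2$ of the derivative of $(e^{\lambda x}-1-\lambda x)/x^2$ is nonnegative only for $x\geq 0$ and nonpositive for $x\leq 0$, which combined with the sign of the $x^3$ in the denominator still yields the needed monotonicity).
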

\begin{remark}\label{asymptotics}
    For $x\geq 0$, $\frac{1}{2}x \log (x+1) \leq h(x) \leq x \log (x+1)$. We will use these asymptotic bounds repeatedly in this paper. 
\end{remark}

A corollary of Bennett's inequality is the classic Chernoff bounds.

\begin{theorem}[Chernoff~\cite{Che52:chernoff}]\label{thm:Chernoff}
Let $X_1,\dots,X_n\in [0,1]$ be independent random variables and $S=\sum_{i=1}^n X_i$. Let $\mu=\E[S]$. Then
\begin{align*}
\Pr[S\geq (1+\delta)\mu ] \leq \exp(-\mu h(\delta)).
\end{align*}
\end{theorem}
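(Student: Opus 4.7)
The plan is to derive the Chernoff bound directly as a corollary of Bennett's inequality (Theorem~\ref{thm:Bennett}) by centering the variables and estimating the variance. First, I would introduce $Y_i = X_i - \E[X_i]$ for each $i \in [n]$ so that the $Y_i$ are independent and mean-zero, and observe that since $X_i \in [0,1]$ and $\E[X_i] \in [0,1]$, we have $Y_i \in [-1,1]$, so we may take $M = 1$ in Bennett's inequality.

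Next I would bound the variance. Using that $X_i \in [0,1]$ implies $X_i^2 \le X_i$, one gets $\Var[Y_i] = \Var[X_i] = \E[X_i^2] - \E[X_i]^2 \le \E[X_i]$. By independence,
\begin{equation*}
\Var\!\left[\sum_{i=1}^n Y_i\right] = \sum_{i=1}^n \Var[Y_i] \le \sum_{i=1}^n \E[X_i] = \mu,
\end{equation*}
so we may take $\sigma^2 = \mu$.

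Now I would apply Bennett's inequality to $\sum_{i=1}^n Y_i = S - \mu$ with $t = \delta \mu$:
\begin{equation*}
\Pr[S \ge (1+\delta)\mu] = \Pr\!\left[\sum_{i=1}^n Y_i \ge \delta \mu \right] \le \exp\!\left(-\frac{\mu}{1} \cdot h\!\left(\frac{\delta \mu \cdot 1}{\mu}\right)\right) = \exp(-\mu\, h(\delta)),
\end{equation*}
which is precisely the desired bound.

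There is no real obstacle here: the only subtlety is the variance estimate $\Var[X_i] \le \E[X_i]$, which relies on $X_i \in [0,1]$, and the fact that the centered variables are bounded in $[-1,1]$ so that $M=1$ is a valid choice. The identity $\sigma^2/M^2 \cdot h(tM/\sigma^2) = \mu \cdot h(\delta)$ then works out automatically after plugging in $M=1$, $\sigma^2 = \mu$, $t = \delta\mu$, which is what makes the Chernoff form emerge cleanly from Bennett.
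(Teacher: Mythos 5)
Your derivation is correct and matches the paper's approach exactly: the paper presents Theorem~\ref{thm:Chernoff} as a corollary of Bennett's inequality without spelling out the details, and your centering argument with $M=1$, $\sigma^2=\mu$ (via $\Var[X_i]\le\E[X_i^2]\le\E[X_i]$ for $X_i\in[0,1]$), and $t=\delta\mu$ is precisely the intended derivation. Note that the paper's statement of Bennett's inequality already permits $\sigma^2$ to be any upper bound on $\Var[S]$, so substituting $\sigma^2=\mu$ is immediately legitimate.
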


Even weaker than Chernoff's inequality is Hoeffding's inequality.

\begin{theorem}[Hoeffding~\cite{hoeffding1963inequality}]\label{thm:Hoeffding}
Let $X_1,\dots,X_n \in [0,1]$ be independent random variables. Let $S=\sum_{i=1}^n X_i$. Then 
\begin{align*}
\Pr[S-\E[S]\geq t]\leq e^{-\frac{2t^2}{n}}.
\end{align*}
\end{theorem}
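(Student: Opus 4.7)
The plan is to follow the classical Chernoff--Markov strategy: bound the upper tail of $S - \E[S]$ via an exponential moment inequality, factor the moment generating function over the independent summands using Hoeffding's lemma for bounded variables, and then optimize over the free parameter. First I would fix $\lambda > 0$ and write
\[
\Pr[S - \E[S] \geq t] = \Pr\bigl[e^{\lambda(S - \E[S])} \geq e^{\lambda t}\bigr] \leq e^{-\lambda t}\, \E\bigl[e^{\lambda(S - \E[S])}\bigr] = e^{-\lambda t}\prod_{i=1}^n \E\bigl[e^{\lambda(X_i - \E[X_i])}\bigr],
\]
using Markov's inequality in the middle step and the independence of the $X_i$ to factor the expectation at the end.

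The second step is the heart of the argument: Hoeffding's lemma, which asserts that if $Y$ is a centered random variable taking values in an interval $[a,b]$, then $\E[e^{\lambda Y}] \leq e^{\lambda^2(b-a)^2/8}$ for every $\lambda \in \R$. The key idea is to exploit convexity of $y \mapsto e^{\lambda y}$, bounding it on $[a,b]$ by its chord $\frac{b-y}{b-a}e^{\lambda a} + \frac{y-a}{b-a}e^{\lambda b}$. Taking expectations and using $\E[Y]=0$ leaves a function $\psi(\lambda)$ of $\lambda$ alone; setting $\phi(\lambda) = \log \psi(\lambda)$, one checks directly that $\phi(0) = \phi'(0) = 0$, so the remainder form of Taylor's theorem gives $\phi(\lambda) = \tfrac{1}{2}\lambda^2 \phi''(\xi)$ for some $\xi$ between $0$ and $\lambda$. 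The technical crux is then the uniform bound $\phi''(\xi) \leq (b-a)^2/4$, which follows by writing $\phi''(\xi) = (b-a)^2 \, p(\xi)(1-p(\xi))$ for an appropriate $p(\xi) \in [0,1]$ and applying AM--GM. Applied to $Y_i = X_i - \E[X_i]$, whose support lies in an interval of length at most $1$ since $X_i \in [0,1]$, this yields $\E[e^{\lambda Y_i}] \leq e^{\lambda^2/8}$.

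Combining the two steps produces $\Pr[S - \E[S] \geq t] \leq e^{-\lambda t + n\lambda^2/8}$ for every $\lambda > 0$. Optimizing by differentiating the exponent gives $\lambda^\star = 4t/n$, and substituting back yields $-\lambda^\star t + n(\lambda^\star)^2/8 = -4t^2/n + 2t^2/n = -2t^2/n$, which is the claimed bound. The only genuinely nontrivial step is the Taylor-remainder estimate $\phi''(\xi) \leq (b-a)^2/4$ inside Hoeffding's lemma; everything else reduces to Markov's inequality, independence, and a one-variable optimization.
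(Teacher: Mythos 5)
Your proposal is correct and complete: the Chernoff--Markov exponential moment bound, Hoeffding's lemma via the convexity chord and the Taylor-remainder estimate $\phi''(\xi)\leq (b-a)^2/4$, and the optimization $\lambda^\star = 4t/n$ all check out, and the arithmetic indeed yields the exponent $-2t^2/n$. The paper states this theorem as a cited classical result (Hoeffding, 1963) without proof, so there is nothing to compare against; your argument is precisely the standard derivation one would supply.
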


\newpage

\section{Additional Experiments}\label{appendix-experiment}

In this section, we display figures for synthetic Zipfian data and additional figures for the CAIDA and AOL datasets.

\begin{figure}[h]
    \centering
    \includegraphics[width=0.49\textwidth]{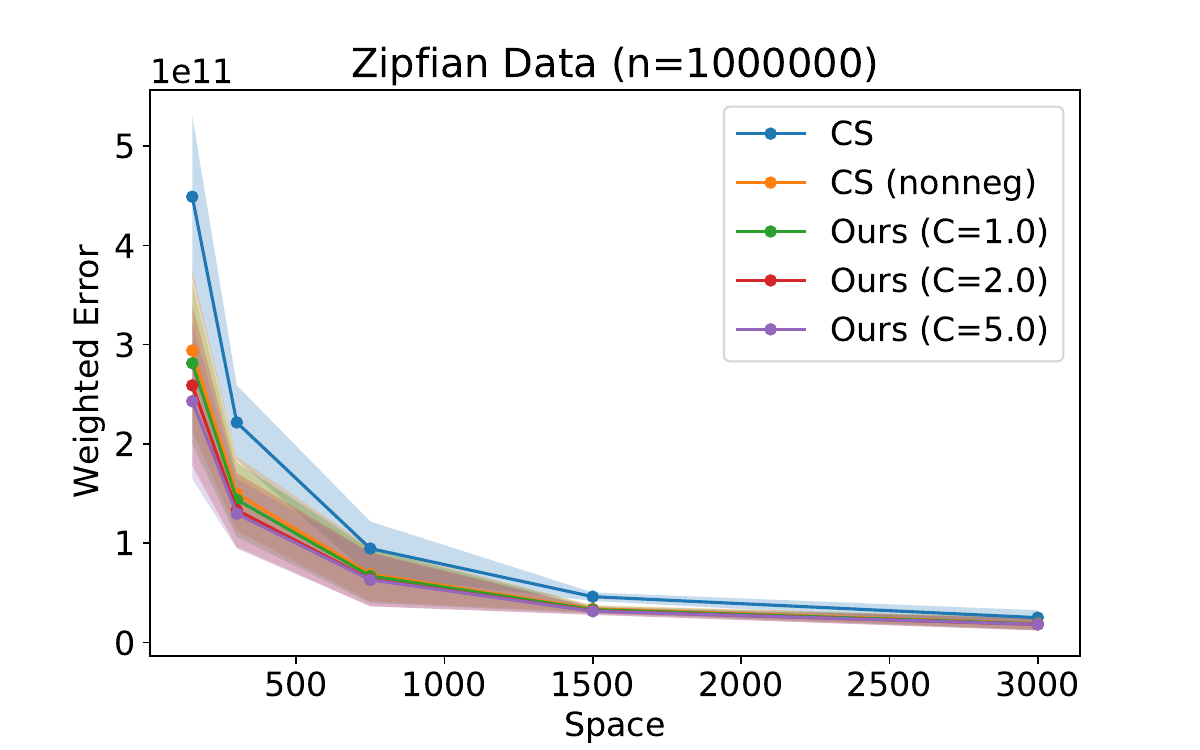}
    \includegraphics[width=0.49\textwidth]{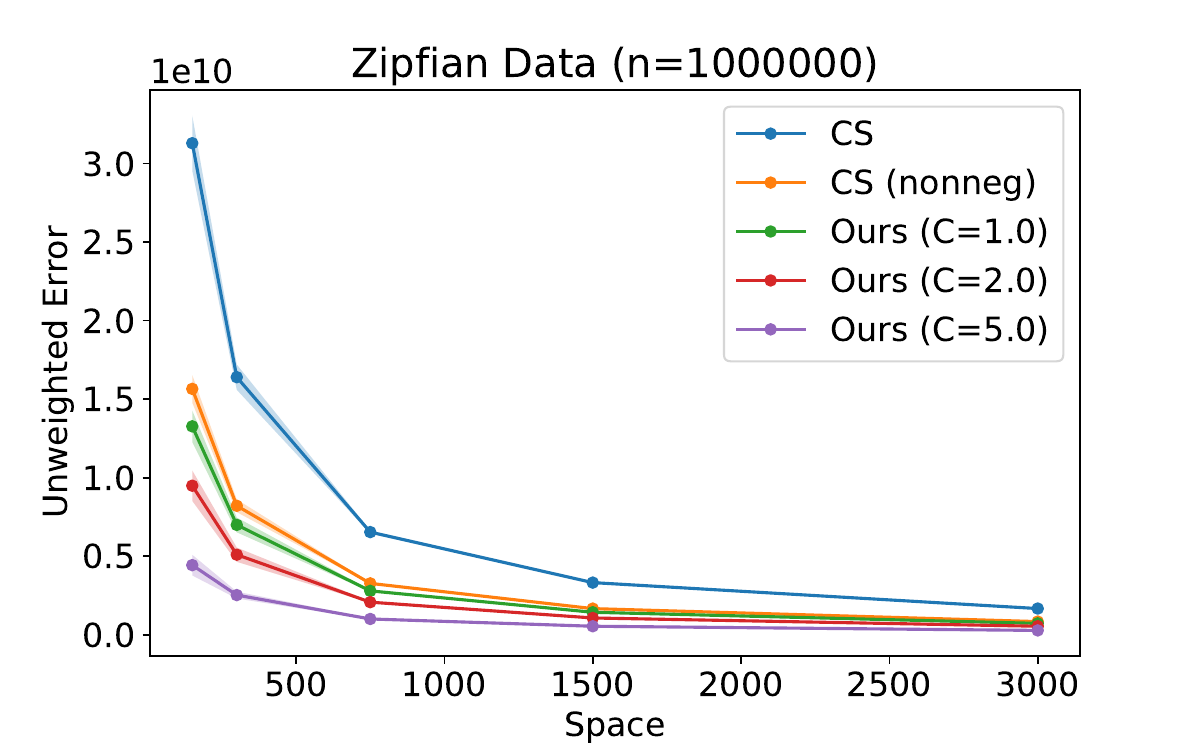}
    \caption{Comparison of weighted and unweighted error without predictions on Zipfian data.}
    \label{fig:zipf}
\end{figure}

\begin{figure}[h]
    \centering
    \includegraphics[width=0.49\textwidth]{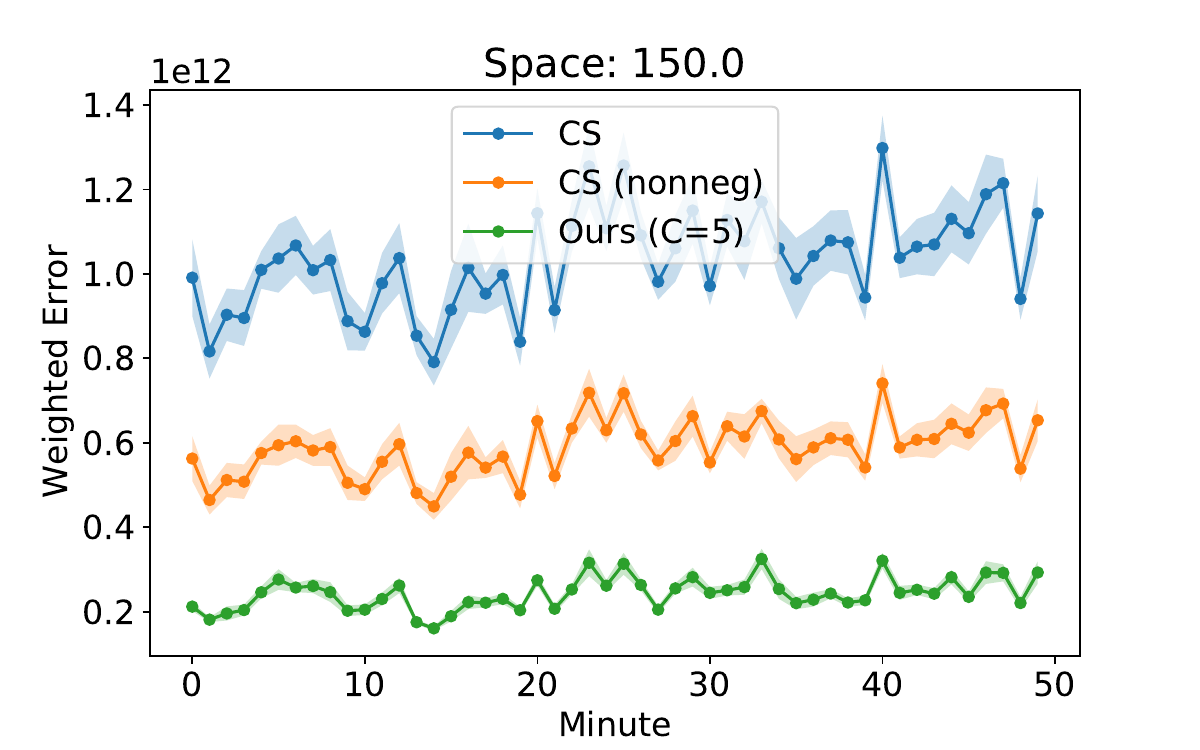}
    \includegraphics[width=0.49\textwidth]{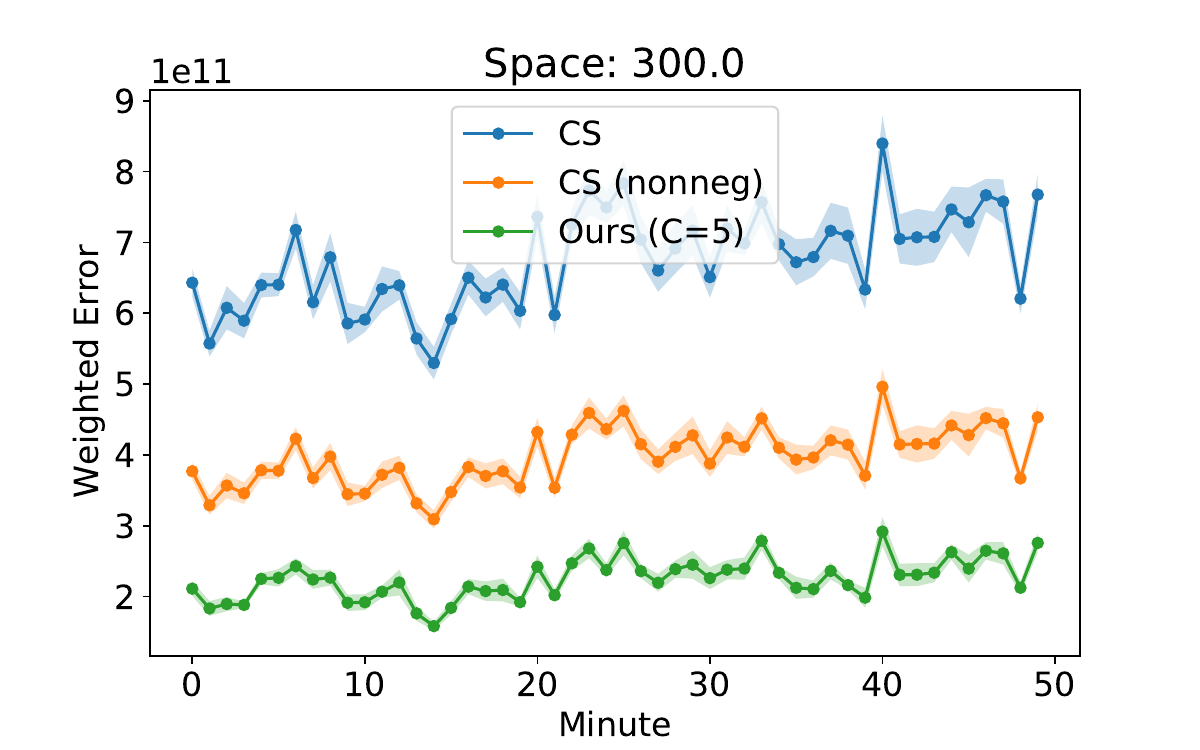}
    \includegraphics[width=0.49\textwidth]{plots/ip/werr-predFalse-space750.0.pdf}
    \includegraphics[width=0.49\textwidth]{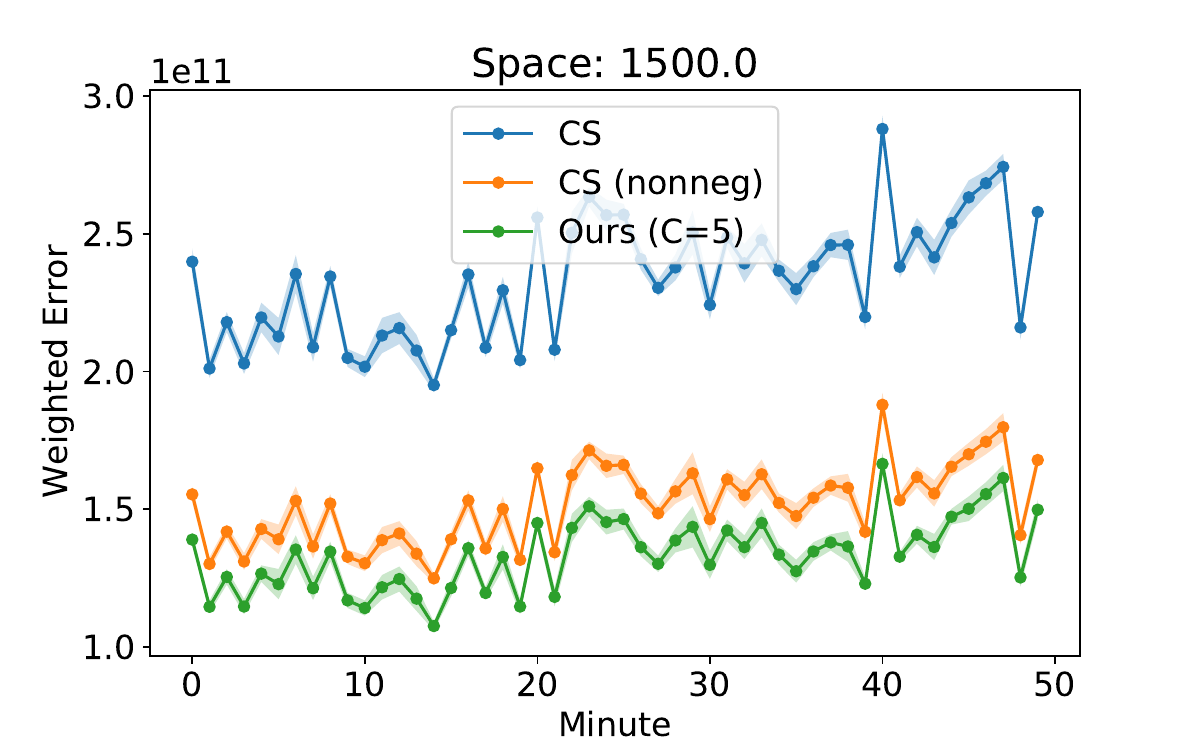}
    \includegraphics[width=0.49\textwidth]{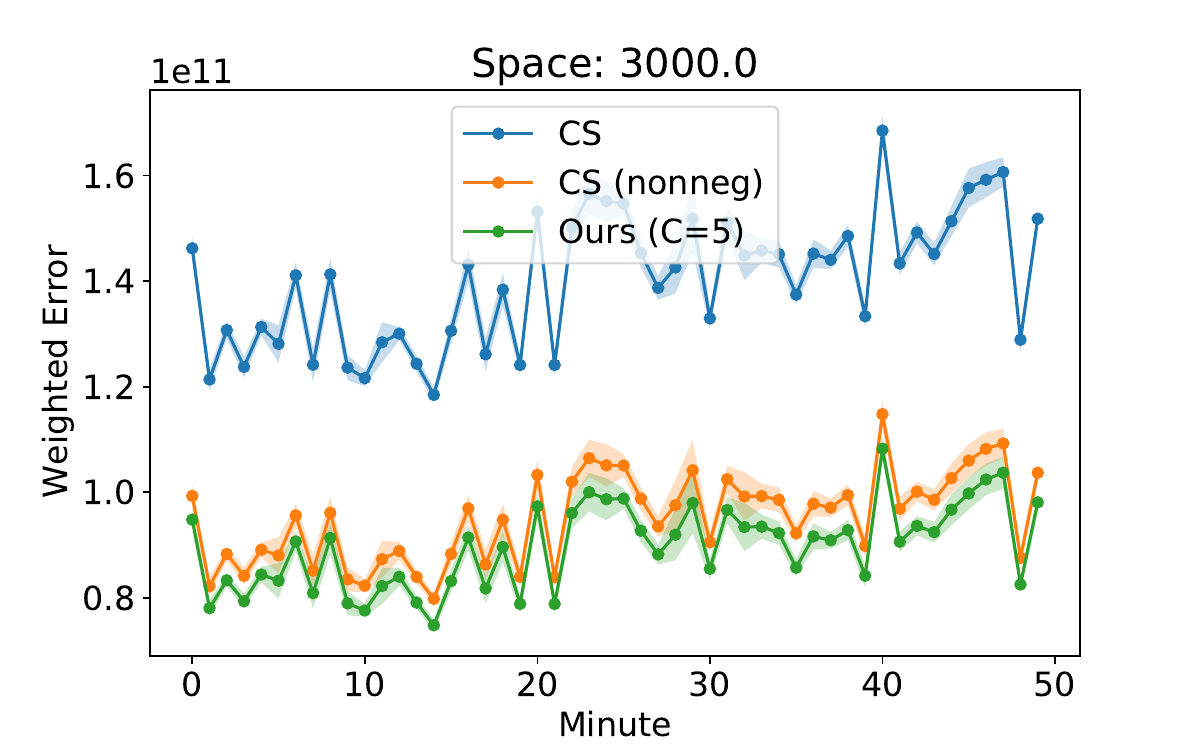}
    \caption{Comparison of weighted errors without predictions on the CAIDA dataset}
\end{figure}

\begin{figure}
    \centering
    \includegraphics[width=0.49\textwidth]{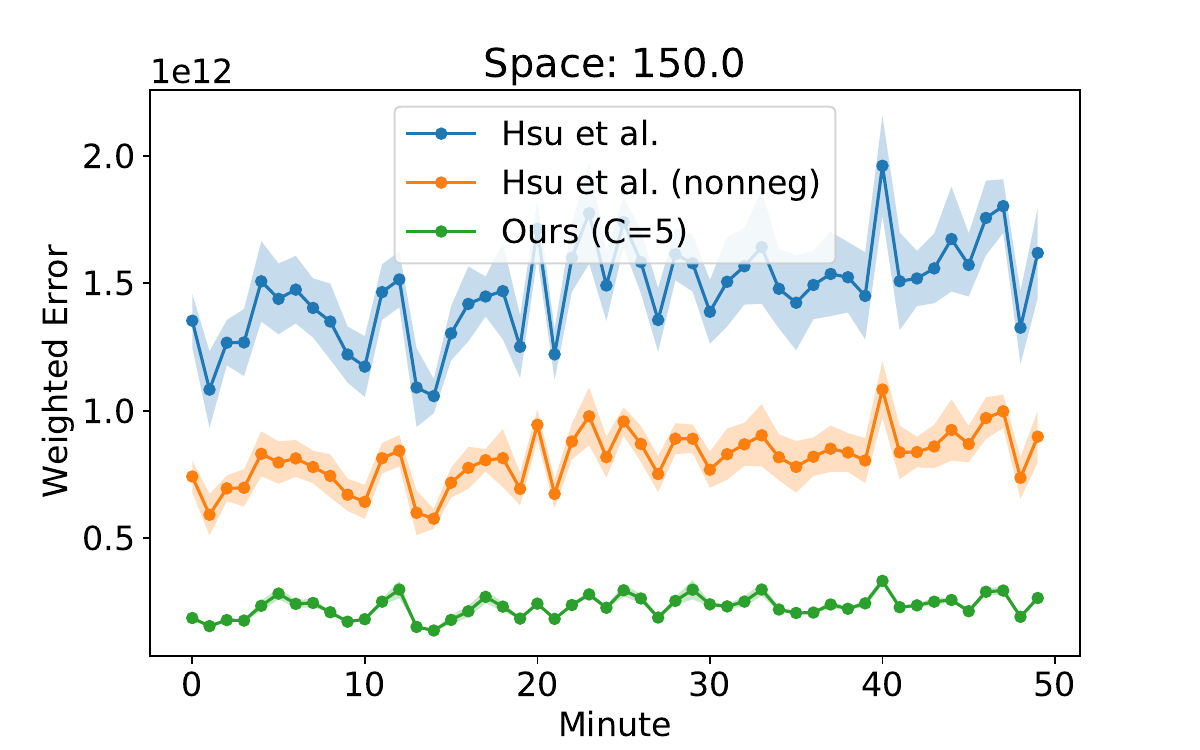}
    \includegraphics[width=0.49\textwidth]{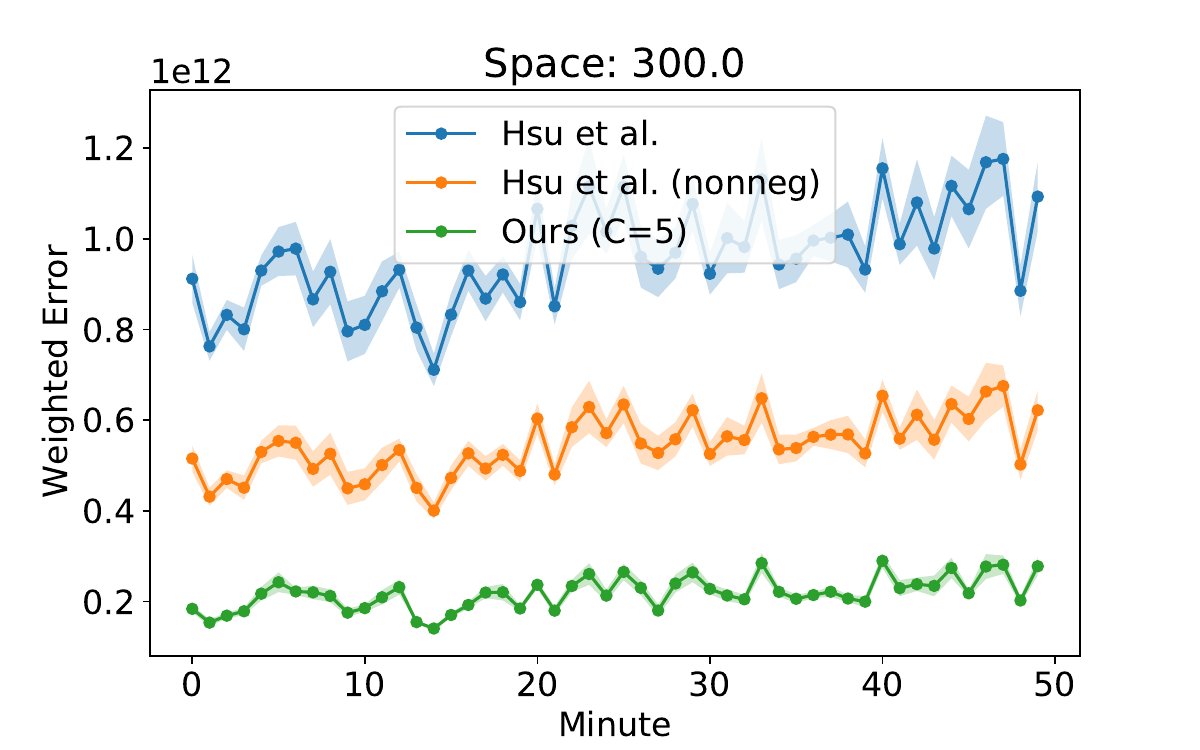}
    \includegraphics[width=0.49\textwidth]{plots/ip/werr-predTrue-space750.0.pdf}
    \includegraphics[width=0.49\textwidth]{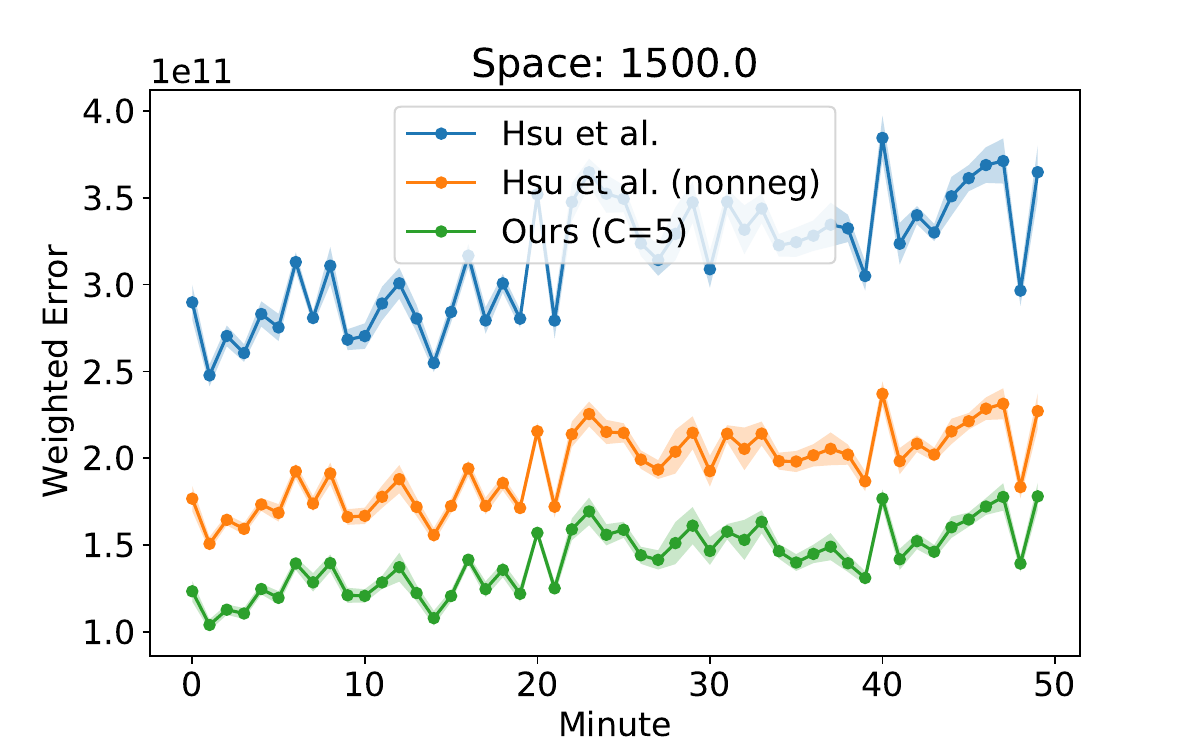}
    \includegraphics[width=0.49\textwidth]{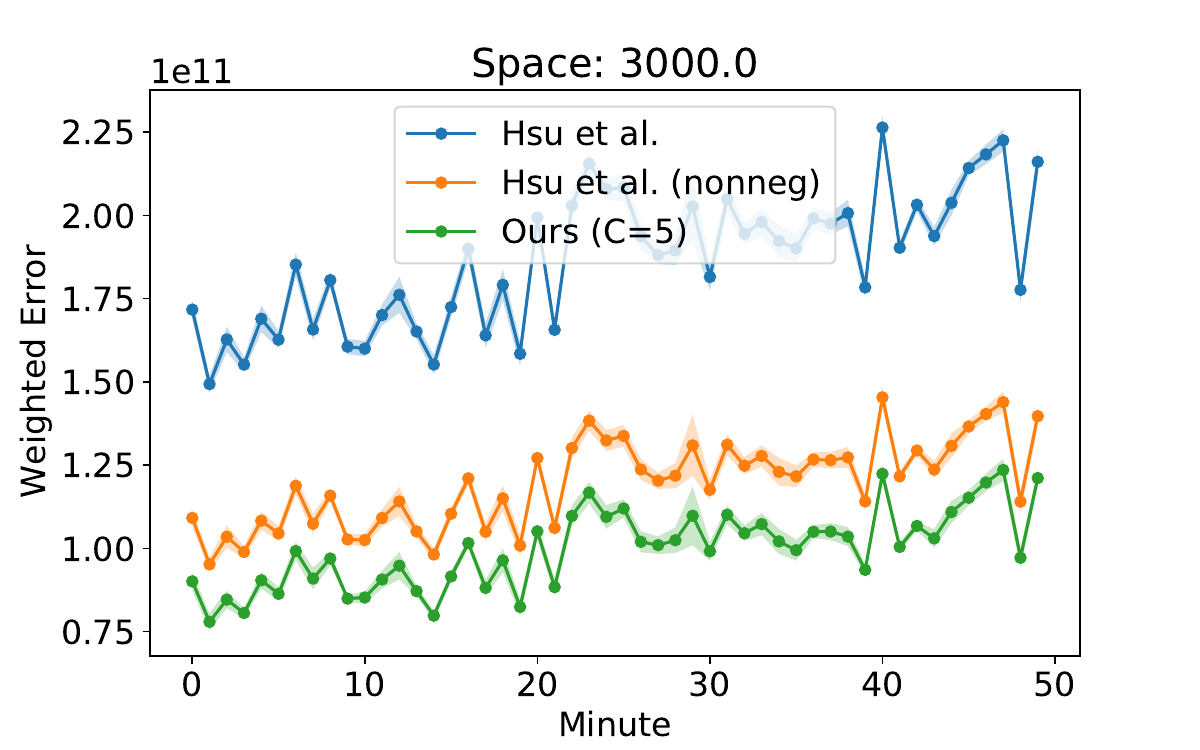}
    \caption{Comparison of weighted errors with predictions on the CAIDA dataset}
\end{figure}

\begin{figure}
    \centering
    \includegraphics[width=0.49\textwidth]{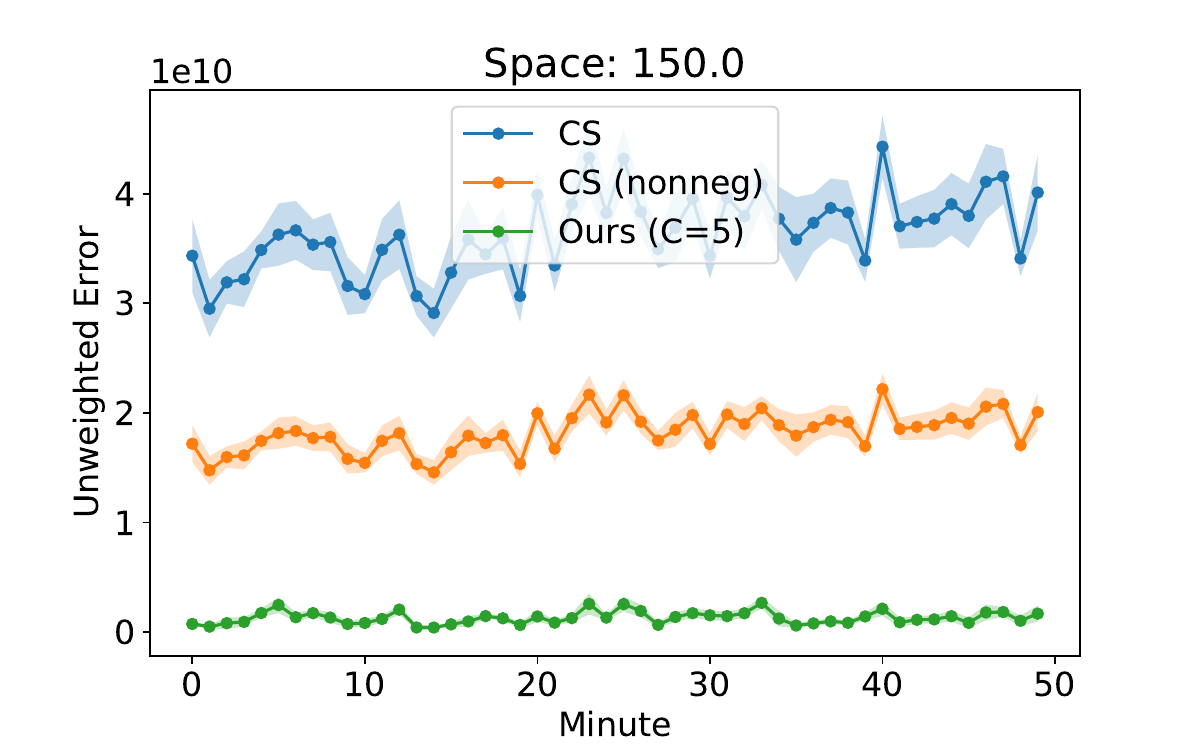}
    \includegraphics[width=0.49\textwidth]{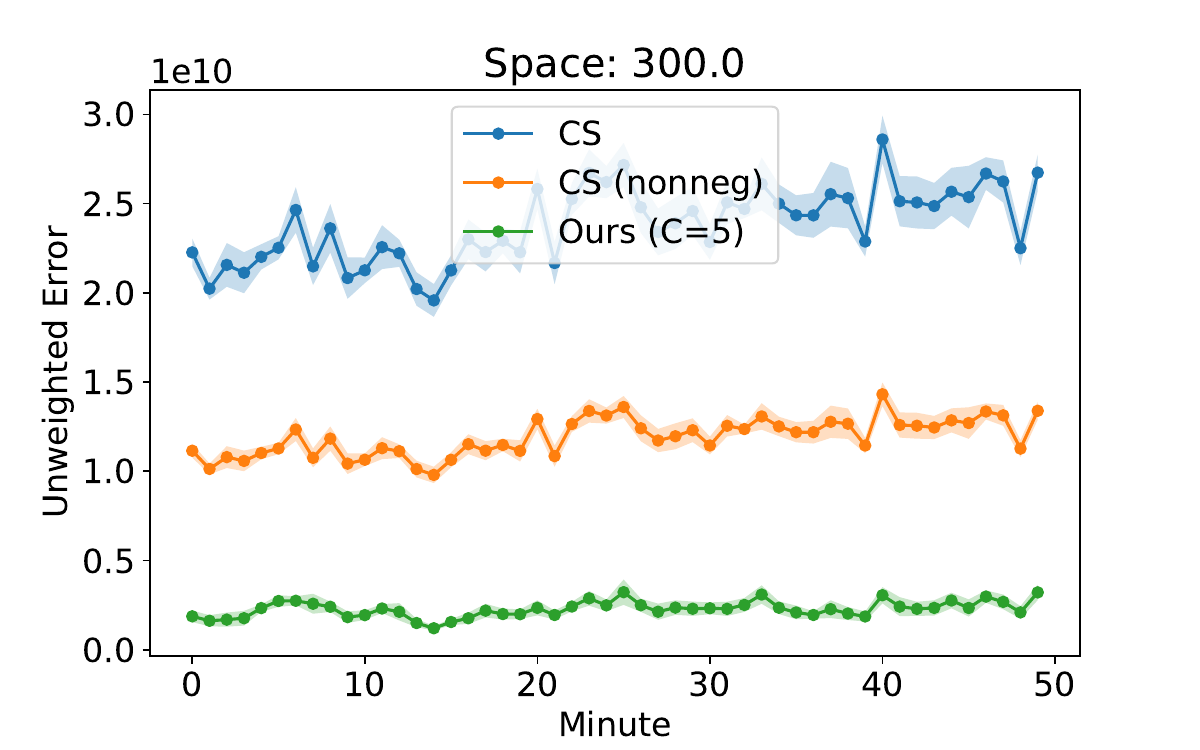}
    \includegraphics[width=0.49\textwidth]{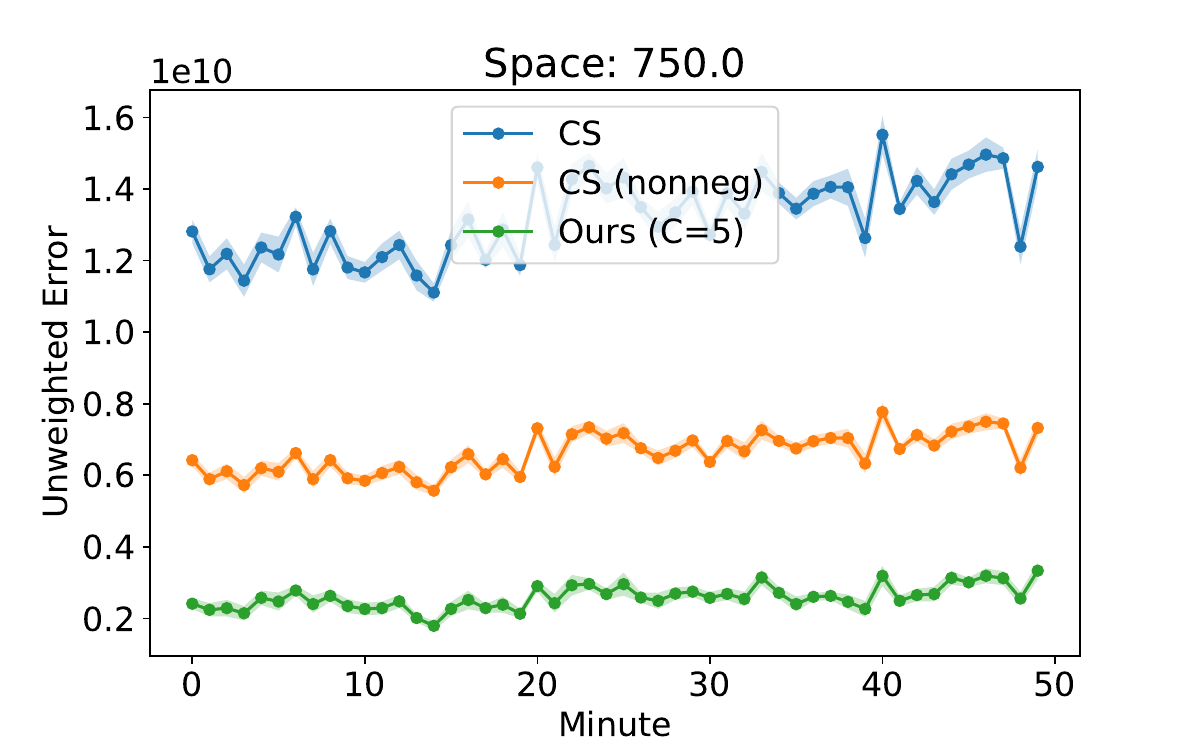}
    \includegraphics[width=0.49\textwidth]{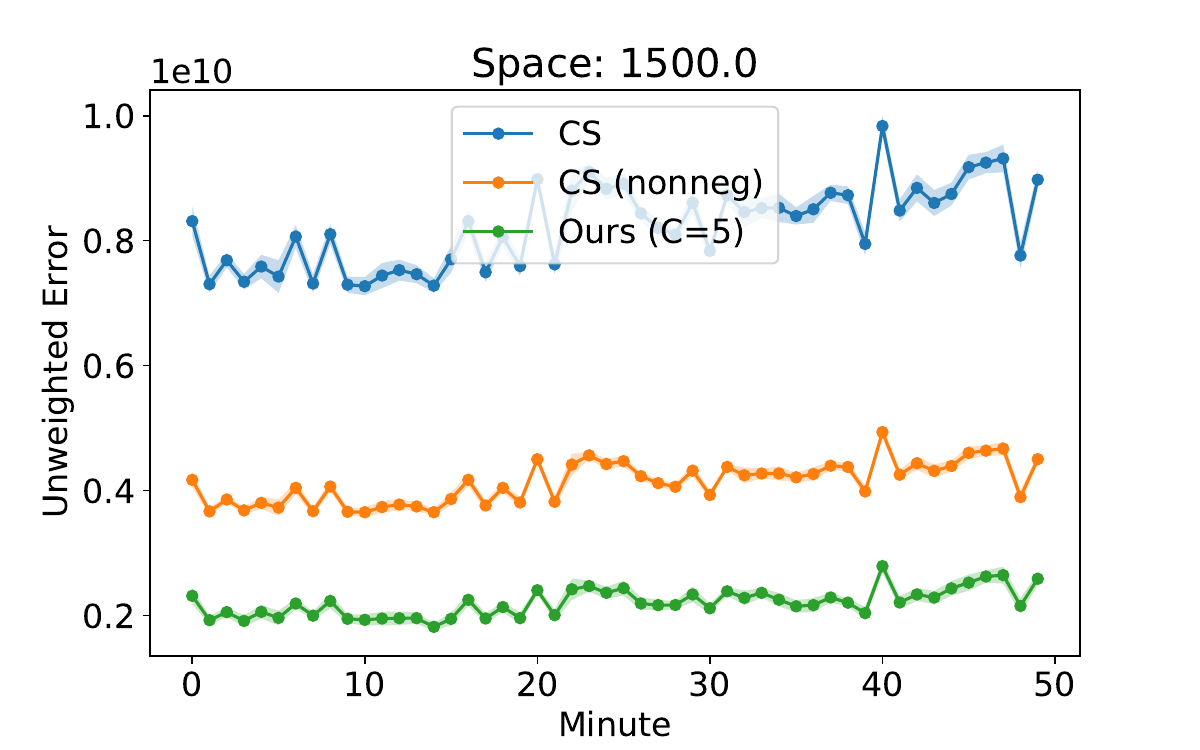}
    \includegraphics[width=0.49\textwidth]{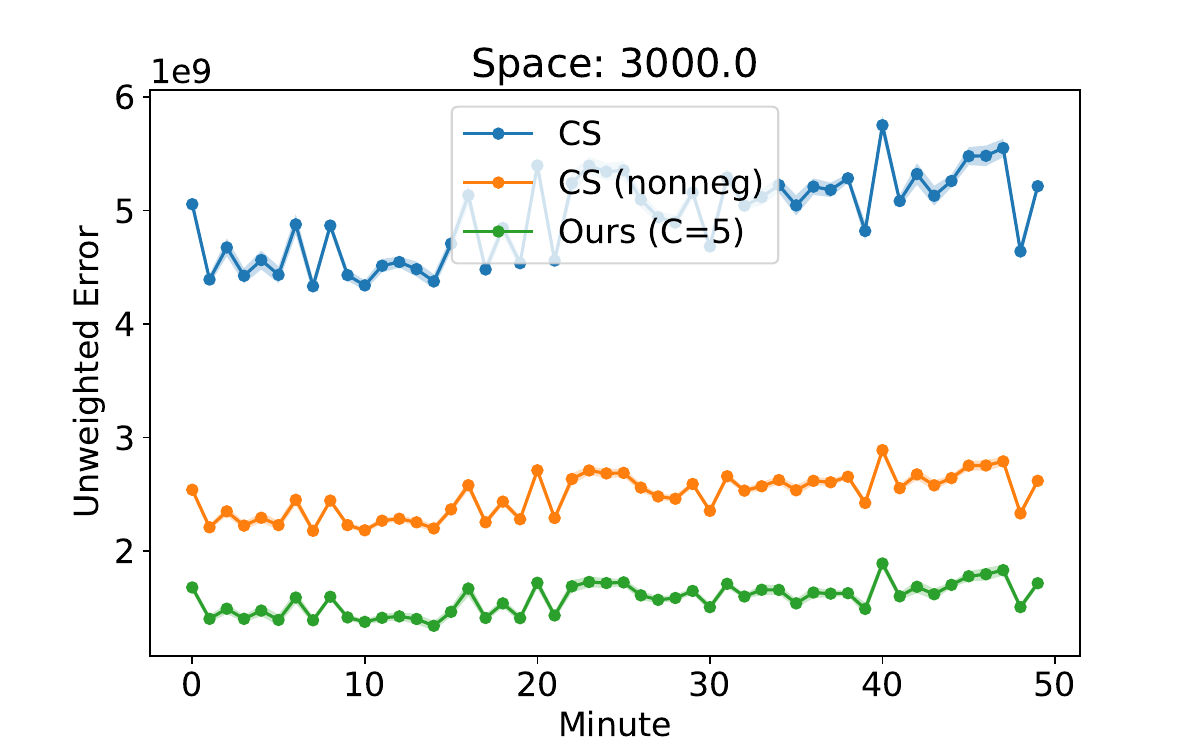}
    \caption{Comparison of unweighted errors without predictions on the CAIDA dataset}
\end{figure}

\begin{figure}
    \centering
    \includegraphics[width=0.49\textwidth]{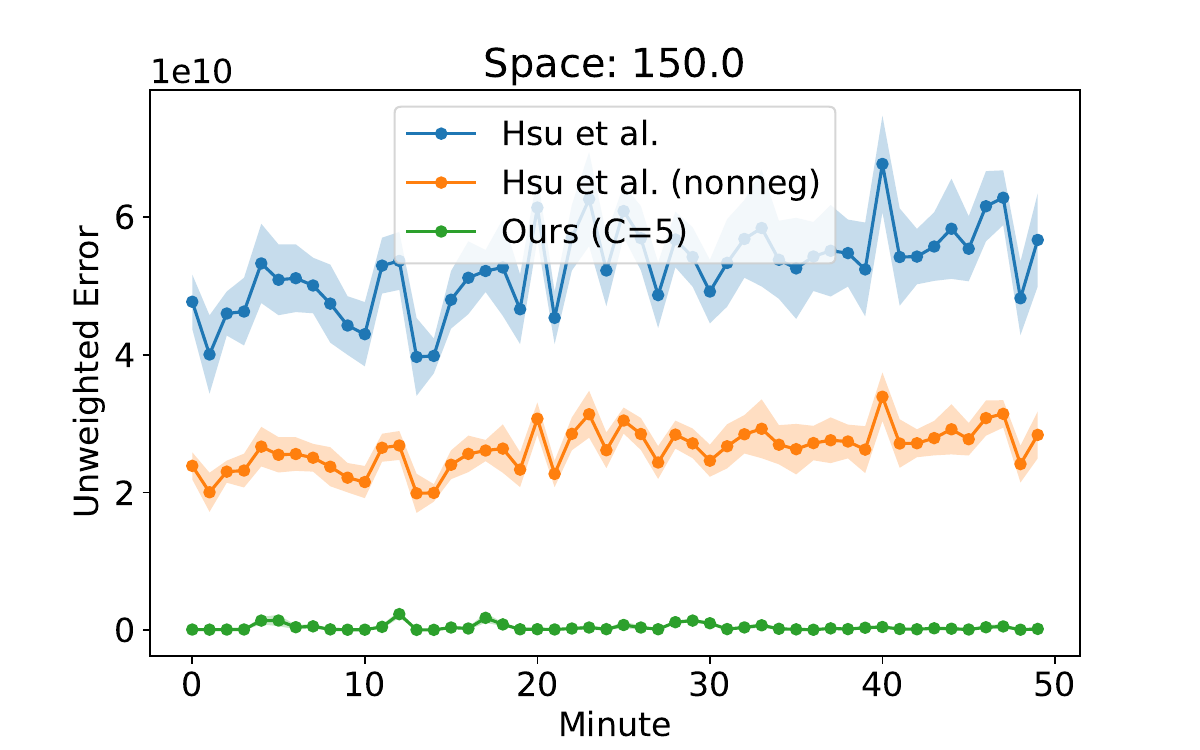}
    \includegraphics[width=0.49\textwidth]{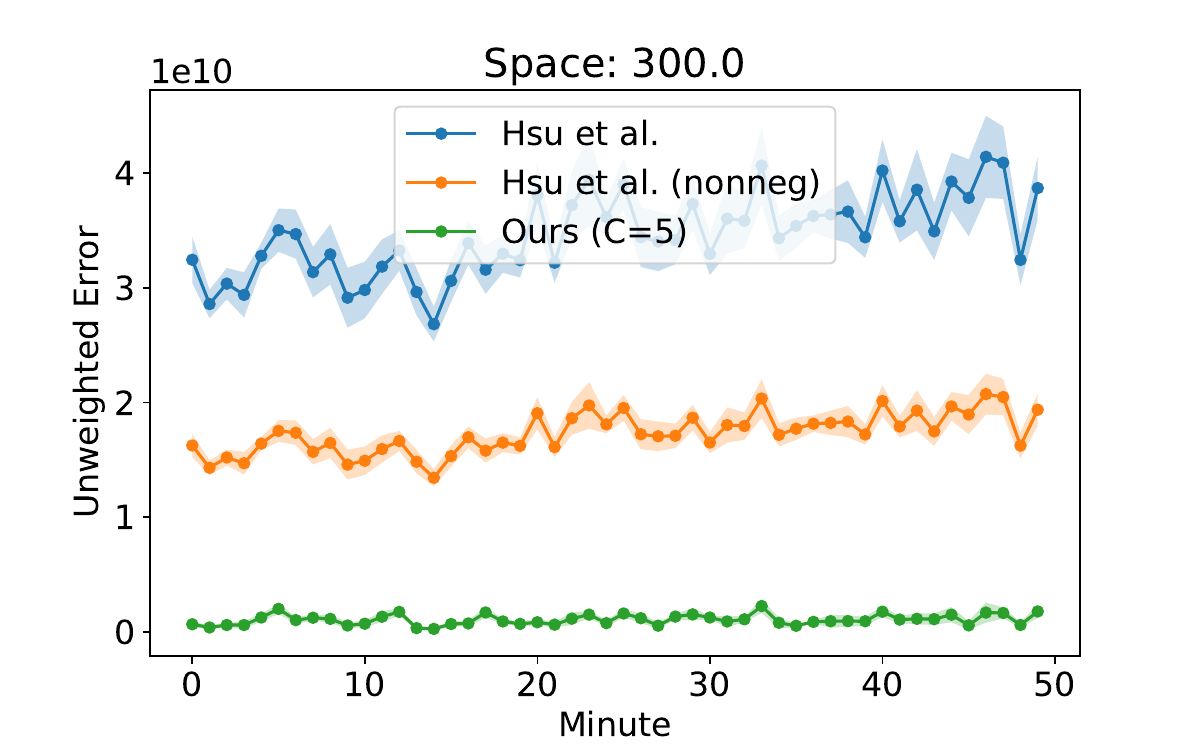}
    \includegraphics[width=0.49\textwidth]{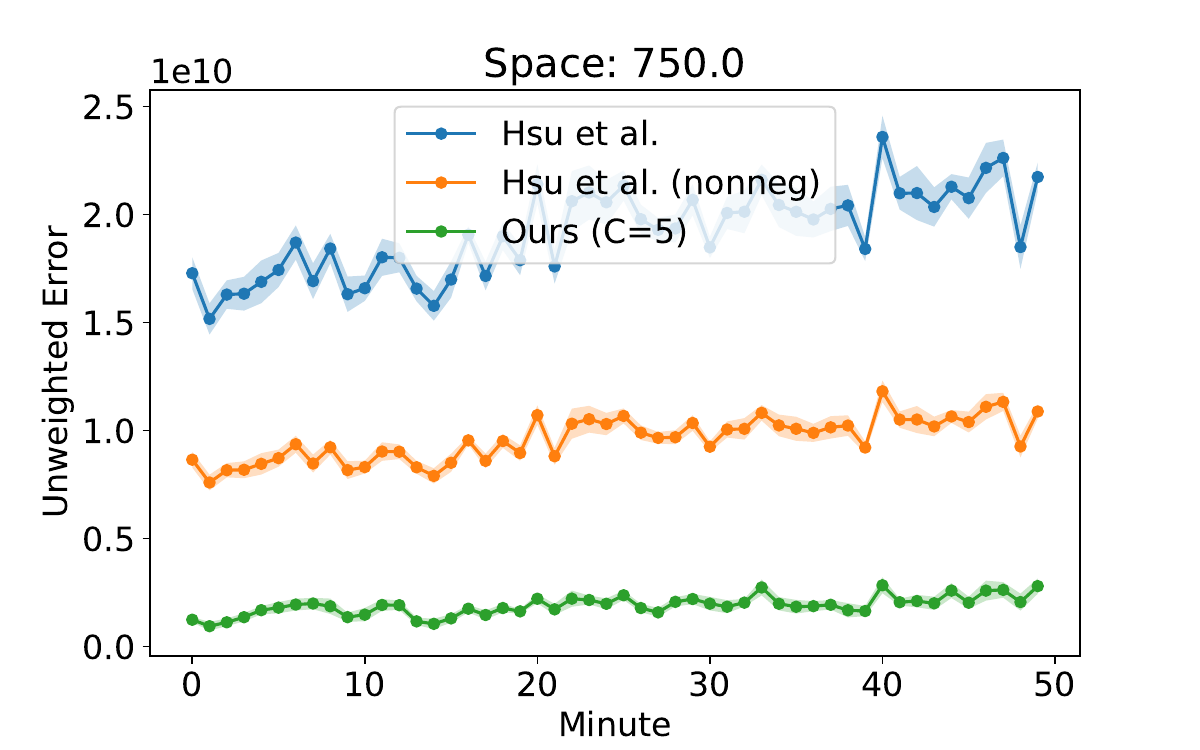}
    \includegraphics[width=0.49\textwidth]{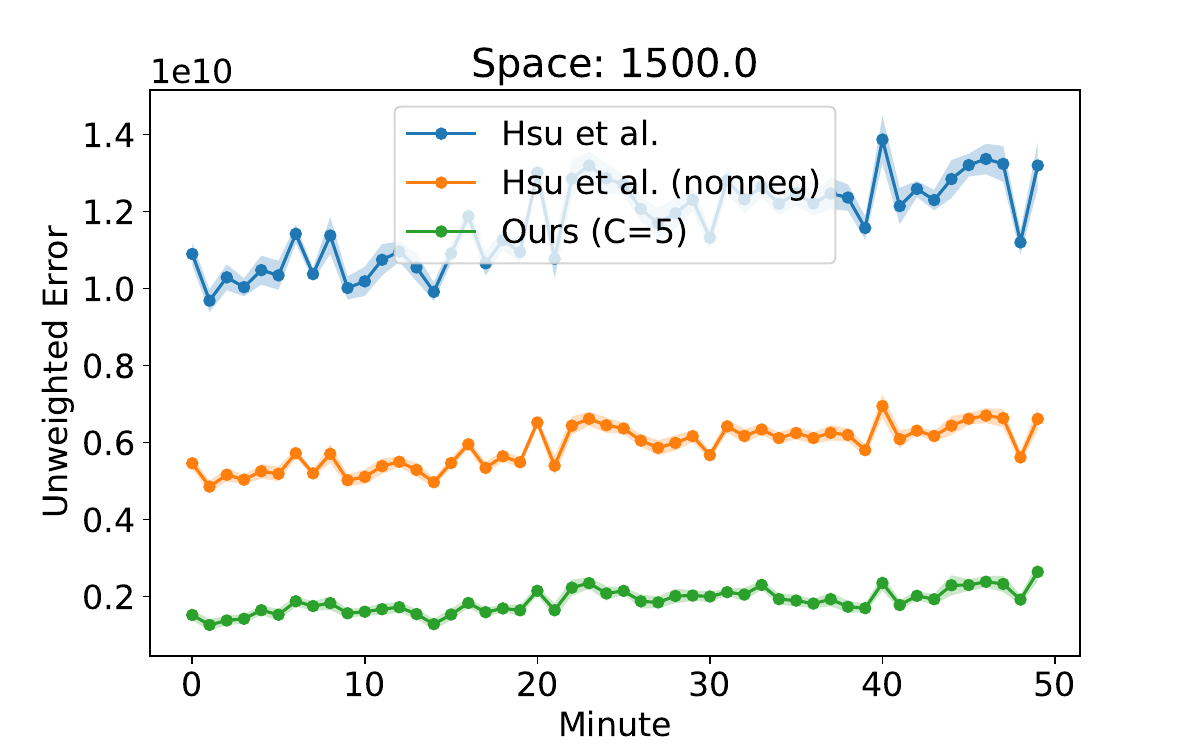}
    \includegraphics[width=0.49\textwidth]{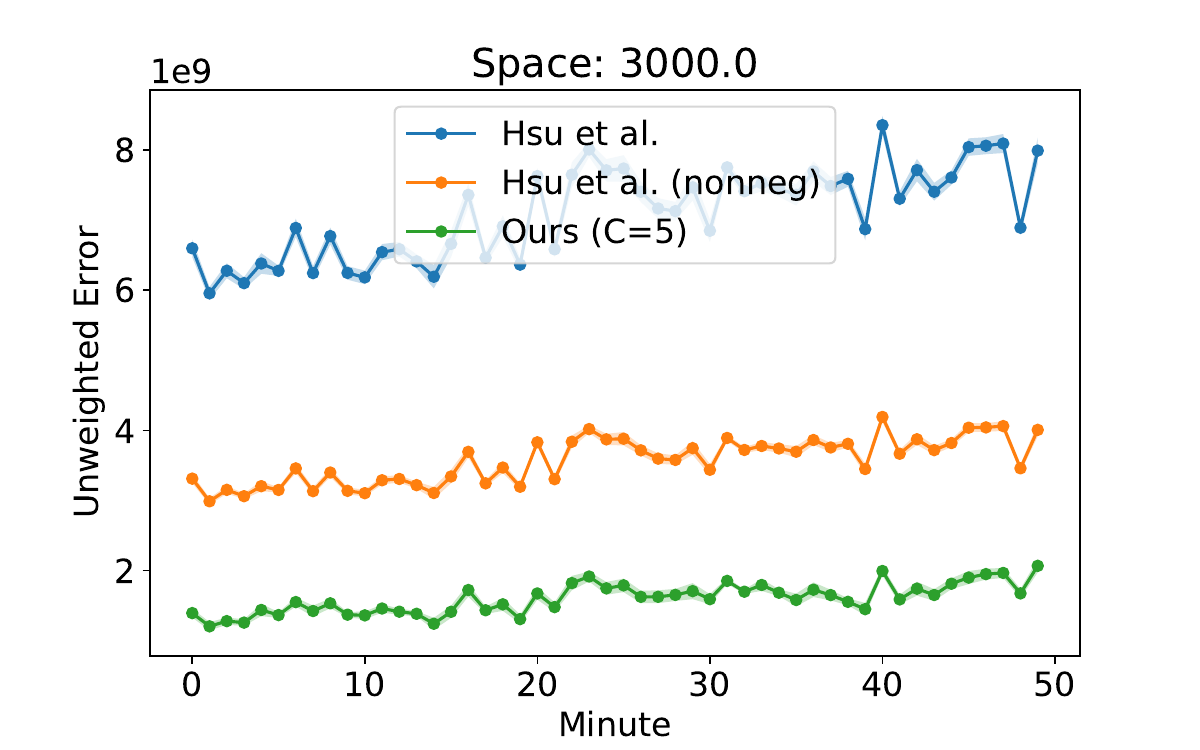}
    \caption{Comparison of unweighted errors with predictions on the CAIDA dataset}
\end{figure}

\begin{figure}
    \centering
    \includegraphics[width=0.49\textwidth]{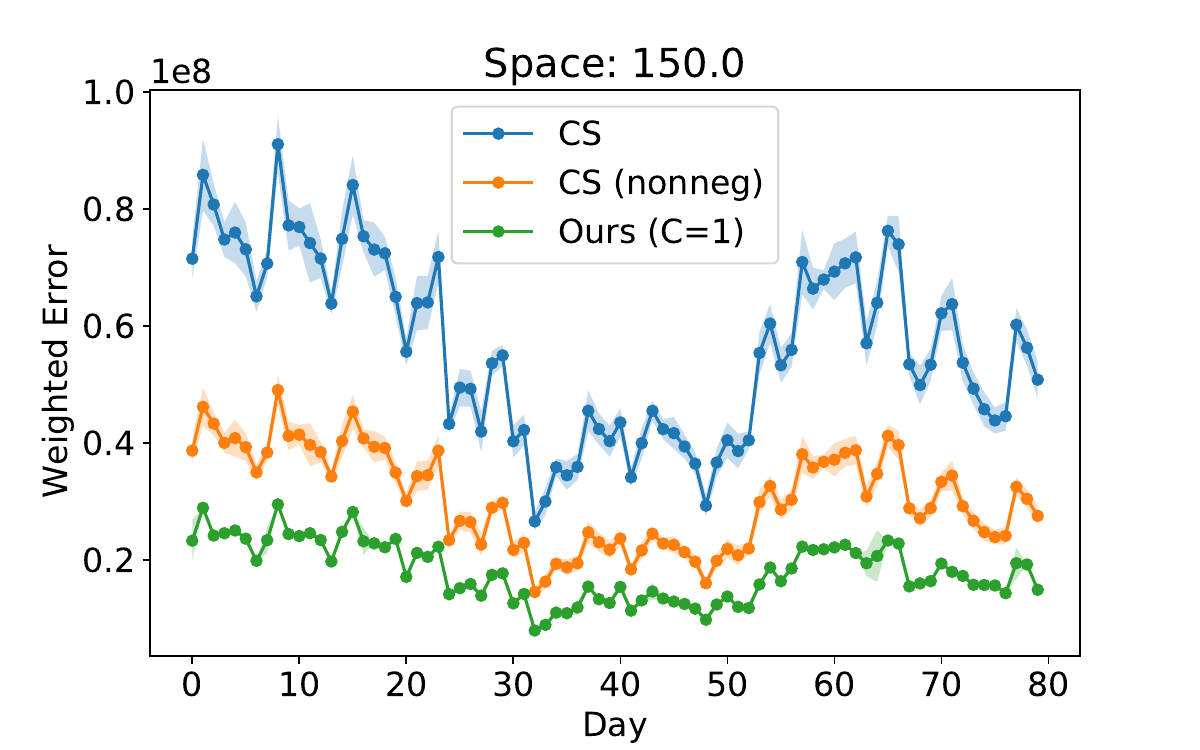}
    \includegraphics[width=0.49\textwidth]{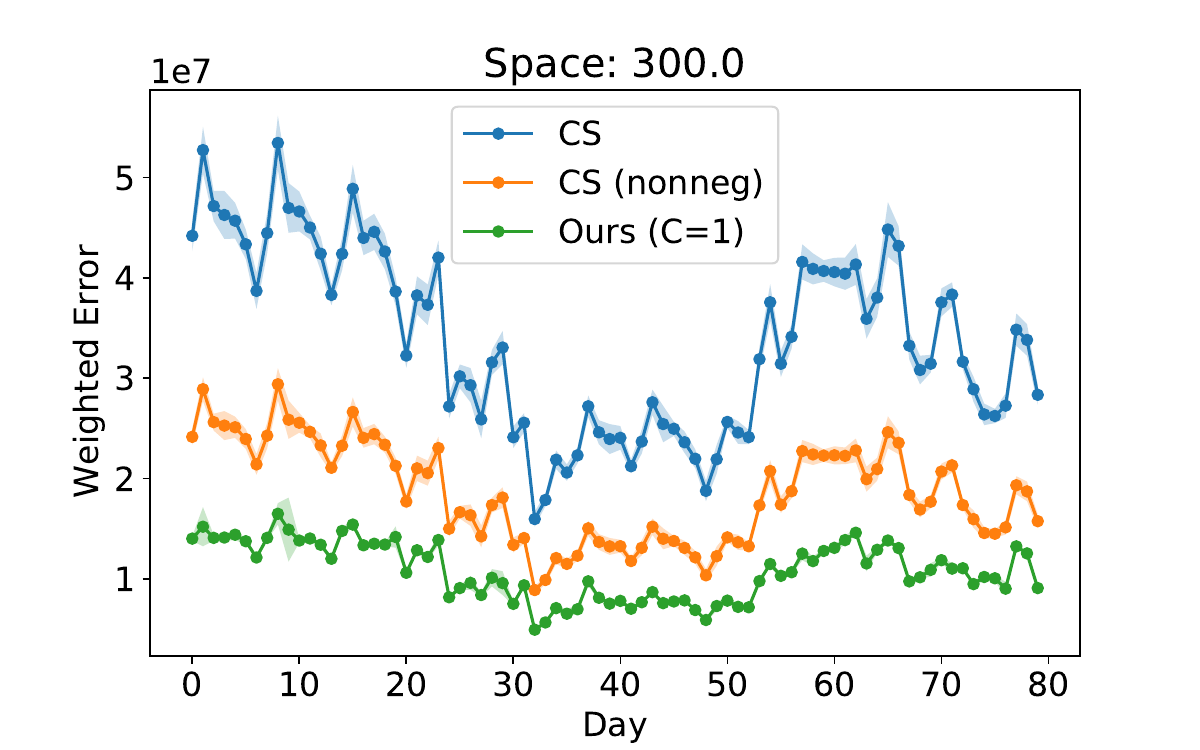}
    \includegraphics[width=0.49\textwidth]{plots/aol/werr-predFalse-space750.0.pdf}
    \includegraphics[width=0.49\textwidth]{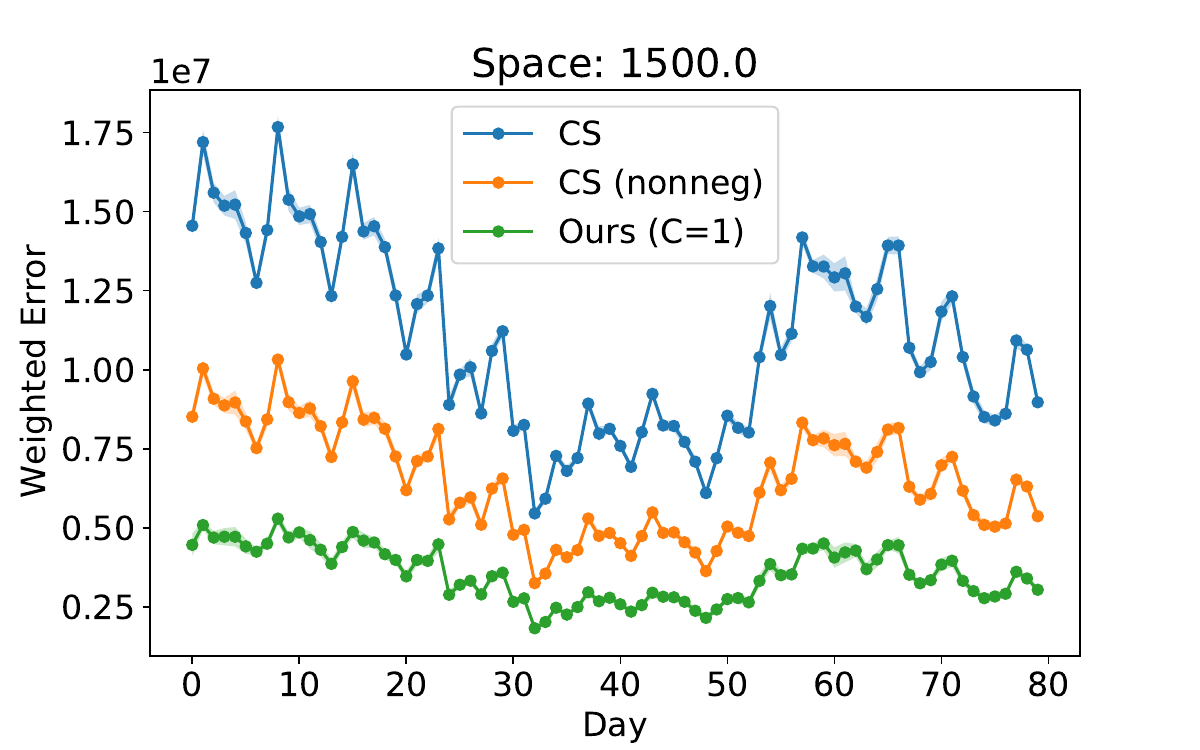}
    \includegraphics[width=0.49\textwidth]{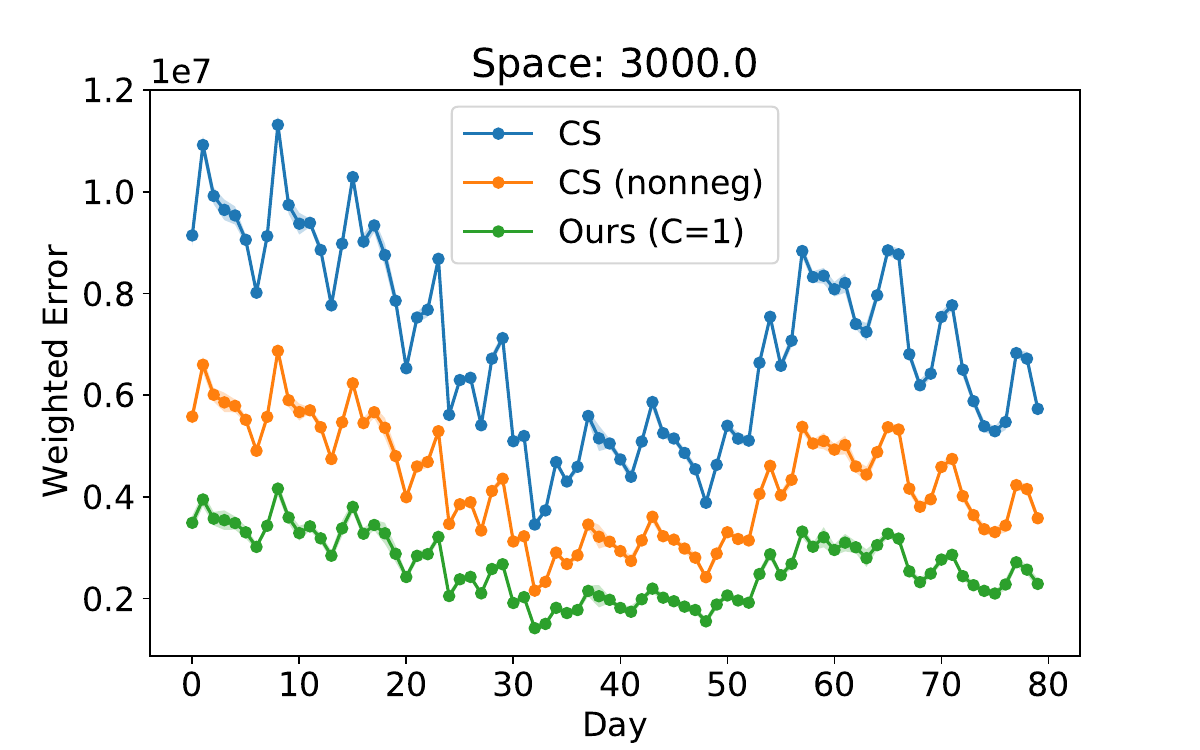}
    \caption{Comparison of weighted errors without predictions on the AOL dataset}
\end{figure}

\begin{figure}
    \centering
    \includegraphics[width=0.49\textwidth]{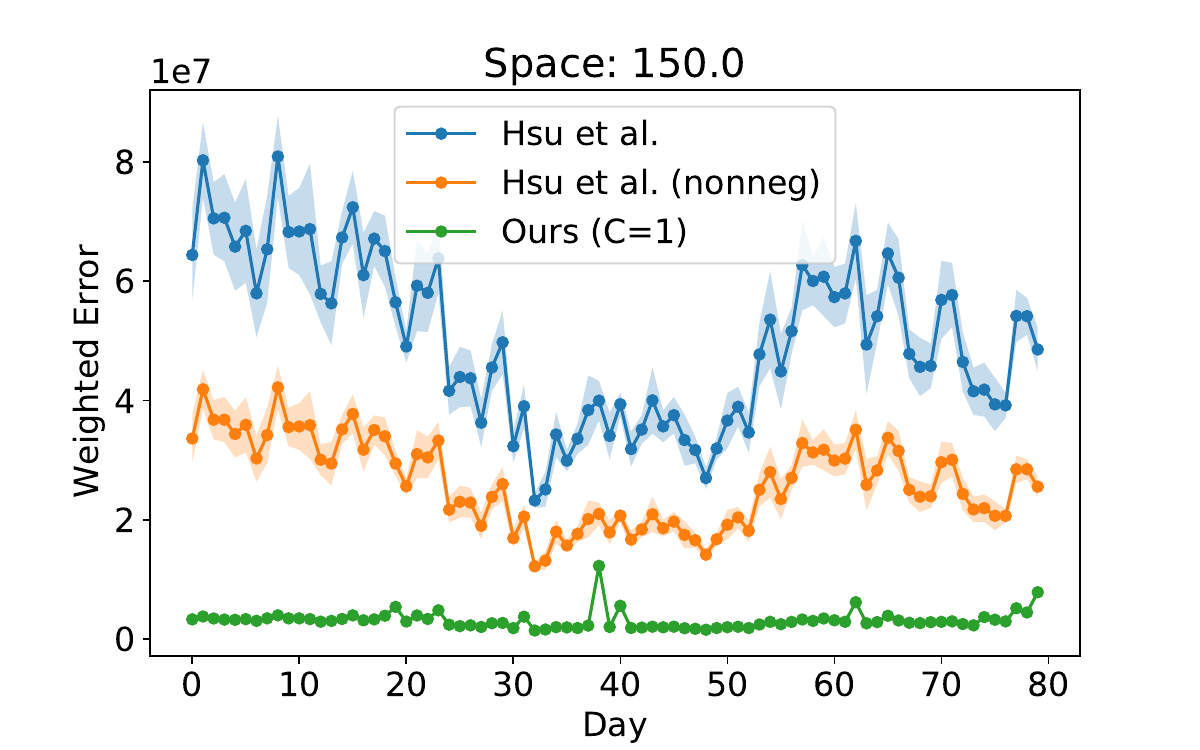}
    \includegraphics[width=0.49\textwidth]{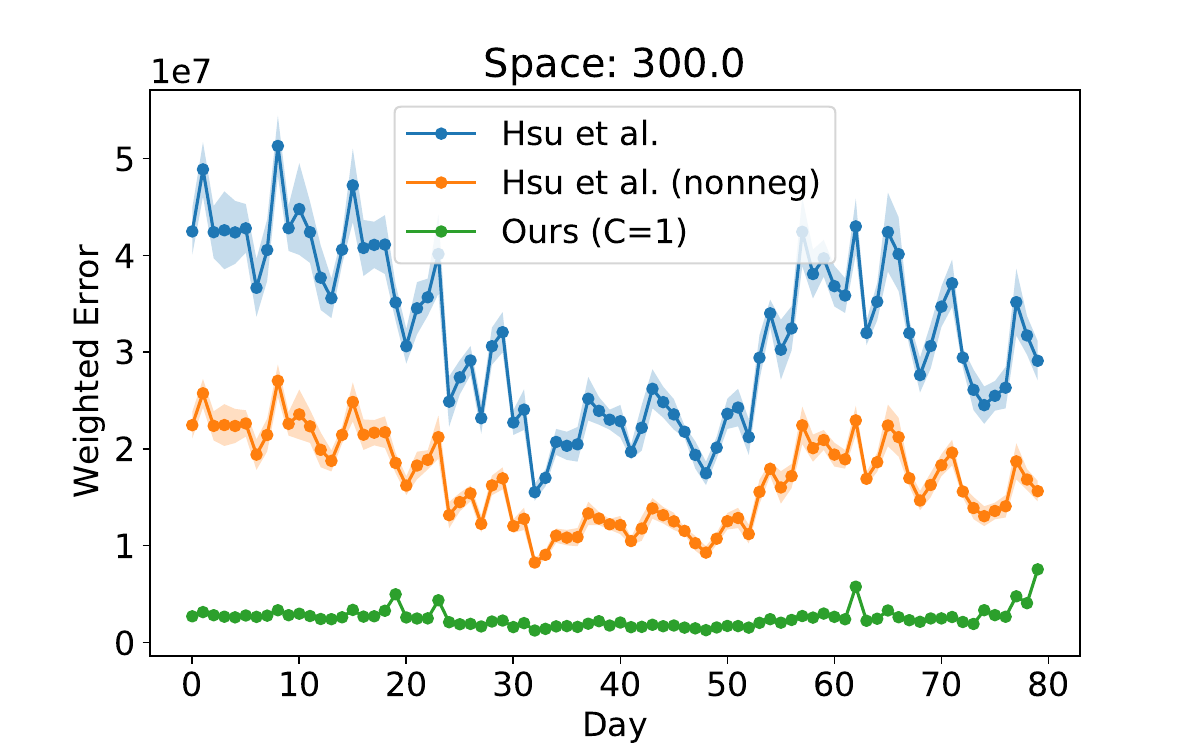}
    \includegraphics[width=0.49\textwidth]{plots/aol/werr-predTrue-space750.0.pdf}
    \includegraphics[width=0.49\textwidth]{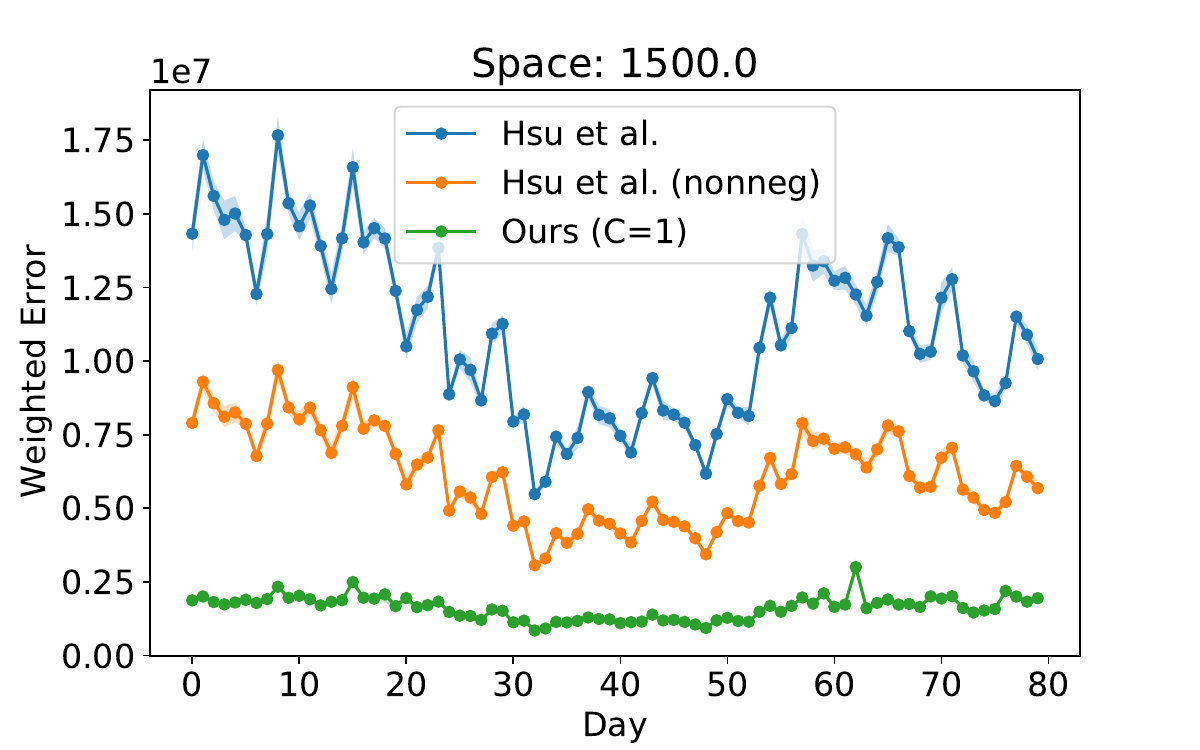}
    \includegraphics[width=0.49\textwidth]{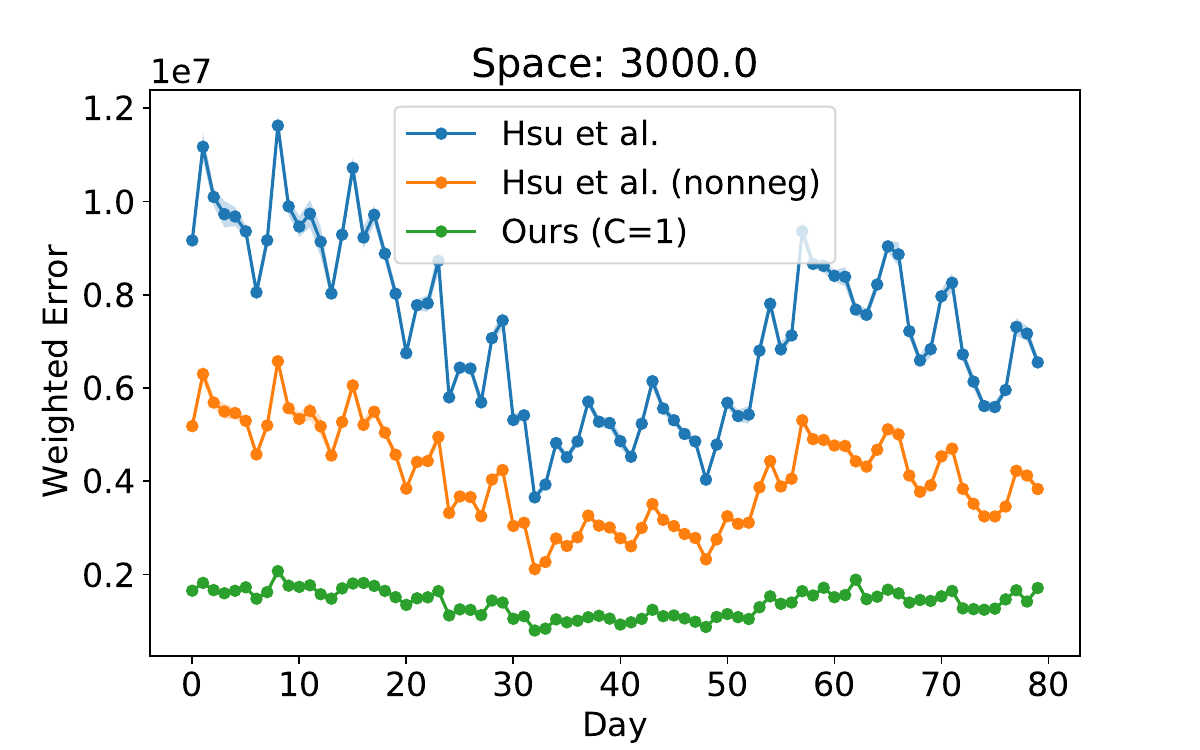}
    \caption{Comparison of weighted errors with predictions on the AOL dataset}
\end{figure}

\begin{figure}
    \centering
    \includegraphics[width=0.49\textwidth]{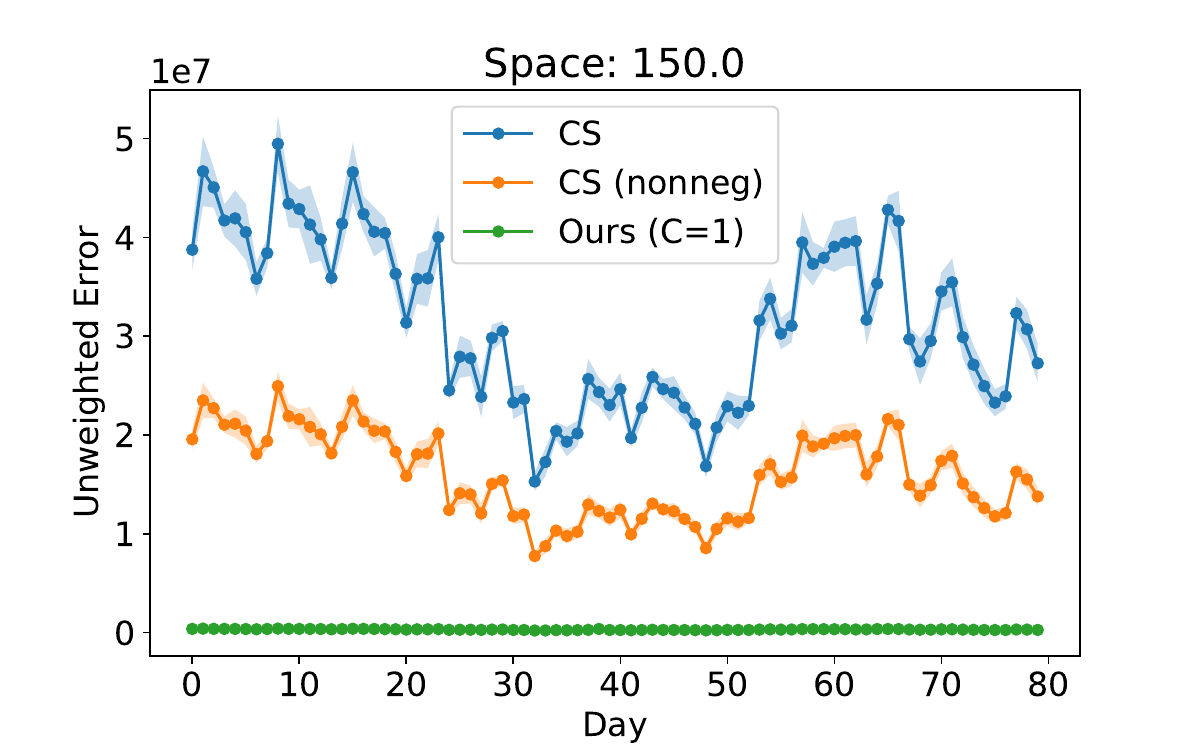}
    \includegraphics[width=0.49\textwidth]{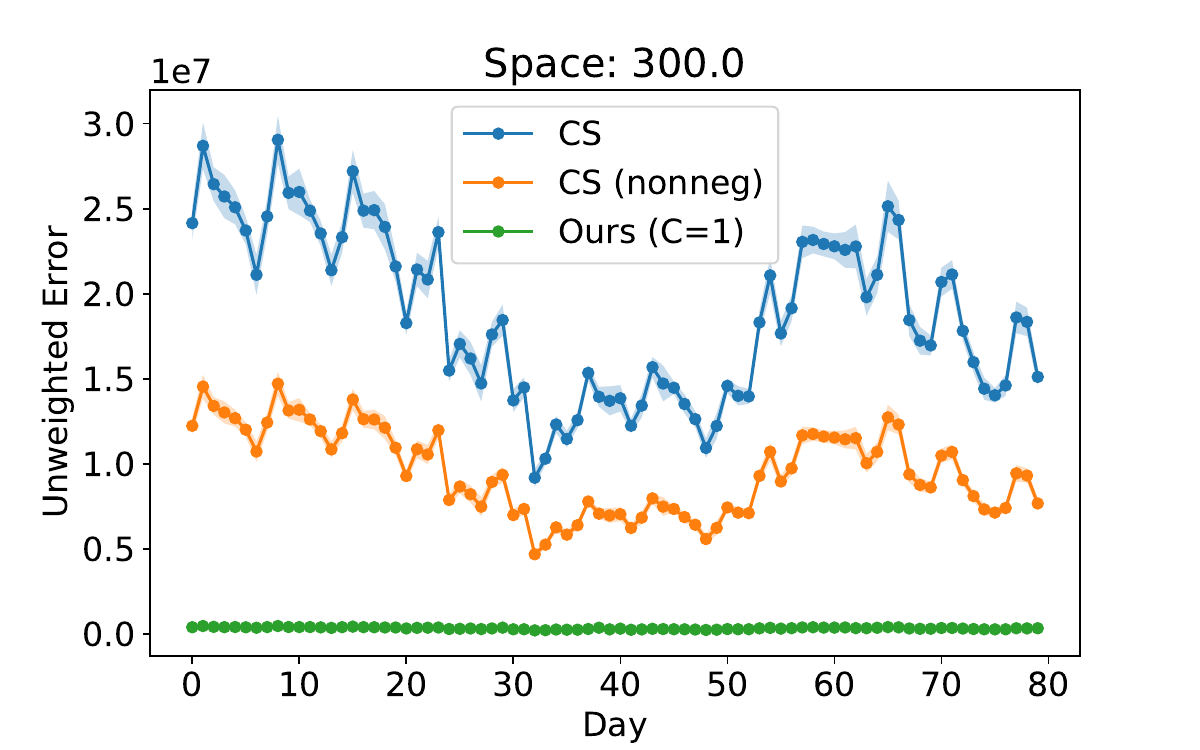}
    \includegraphics[width=0.49\textwidth]{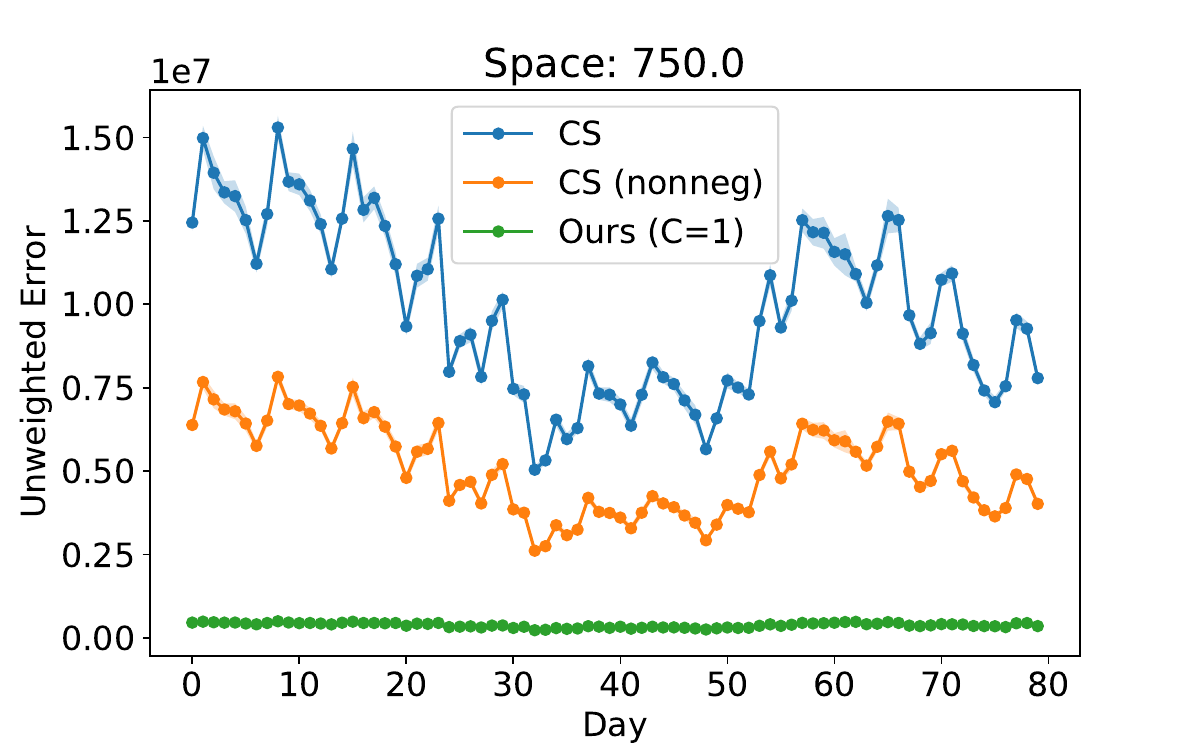}
    \includegraphics[width=0.49\textwidth]{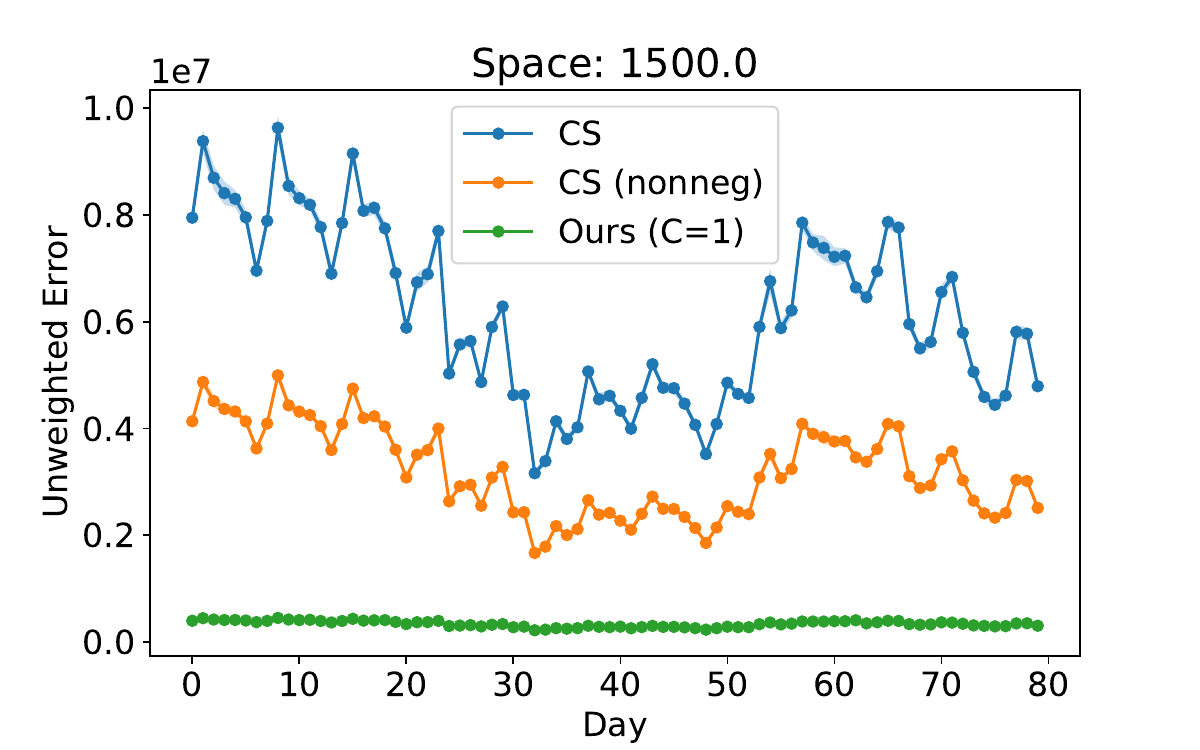}
    \includegraphics[width=0.49\textwidth]{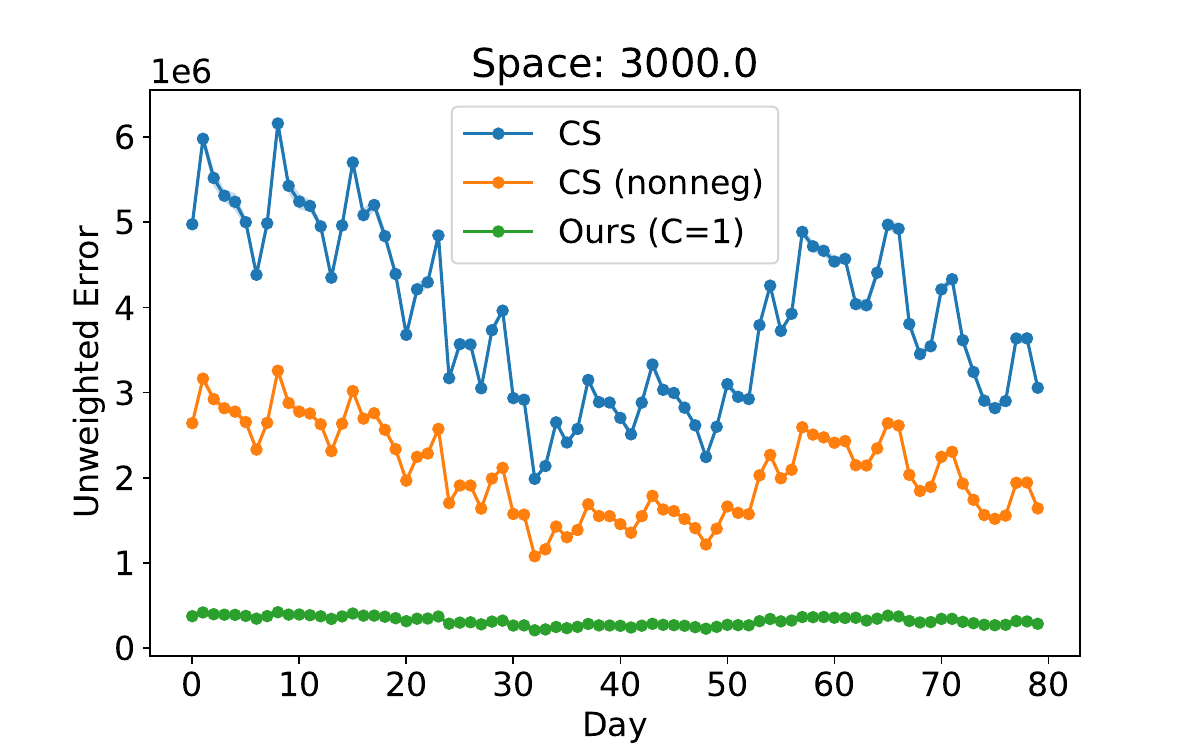}
    \caption{Comparison of unweighted errors without predictions on the AOL dataset}
\end{figure}

\begin{figure}
    \centering
    \includegraphics[width=0.49\textwidth]{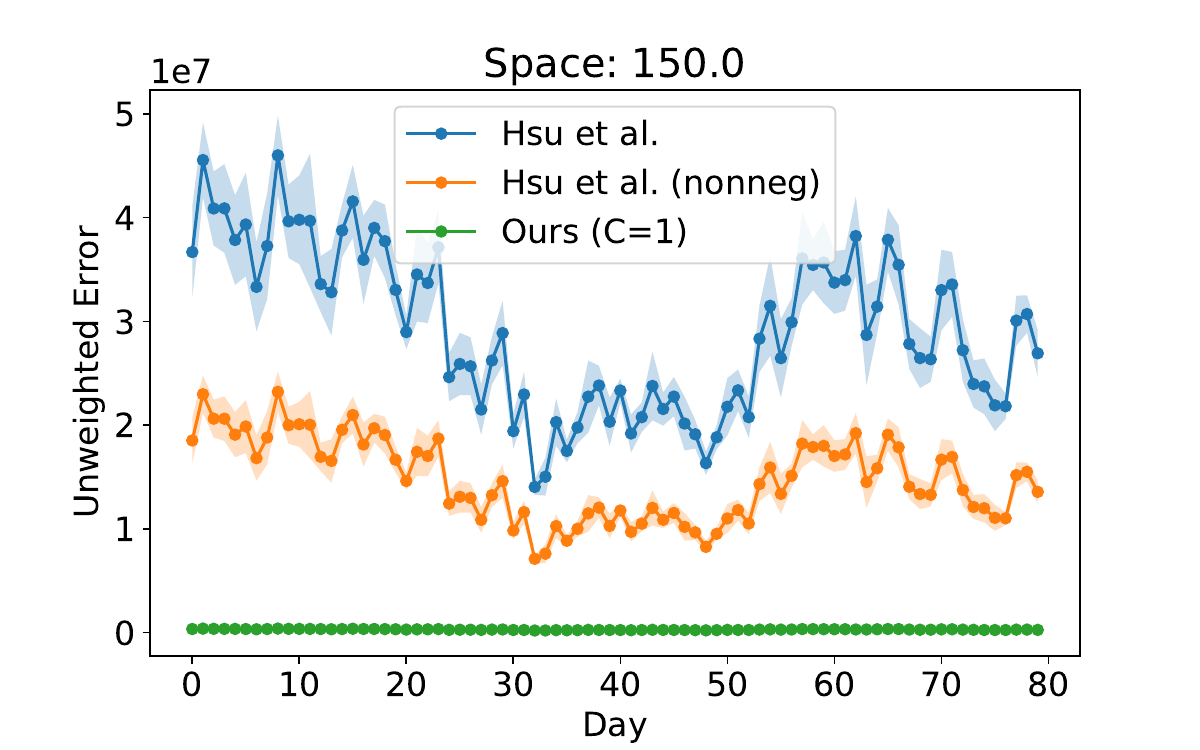}
    \includegraphics[width=0.49\textwidth]{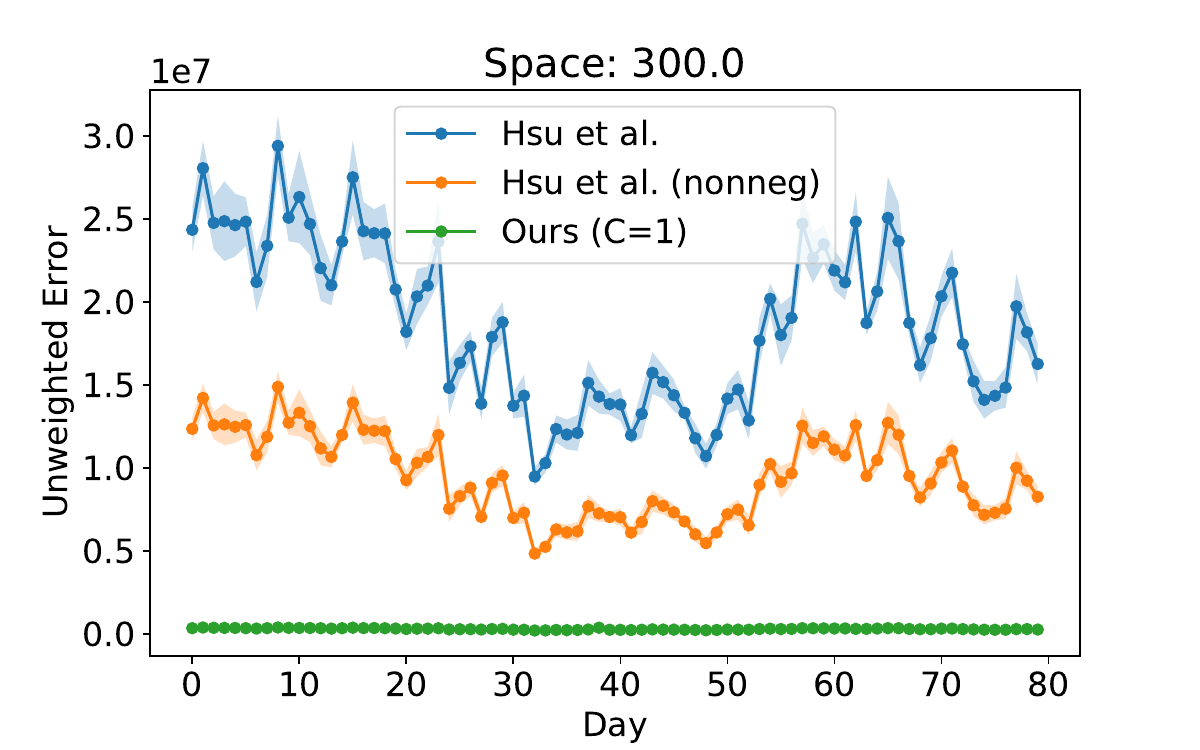}
    \includegraphics[width=0.49\textwidth]{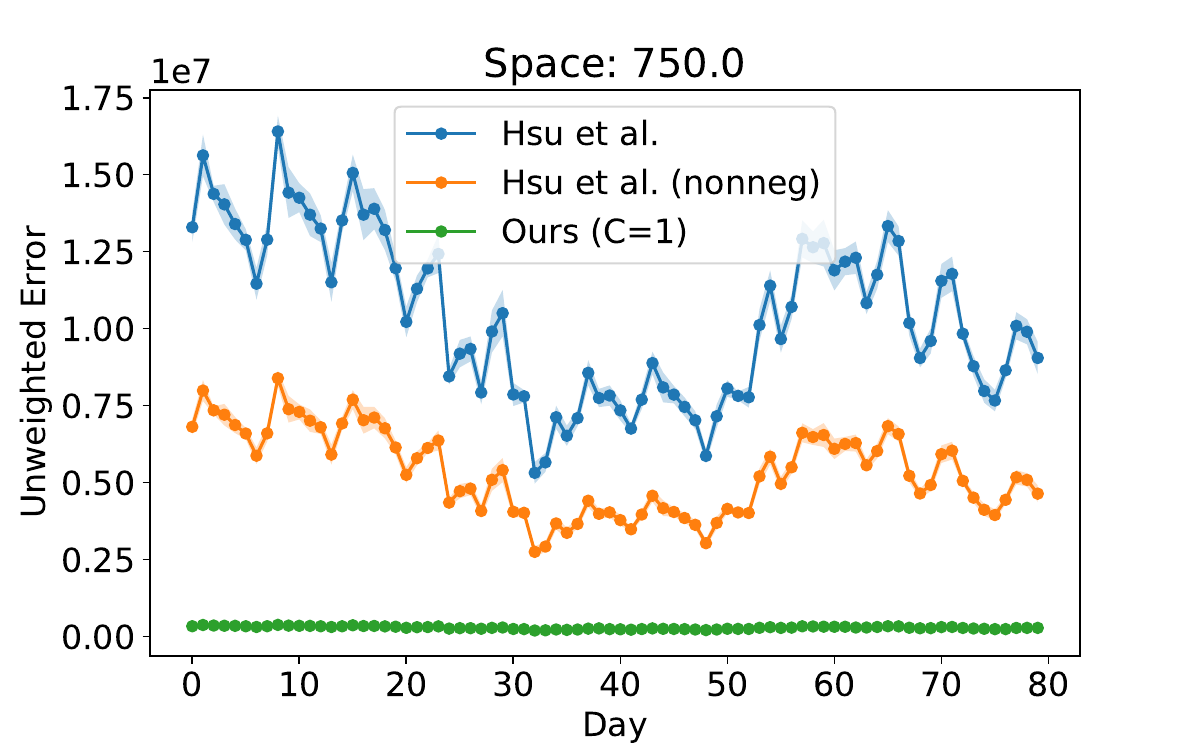}
    \includegraphics[width=0.49\textwidth]{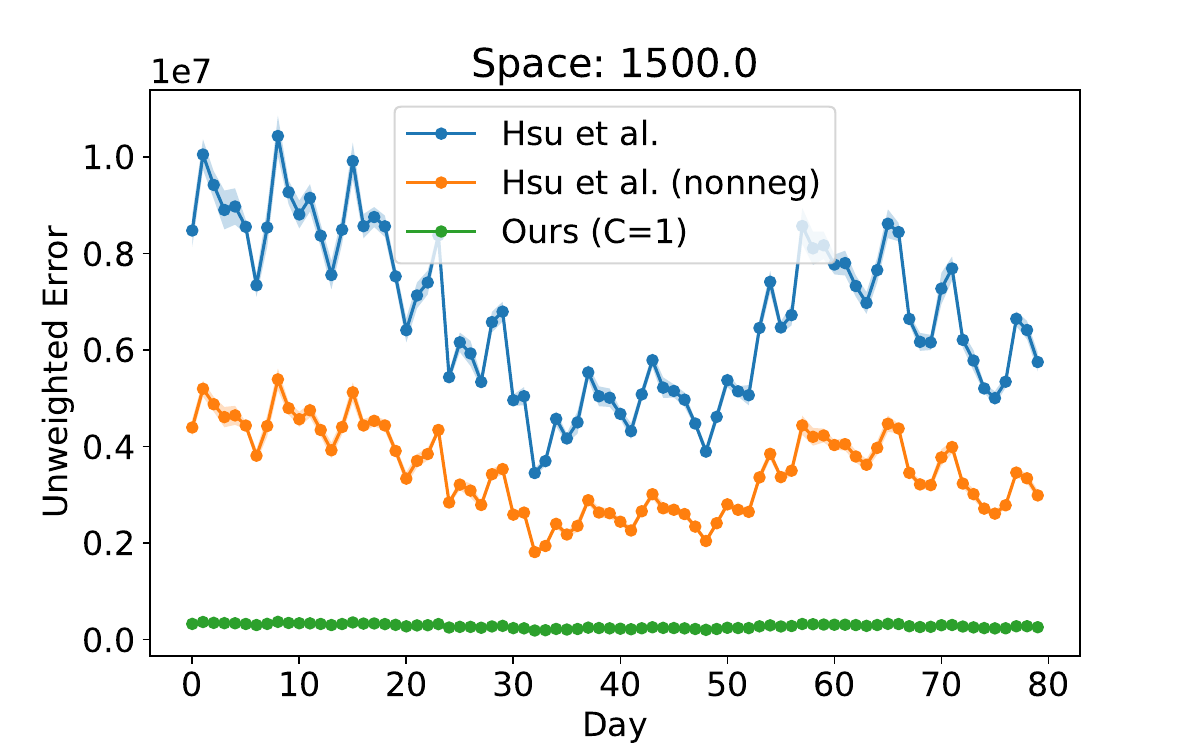}
    \includegraphics[width=0.49\textwidth]{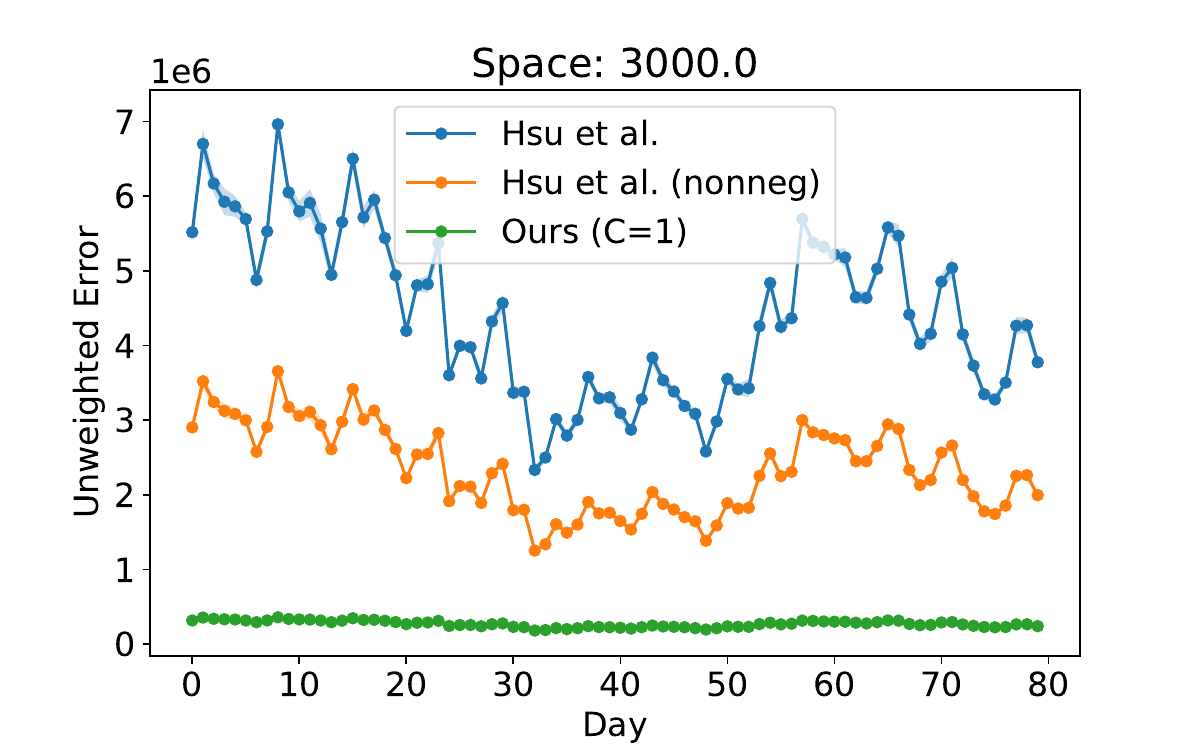}
    \caption{Comparison of unweighted errors with predictions on the AOL dataset}
\end{figure}

\end{document}